\tikzset{circle split part fill/.style  args={#1,#2}{%
		alias=tmp@name, 
		postaction={%
			insert path={
				\pgfextra{%
					\pgfpointdiff{\pgfpointanchor{\pgf@node@name}{center}}%
					{\pgfpointanchor{\pgf@node@name}{east}}%
					\pgfmathsetmacro\insiderad{\pgf@x}
					\fill[#1] (\pgf@node@name.base) ([xshift=-\pgflinewidth]\pgf@node@name.east) arc
					(0:180:\insiderad-\pgflinewidth)--cycle;
					\fill[#2] (\pgf@node@name.base) ([xshift=\pgflinewidth]\pgf@node@name.west)  arc
					(180:360:\insiderad-\pgflinewidth)--cycle;            
}}}}}  
\newtheorem{theorem}{Theorem}[section]
\newtheorem{remark}{Remark}
\def\bC{{\bf C}}
\def\bK{{\bf K}}
\def\bQ{{\bf Q}}
\def\tbC{\widetilde{{\bf C}}}
\def\PP{{\mathbb P}}
\def\RR{{\mathbb R}}
\def\J{{\mathscr J}}
\begin{document}


\title{Data-Driven Optimal Closures for Mean-Cluster Models: Beyond the Classical Pair Approximation}



\author{Avesta Ahmadi}
\affiliation{School of Computational Science \& Engineering, McMaster University, Hamilton, Ontario, Canada, L8S 4L8}

\author{Jamie M.~Foster}
\affiliation{School of Mathematics \& Physics, University of Portsmouth, Portsmouth, Hampshire, UK, PO1 2UP}

\author{Bartosz Protas}
\email{bprotas@mcmaster.ca}
\homepage{https://ms.mcmaster.ca/bprotas/}
\affiliation{Department of Mathematics \& Statistics, McMaster University, Hamilton, Ontario, Canada, L8S 4L8}

\date{\today}

\begin{abstract}
  This study concerns the mean-clustering approach to modelling the
  evolution of lattice dynamics. Instead of tracking the state of
  individual lattice sites, this approach describes the time evolution
  of the concentrations of different cluster types. It leads to an
  infinite hierarchy of ordinary differential equations which must be
  closed by truncation using a so-called closure condition. This
  condition approximates the concentrations of higher-order clusters
  in terms of the concentrations of lower-order ones. The pair
  approximation is the most common form of closure. Here, we consider
  its generalization, termed the ``optimal approximation'', which we
  calibrate using a robust data-driven strategy. To fix attention, we
  focus on a recently proposed structured lattice model for a
  nickel-based oxide, similar to that used as cathode material in
  modern commercial Li-ion batteries. The form of the obtained optimal
  approximation allows us to deduce a simple sparse closure model. In
  addition to being more accurate than the classical pair
  approximation, this ``sparse approximation'' is also physically
  interpretable which allows us to a posteriori refine the hypotheses
  underlying construction of this class of closure models. Moreover,
  the mean-cluster model closed with this sparse approximation is
  linear and hence analytically solvable such that its parametrization
  is straightforward. On the other hand, parametrization of the
  mean-cluster model closed with the pair approximation is shown to
  lead to an ill-posed inverse problem.
\end{abstract}

\keywords{Lattice dynamics; Mean-cluster models; Closure models; Pair approximation; Optimization}

\maketitle

\section{Introduction}
\label{sec:intro}

Evolution of particles on a structured lattice is typically described
by discrete lattice models rather than continuous space models.  These
models are usually not solvable exactly and have to be studied through
computer simulations. One approach to describing the evolution of
particles on a structured lattice is to keep track of all interacting
particles as is done in various Monte-Carlo techniques such as
simulated annealing. However, these methods are costly as they
determine the lattice structure which is unnecessary in many
applications. What is often sufficient is knowledge of
the type and the number of different clusters in the lattice, which
can then be used for model fitting purposes along with experimental
measurements such as, e.g., Nuclear Magnetic Resonance (NMR) data
\cite{Harris2017}.  Hence, as an alternative to Monte-Carlo methods,
one can develop a simplified description of particle interactions in
terms of evolving probabilities of particle clusters of different
types in the form of a dynamical system which is sufficient for many
applications.  These approaches are referred to as ``mean-field
clustering methods'' and find applications in many areas of science
and engineering. The Ising model, as a canonical application of
mean-field methods, is a model of ferromagnetism describing the
evolution of magnetic moments in a lattice. Both Monte-Carlo methods
\cite{hasenbusch2001monte} and mean-field methods
\cite{strecka2015brief} have been employed to study this problem.
Another example of the application of such models is the contact
process which is a stochastic process describing the growth of a
population on a structured or unstructured lattice. Cluster
approximations are used to find mean-field properties of such
systems. Population dynamics in ecology \cite{Matsuda1992,Harada1994}
is one example of such processes.  Another example is the disease
spread in epidemiology that has been widely studied on structured
networks
\cite{Silva2020,Keeling2011,Eames2002,Pedro2015,Sato1994,Keeling1997,Bauch2005}
and complex networks \cite{DeOliveira2019,Mata2013}.  Failure
propagation \cite{Lin2020} and emergence of marriage networks
\cite{Pei2017} are some other examples of contact processes.

The focus of the present study is on cluster-based modelling of
systems of interacting particles on two-dimensional (2D) structured
lattices. The specific application which motivates the present study
is related to prediction of the structure of materials used in
Lithium-ion batteries \cite{Harris2017}. Using a cluster approximation
method, one can construct a hierarchical dynamical system describing
the evolution of concentrations of different clusters in the lattice
during a real annealing process.  In other words, the evolution of
concentrations of clusters of size $n$ involves concentrations of
clusters of size ($n+1$).  To solve this system of equations one is
required to close it by prescribing the evolution of concentrations of
($n+1$) clusters, which in turn will be determined by probabilities of
clusters of a still higher order.  This process therefore gives rise
to an infinite hierarchy of equations which is exact but is
intractable both analytically and computationally.  Thus, one needs to
truncate and close this infinite hierarchy of equations. Various
moment closure approximations have been used for this
purpose. Ben-Avraham et al.~\cite{Ben-Avraham1992} proposed a class of
approximations for 1D lattices with extensions developed in higher
dimensions, namely, the mean-field and pair approximations. These
techniques take into account local interactions between
{neighbouring} elements only and completely neglect interactions
between non-nearest neighbours on a lattice.  Applications of
mean-field and pair approximation methods to various problems in
science and engineering can be found in
\cite{DeOliveira2019,Pedro2015,Sugden2007,Joo2004} and in
\cite{Silva2020,Keeling2011,DeOliveira2019,Pedro2015,Mata2014,Mata2013,Sugden2007,Joo2004},
respectively. Some extensions of the pair approximation technique are
also introduced in \cite{Filipe2003} where interactions between
different elements are considered to be generic functions of distance.
In the present study our goal is to develop and validate a general
data-driven methodology that will allow us to optimally close (in a
mathematically precise sense) the infinite hierarchy of equations. We
will refer to this approach as the ``optimal approximation''.  This
approach leads to a general simple and mathematically interpretable
closure model.

As an emerging application of lattice dynamics, Harris et
al.~\cite{Harris2017} used a simulated annealing approach to
investigate the crystalline structure of cathode materials used in
state-of-the-art Lithium-ion batteries. More precisely, they focused
on layers of NMC (Nickel-Manganese-Cobalt) used in most modern 
commercial Li-ion batteries. These cathodes are described by the chemical
formula Li(NMC)O$_2$, where 2D layers of Lithium, Oxygen and NMC are
stacked on top of each other.  The capacity enhancement observed in
such materials is attributed to changes in the local microscopic
structure of the cathode layers \cite{Seo2016,Shukla2015}, however,
important aspects of this structure are not yet completely understood.
Hence, further refinement of this battery technology requires more
information about the arrangement of elements inside these layers.  In
\cite{Harris2017} simulated annealing was used to generate statistical
information about arrangements of different species on the lattice in
the NMC layer of a cathode, which was very costly and did not scale up
to large lattice sizes. The model developed in the present study aims
to address this limitation. While the proposed approach is general and
can be applied to many lattice systems, to fix attention, we will
develop it here for the problem from \cite{Harris2017} as an
example. Other applications of approaches based on lattice dynamics in
physics and chemistry include organic synthesis reactions in the
fields of heterogeneous catalysis and materials engineering
\cite{Lisiecki2021}, adsorption models of binary mixtures
\cite{Sanchez-Varretti2021} and microstructure mapping of perovskite
materials \cite{Chen2021}.

In this work, we use the mean-clustering approach to build a
hierarchical system of equations for the evolution of concentrations
of different clusters inside a structured lattice of the NMC cathode
layer. We assume a triangular lattice compatible with the structure of
the NMC layer \cite{Harris2017}. This spatial structure is important
in detecting the rotational symmetries of the system. A dynamical
system is constructed to describe reactions between different species
which are limited to swaps between nearest-neighbour elements. The
underlying principle is that as the ``temperature'' decreases the
lattice converges to a certain equilibrium state through a series of
element swaps, controlled by specific rate constants. Our new approach
consists of two distinct steps: first, the truncated hierarchical
dynamical system is closed using an optimal approximation whose
parameters are inferred from simulated annealing data; it is
demonstrated that such an optimal closure is in fact both simpler and
more accurate than the nearest-neighbour approximation proposed in
\cite{Ben-Avraham1992}. Additionally, robustness of the predictive
performance of the obtained model is demonstrated based on problems
with different stoichiometries.  Second, the reaction rates
parameterizing the dynamical system with the three types of closure,
i.e., pair approximation, optimal approximation and sparse
approximation, are inferred from the simulated annealing data using a
Bayesian approach which also allows us to estimate the uncertainty of
these reconstructions; this will show that the model with the optimal
closure is also less prone to calibration uncertainty than the model
closed with the nearest-neighbour approximation.

The paper is organized as follows: further details about our model
problem are presented in Section \ref{sec:problem}; then, in Section
\ref{sec:cluster} we introduce a dynamical system governing the
evolution of the concentrations of different clusters and in Section
\ref{sec:closure} we describe the closures we consider which are the
pair approximation and the optimal closure; Bayesian approach for
estimation of the reaction rates is introduced in Section
\ref{sec:Bayes}; computational results are presented in Section
\ref{sec:results} together with a justification for a suitably
sparse approximation, whereas discussion and conclusions are
deferred to Section \ref{sec:final}. Some technical material is
collected in Appendix \ref{sec:sym}.

\section{Model Problem}
\label{sec:problem}

In this section we provide some details about a lattice evolution
problem that will serve as our test case. In \cite{Harris2017} Harris
et al.~used a simulated annealing method to identify an evolving
arrangement of particles on the lattice and keep track of their
interactions. One material similar to the materials actually
used in Li-ion batteries is Li[Li$_{1/3}$Mn$_{2/3}$]O$_2$, where 2D
sheets of an oxygen layer, transition metal layer and Lithium layer
are stacked on top of each other, as shown in Figure \ref{fig:LiNMCO2}. 
Transition metal layer consists of Manganese and Lithium.
\begin{figure}
	\centering
	\begin{subfigure}[b]{0.35\textwidth}
		\centering
		\includegraphics[width=\textwidth]{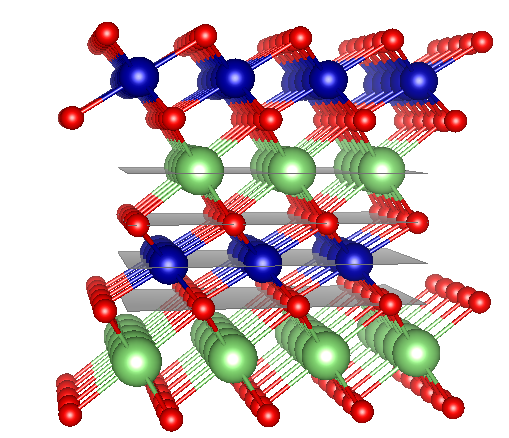}
		\caption{}
	\end{subfigure}\qquad
	\begin{subfigure}[b]{0.35\textwidth}
		\centering
		\includegraphics[width=\textwidth]{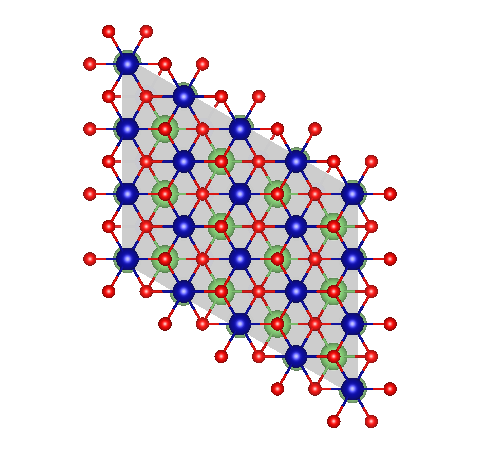}
		\caption{}
	\end{subfigure}
	\hfill
	\caption{The Li[Li$_{1/3}$Mn$_{2/3}$]O$_2$ lattice considered
          in \cite{Harris2017} and shown here in (a) a 3D view and (b)
          a 2D view. The red elements are oxygen atoms, green elements
          are lithium and the blue elements represent the transition
          metal layer elements which can be either lithium or
          manganese.}
	\label{fig:LiNMCO2}
\end{figure}

In the simulated annealing method, the energy of the system is
calculated by considering the local charge neutrality at oxygen sites.
Each oxygen element is surrounded by six nearest neighbours, cf.~Figure \ref{fig:LiNMCO2}.
The energy of each oxygen site is then determined by considering the
charge contributions of the neighbouring sites to its charge balance
with the goal of achieving neutrality.  The simulated annealing
approach attempts to find a 2D lattice configuration minimizing the
total energy of the system $E = \sum_i E_i$ corresponding to a
specific ``temperature'', where $E_i$ is the energy over each oxygen
site. This is a probabilistic approach to finding global optima in a
discrete space which mimics the annealing process applied to actual
materials. These materials are annealed at a high temperature,
followed by quenching to the desired temperature. Higher energy levels
of the system occur at higher temperatures and the evolution of the
system from higher temperatures to lower ones is controlled by a
probabilistic rule via the Boltzmann distribution.  This distribution
gives the probability that the system is in a certain state, given the
temperature and its energy level. In simulated annealing, state updates
occur through random swaps of the elements on the lattice
which destroys the ordering of elements and can lower the energy level
of the lattice.  The acceptance probability of an element swap is
\begin{equation}
\begin{aligned}
P = \begin{cases}
1, & \Delta E \le 0,\\
\exp\left( \frac{-\Delta E}{T} \right), & \Delta E > 0,
\end{cases}
\end{aligned}
\label{eq:boltzmann}
\end{equation}
where $T$ plays the role of the ``temperature''. Swaps leading to more
probable lower-energy states are always accepted, however, swaps
producing higher-energy states may also be accepted with some
probability in order to prevent rapid quenching of the system which
might result in convergence to a local minimum. We note that the
  concept of ``temperature'' used here is not equivalent to the
  thermodynamic temperature of the system. The choice of this
  pseudo-temperature, which controls the annealing protocol of the
  system, requires knowledge of the change of the system energy
  resulting from an element swap. In other words, the magnitude of the
  exponent $\frac{-\Delta E}{T}$ determines the quenching rate of the
  annealing process such that $T$ needs to be suitably adjusted.  This
  effective parameter $T$ is therefore related to the term $k_B\theta$
  in the Boltzmann distribution, where $k_B$ is the Boltzmann constant
  and $\theta$ the thermodynamic temperature of the system. The
choice of how the temperature is decreased is in principle arbitrary,
however, the equilibrium state must be reached at the end of the
annealing process for every arbitrarily chosen temperature profile.
The details of this approach can be found in \cite{Harris2017}.

In the crystal structure of the annealed metal layer of
Li[Li$_{1/3}$Mn$_{2/3}$]O$_2$ each triangle consists of two Mn
elements and one Li element. In this structure, the energy $E_i$ over
each oxygen site becomes zero and the total energy of the system will
be zero accordingly, as shown in Figure \ref{fig:lattice}b.  In
the simulated annealing study of this structure the temperature was
reduced in a stepwise manner, cf.~Figure \ref{fig:SA}a, and enough
time was allowed for the structure to stabilize at an equilibrium at
each intermediate temperature, cf.~Figure \ref{fig:SA}b.  The results
obtained for the system with Li$_{1/3}$Mn$_{2/3}$ are shown in the
form of the final lattice structure in Figure \ref{fig:lattice}.
Annealing experiments with the same protocol were also performed for
systems with different ratios of Li and Mn in Li$_{x}$Mn$_{1-x}$ where
$x \in \{0.25,0.30,0.33,0.36,0.42,0.50,0.58,0.64,0.70,0.75\}$, but
these results are not shown here for brevity. Our goal is to build a
model that will accurately predict the evolution of concentrations of
different particle clusters present in the lattice without having to
solve the entire annealing problem. We note that the elements Mn and
Li have charges, respectively, of $(+4)$ and $(+1)$. In
  simulated annealing these element charges are used to calculate the
  energy changes $\Delta E$ caused by element swaps and to determine
  the evolution of the lattice. However, the cluster approximation
  model, cf.~Section \ref{sec:cluster}, makes no assumptions about the
  charges of the elements and hence for simplicity the symbols $(+)$
  and $(-)$ will hereafter represent the elements Mn and Li,
  respectively.  The concentrations $\widetilde{C}_i$, $i \in \{
(++), (+-), (--)\}$ of $2$-clusters as functions of time (or
temperature), cf.~Figure \ref{fig:SA}c, will be used as data to
construct the optimal closure approximation and to infer the reaction
rates in the model. However, while these concentrations will be
provided for a single stoichiometry only, the resulting model will be
shown to remain accurate for a broad range of stoichiometries. The
lattice evolution in this method does not have a natural time scale
and for concreteness we will assume that the unit of time is set by an
individual iteration of the simulated annealing experiment. Notably,
in this model all concentrations are independent of location on the
lattice due to spatial homogeneity.
\begin{figure}
  \centering
  \begin{subfigure}[b]{0.32\textwidth}
    \centering
    \includegraphics[width=\textwidth]{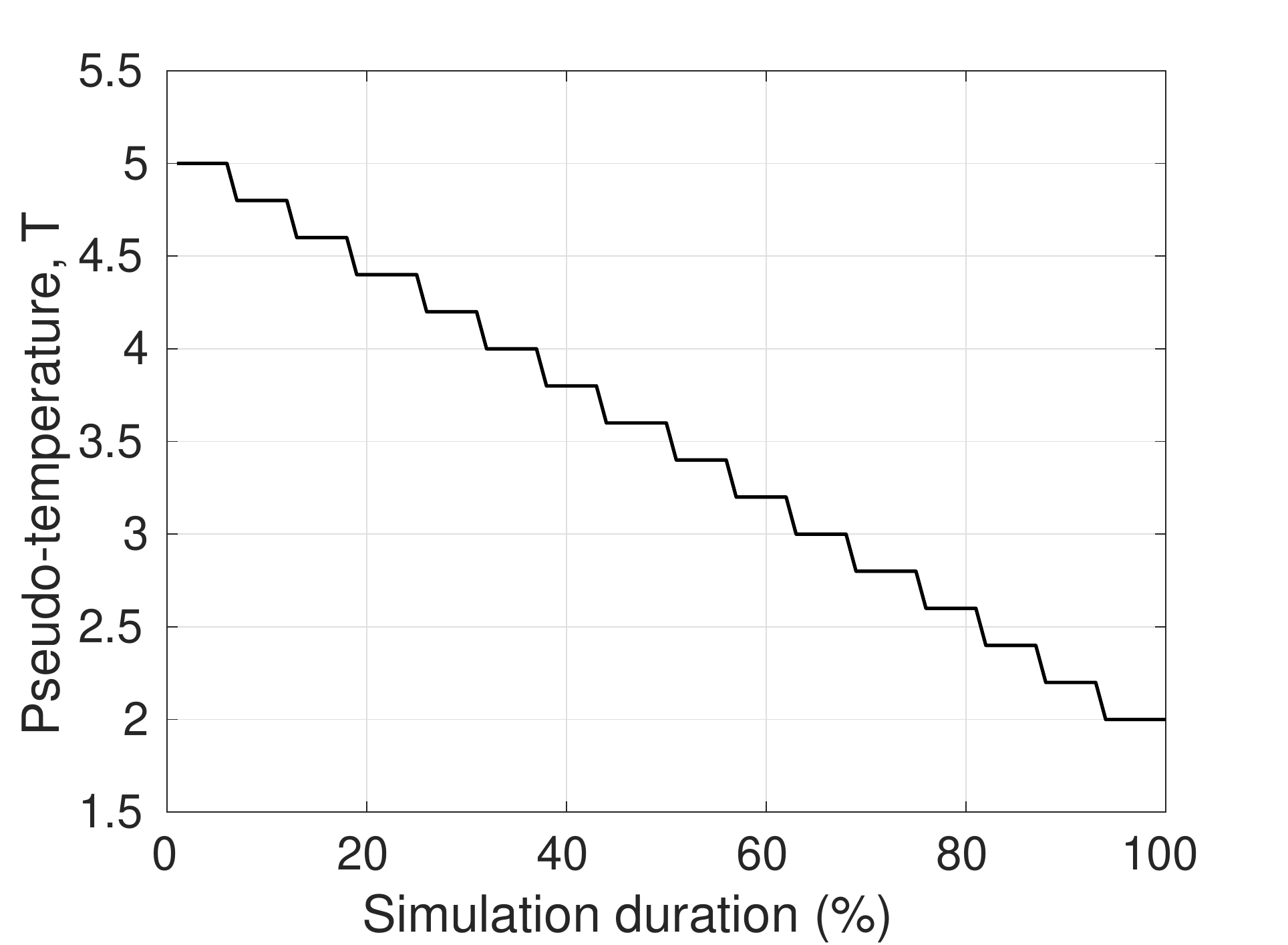}
    \caption{}
  \end{subfigure}
  \hfill
  \begin{subfigure}[b]{0.32\textwidth}
    \centering
    \includegraphics[width=\textwidth]{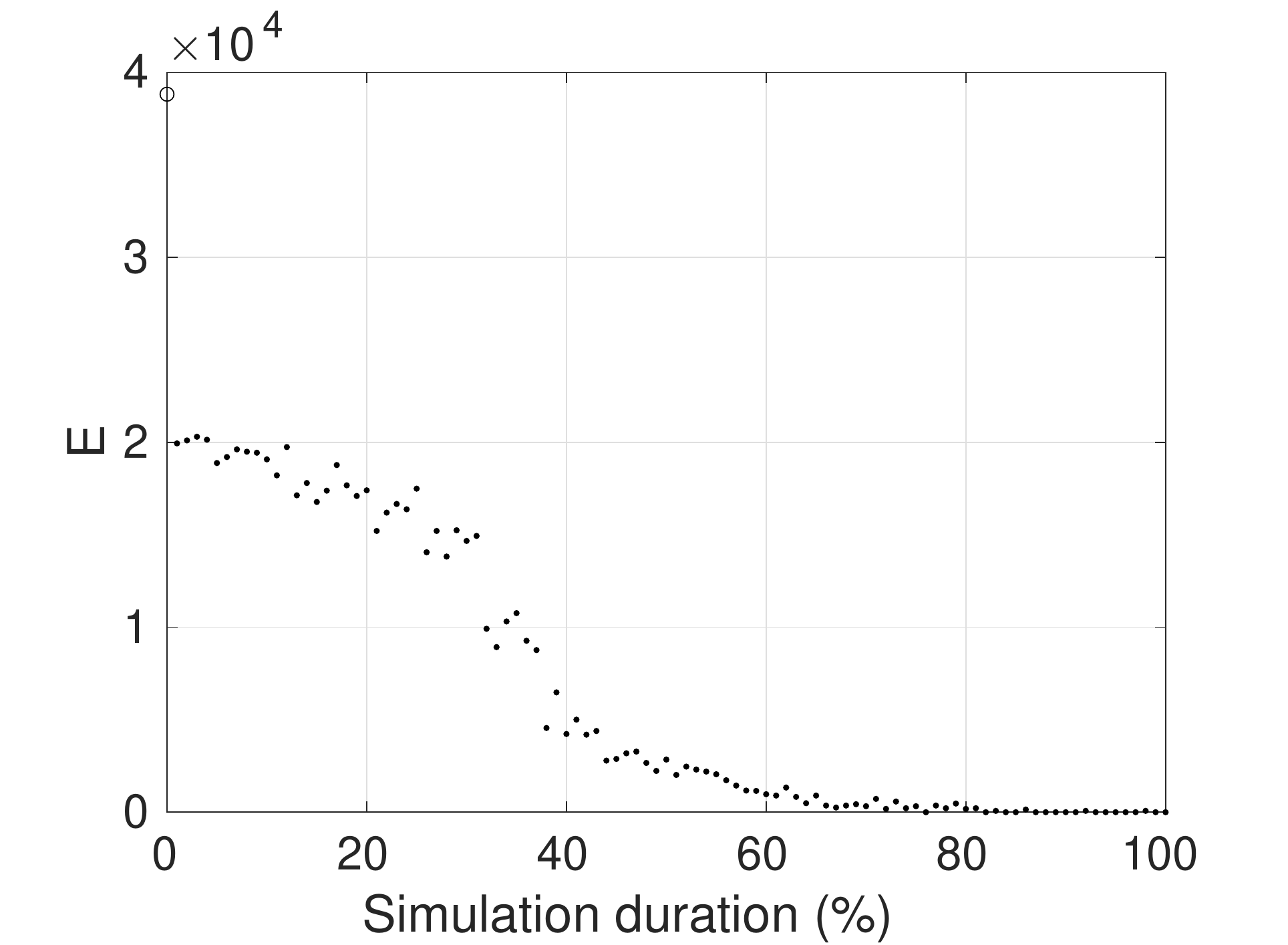}
    \caption{}
  \end{subfigure}
  \begin{subfigure}[b]{0.32\textwidth}
    \centering
    \includegraphics[width=\textwidth]{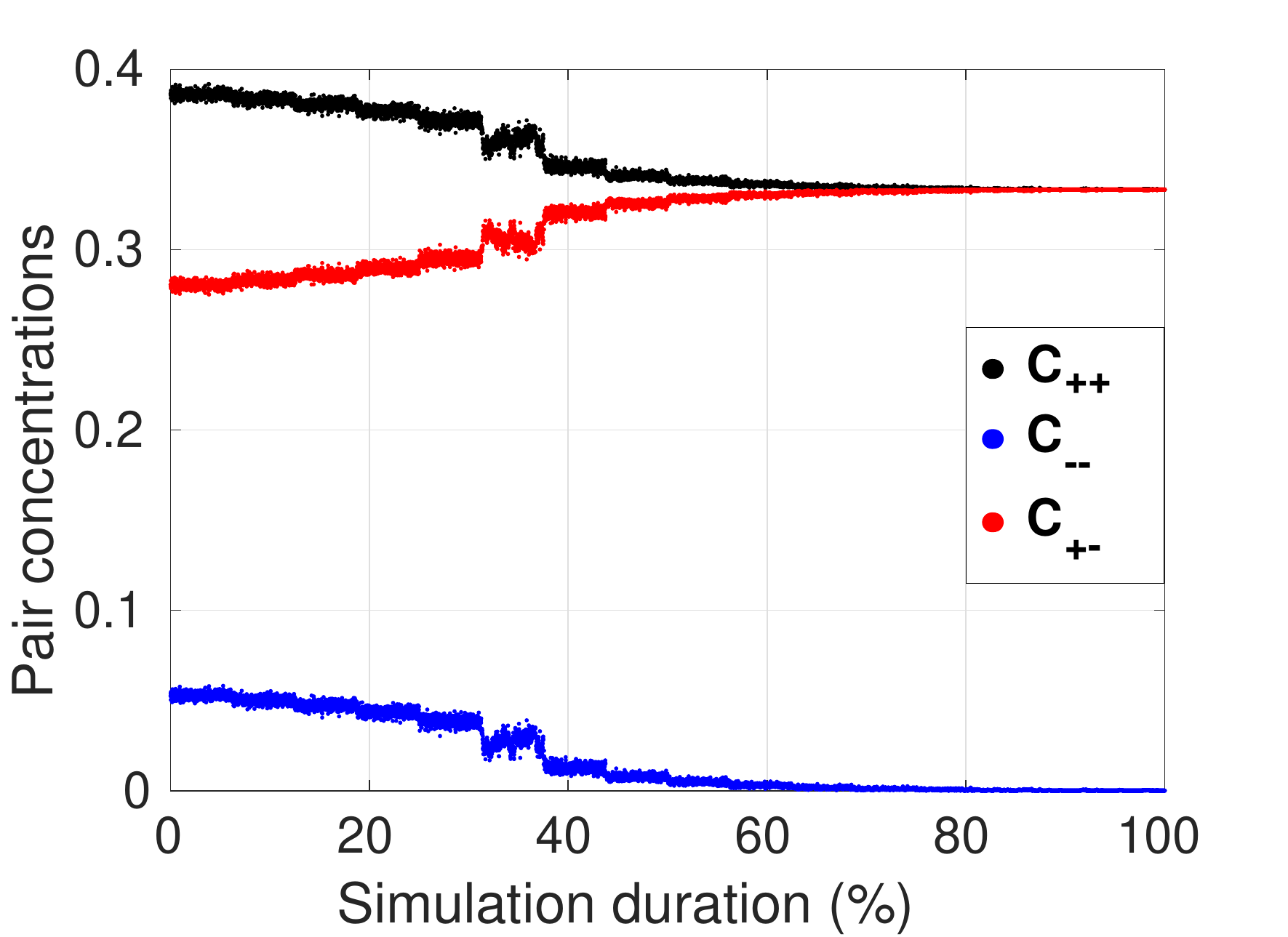}
    \caption{}
  \end{subfigure}
  \hfill
  \caption{Dependence of (a) the ``temperature'' of the system, $T$, (b)
    its lattice energy $E$ and (c) concentrations $\widetilde{C}_i$,
    $i \in \{ (++), (+-), (--)\}$, of different 2-clusters on time
    expressed as a fraction of the entire annealing experiment for the
    Li$_{1/3}$Mn$_{2/3}$ system.}
  \label{fig:SA}
\end{figure}
\begin{figure}
  \centering
  \begin{subfigure}[b]{0.45\textwidth}
    \centering
    \includegraphics[width=\textwidth]{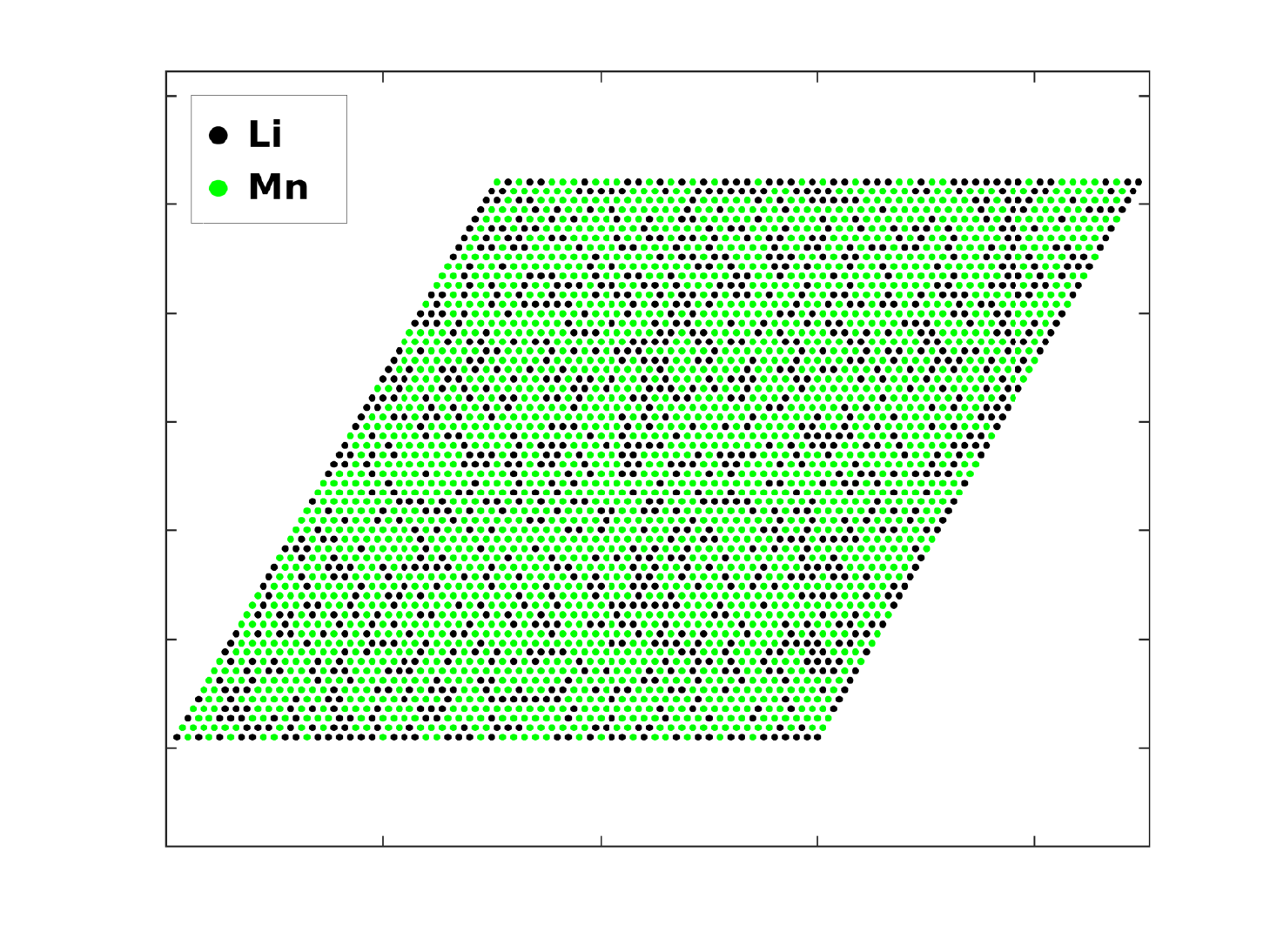}
    \caption{}
  \end{subfigure}
  \begin{subfigure}[b]{0.45\textwidth}
    \centering
    \includegraphics[width=\textwidth]{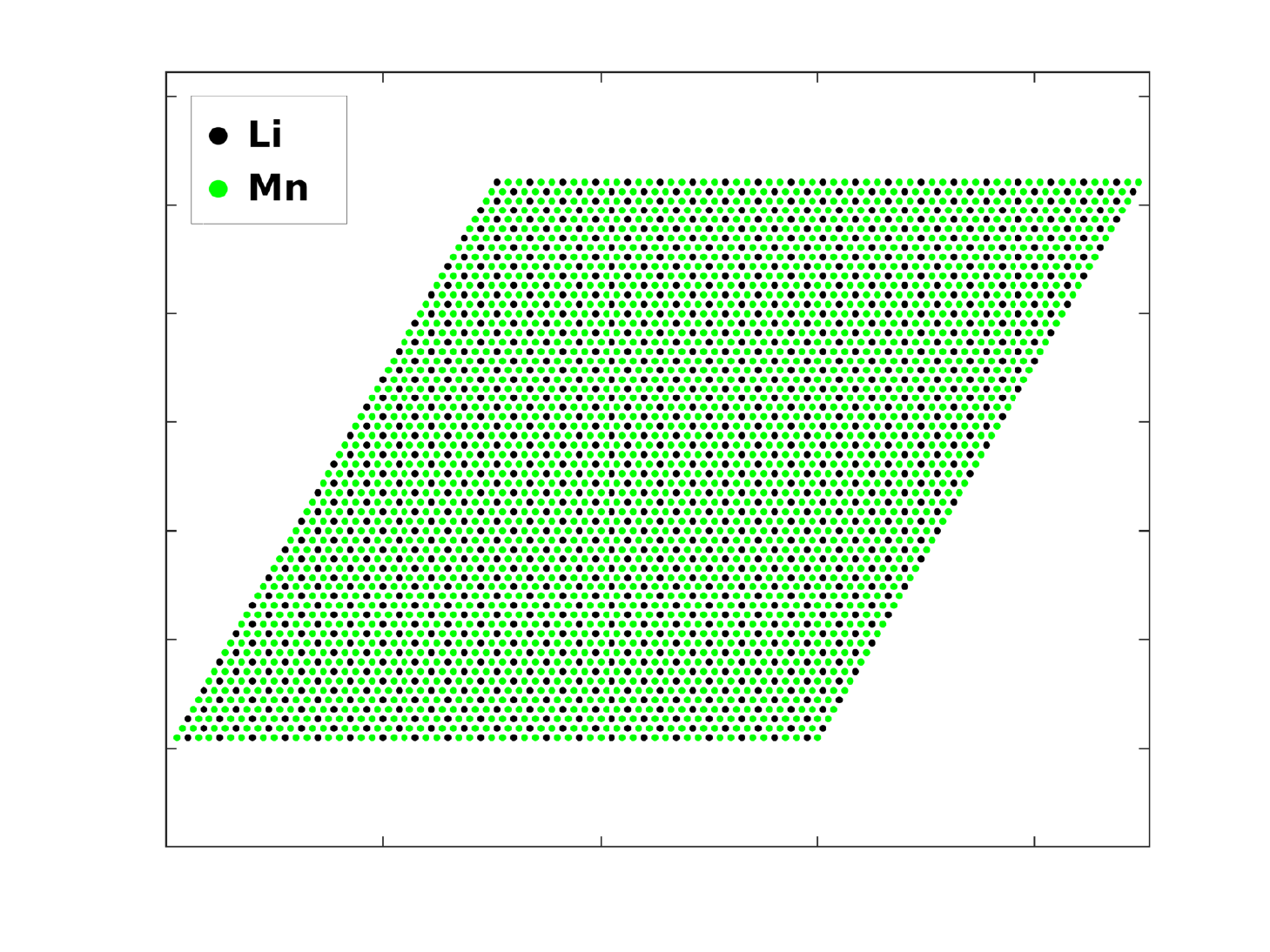}
    \caption{}
  \end{subfigure}
  \hfill
  \caption{(a) Initial random state and (b) the final ordered state of
    the lattice for the Li$_{1/3}$Mn$_{2/3}$ system obtained via
    simulated annealing \cite{Harris2017}. Black and green dots
    represent Li ions (more generally, negative elements) and Mn ions
    (more generally, positive elements), respectively.}
  \label{fig:lattice}
\end{figure}

\section{Cluster Approximation}
\label{sec:cluster}

In this section we develop a system of evolution equations for
concentrations of clusters in a two-element system with elements
denoted $(+)$ (or $\fplus$) and $(-)$ (or $\fminus$). We note that
these notations need not correspond to the charge of the elements. In
this study, a cluster of size $n$ is referred to as a $n$-cluster and
elements inside the cluster form a closed or an open chain. The
concentration of a cluster is defined as the probability of finding
that particular cluster among all clusters of the same shape but with
different compositions. As an example, the concentration of the
$3$-cluster shown in Figure \ref{fig:d2} is denoted $C_{ijk}$, where
$i,j,k \in \{+,-\}$.

\begin{figure}
  \centering
  \begin{tikzpicture} [->,>=stealth',auto,thick,main node/.style={circle,draw,font=\sffamily\Large\bfseries},scale=0.8]
    \coordinate (Origin)   at (0,0);
    \coordinate (XAxisMin) at (-5,0);
    \coordinate (XAxisMax) at (5,0);
    \coordinate (YAxisMin) at (0,-5);
    \coordinate (YAxisMax) at (0,5);
    \clip (-5.1,-5.1) circle (1.9cm); 
    \begin{scope}
      \pgftransformcm{1}{0}{1/2}{sqrt(3)/2}{\pgfpoint{0cm}{0cm}}
      \draw[style=help lines,dashed,draw=black] (-12,-12) grid[ystep=0cm,xstep=1cm] (12,12);
    \end{scope}
    
    \pgftransformcm{1}{0}{-1/2}{sqrt(3)/2}{\pgfpoint{0cm}{0cm}} 
    \draw[style=help lines,dashed] (-12,-12) grid[step=1cm] (12,12);
    \foreach \x in {-6,-5.5,...,1}{
      \foreach \y in {-6,-5.5,...,1}{
        \node[draw = black, draw opacity = 0.6, circle,inner sep=1.3pt,fill=black, fill opacity=0.6] at (2*\x,2*\y) {};
      }
    }
    
    \node(1)[draw = black,circle,inner sep=2pt,fill=white] at (-9,-6) {i};
    \node(2)[draw = black,circle,inner sep=1.3pt,fill=white] at (-8,-6) {j};
    \node(3)[draw = black,circle,inner sep=1.4pt,fill=white] at (-7,-6) {k};
    \draw [thick,-,black] (1) -- (2) node {};
    \draw [thick,-,black] (2) -- (3) node {};
  \end{tikzpicture}
  \caption{An example of a linear chain $3$-cluster on a 2D lattice.}
  \label{fig:d2}
\end{figure}
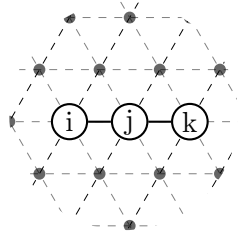

\begin{remark}	
  The normalization condition requires that the sum of the
  concentrations of all possible $n$-clusters with the
  same geometry must be equal to one \cite{Ben-Avraham1992}:
  \begin{equation}
    \begin{aligned}
      \sum_{S_1, S_2,\dots, S_n}^{} C_{S_1S_2\cdots S_n} = 1,
    \end{aligned}
    \label{eq:C=1}
  \end{equation}
  where the indices $1,2,3,\dots,n$ enumerate different sites within a
  cluster with two consecutive ones corresponding to nearest neighbours
  and $S_i \in \{+,-\}$ denotes the state of that specific site.
  Applying this to $1$-clusters and $2$-clusters in our model, the
  following equations are derived from the normalization condition:
  \begin{subequations}
    \begin{align}
      \begin{split}
        \normalfont{C_{+}} + \normalfont{C_{-}} = 1
        \label{eq:CCa}
      \end{split}\\
      \begin{split}
        \normalfont{C_{++}} +
        \normalfont{C_{--}} + 
        \normalfont{C_{+-}} +
        \normalfont{C_{-+}} = 1 \\
        \Rightarrow 
        \normalfont{C_{++}} +
        \normalfont{C_{--}} + 
        2 \normalfont{C_{+-}} = 1.
        \label{eq:CCb}
      \end{split}
    \end{align}
    \label{eq:CC}
  \end{subequations}
  The concentrations of the $(+-)$ and $(-+)$ clusters are the same
  due to the rotational symmetry of the system, as stated in Theorem
  \ref{thm:1} in the Appendix.
\end{remark}
The aim is to deduce a dynamical system describing the evolution of
the probabilities of $2$-clusters. There are three different types of
$2$-clusters found on the lattice, namely, $\fplus\fplus$,
$\fminus\fminus$ and $\fplus\fminus$.

\subsection{Production and Destruction of $2$-Clusters} 
\label{sec:dc}

The rate of change of the concentration of specific clusters is
determined by the rate at which they are produced and destroyed.
Production or destruction of a certain cluster occurs through swaps
among nearest-neighbour elements on the lattice. Each swap of
nearest-neighbour elements is called here a reaction. The rate
equations can then be derived using the window method
\cite{Ben-Avraham1992}. In this approach we consider all possible
reactions that change the composition of a particular 2-cluster in a
certain window containing this cluster, via a swap between one of the
elements inside the window and one of its nearest-neighbour elements
outside the window. For example, in order to derive the rate equation
for the $(\fplus\fplus)$ cluster, in Figure \ref{fig:++} we show all
possible reactions that will produce or destroy this cluster via
nearest-neighbour element swaps.  In each of the reactions, the
neighbour element (highlighted in red) will swap with one of the
elements of the window (highlighted in blue) to produce a
($\fplus\fplus$) cluster in the forward reaction.  Conversely, reverse
reactions destroy the ($\fplus\fplus$) cluster and produce a
{($\fplus\fminus$) cluster}.  The rotational symmetry of the
lattice allows us to reduce the number of possible reactions to those
shown in Figure \ref{fig:++}.  Moreover, reactions taking place inside
a triangular-shaped $3$-cluster do not change the total count of
2-clusters inside the triangle and are therefore disregarded. Each
reaction has a unique rate constant denoted $k_1,k_2,\dots$. The rate
constants have the units of $1/\text{sec}$ and control the evolution
of different clusters participating in a reaction. We note that in
deriving the rate equations each reaction is accounted for in
proportion to the number of its rotational symmetries on the lattice.

\begin{figure}
  \centering
  \begin{adjustbox}{minipage=\linewidth,scale=0.8}
    \begin{subfigure}[b]{0.48\textwidth}
      \scalebox{0.65}{
        \begin{tikzpicture}[->,thick,every node/.style={circle,draw=blue,font=\sffamily\Large\bfseries}]
          \draw[help lines] (0,0);
          \begin{scope}[node distance=0.1]
            \node (a)[inner sep=1pt] at (0,0) {+};
            \node (b)[inner sep=3.3pt] at (1,0) {--};
            \node (c)[draw = red,inner sep=1pt] at (1.5,+0.7) {+};
            \node [inner sep=7pt, fill=gray!20, draw=gray] at (0.5,-0.7) {};
            \node [inner sep=7pt, fill=gray!20, draw=gray] at (0.5,+0.7) {};
            \node [inner sep=7pt, fill=gray!20, draw=gray] at (-0.5,+0.7) {};
            \node [inner sep=7pt, fill=gray!20, draw=gray] at (-1,0) {};
            \node [inner sep=7pt, fill=gray!20, draw=gray] at (2,0) {};
            \node [inner sep=7pt, fill=gray!20, draw=gray] at (1.5,-0.7) {};
            \node [inner sep=7pt, fill=gray!20, draw=gray] at (-0.5,-0.7) {};
          \end{scope}
          \draw [<->] (b) to [ultra thick,out=350,in=330] (c);
          \node (1)[inner sep=2pt,draw=none] at (4.5,0.6) {$k_1$};
          \draw [-{Straight Barb[left]},black,line width=2pt] (3,0.1) to (6,0.1);
          \node (1)[inner sep=2pt,draw=none] at (4.5,-0.6) {$k_3$};
          \draw [-{Straight Barb[left]},black,line width=2pt] (6,-0.1) to (3,-0.1);
          \begin{scope}[node distance=0.1]
            \node (a)[inner sep=1pt] at (8,0) {+};
            \node (b)[inner sep=1pt] at (9,0) {+};
            \node (c)[draw = red,inner sep=3.3pt] at (9.5,+0.7) {--};
            \node [inner sep=7pt, fill=gray!20, draw=gray] at (7,0) {};
            \node [inner sep=7pt, fill=gray!20, draw=gray] at (10,0) {};
            \node [inner sep=7pt, fill=gray!20, draw=gray] at (7.5,+0.7) {};
            \node [inner sep=7pt, fill=gray!20, draw=gray] at (8.5,-0.7) {};
            \node [inner sep=7pt, fill=gray!20, draw=gray] at (8.5,+0.7) {};
            \node [inner sep=7pt, fill=gray!20, draw=gray] at (7.5,-0.7) {};
            \node [inner sep=7pt, fill=gray!20, draw=gray] at (9.5,-0.7) {};
          \end{scope}
        \end{tikzpicture}
      }
    \end{subfigure}\quad
    \begin{subfigure}[b]{0.48\textwidth}
      \scalebox{0.65}{
        \begin{tikzpicture}[->,thick,every node/.style={circle,draw=blue,font=\sffamily\Large\bfseries}]
          \draw[help lines] (0,0);
          \begin{scope}[node distance=0.1]
            \node (a)[inner sep=1pt] at (0,0) {+};
            \node (b)[inner sep=3.3pt] at (1,0) {--};
            \node (c)[draw = red,inner sep=1pt] at (2,0) {+};
            \node [inner sep=7pt, fill=gray!20, draw=gray] at (0.5,-0.7) {};
            \node [inner sep=7pt, fill=gray!20, draw=gray] at (0.5,+0.7) {};
            \node [inner sep=7pt, fill=gray!20, draw=gray] at (-0.5,+0.7) {};
            \node [inner sep=7pt, fill=gray!20, draw=gray] at (-1,0) {};
            \node [inner sep=7pt, fill=gray!20, draw=gray] at (1.5,-0.7) {};
            \node [inner sep=7pt, fill=gray!20, draw=gray] at (1.5,+0.7) {};
            \node [inner sep=7pt, fill=gray!20, draw=gray] at (-0.5,-0.7) {};
          \end{scope}
          \draw [<->] (b) to [ultra thick,out=80,in=100] (c);
          \node (1)[inner sep=2pt,draw=none] at (4.5,0.6) {$k_2$};
          \draw [-{Straight Barb[left]},black,line width=2pt] (3,0.1) to (6,0.1);
          \node (1)[inner sep=2pt,draw=none] at (4.5,-0.6) {$k_4$};
          \draw [-{Straight Barb[left]},black,line width=2pt] (6,-0.1) to (3,-0.1);
          \begin{scope}[node distance=0.1]
            \node (a)[inner sep=1pt] at (8,0) {+};
            \node (b)[inner sep=1pt] at (9,0) {+};
            \node (c)[draw = red,inner sep=3.3pt] at (10,0) {--};
            \node [inner sep=7pt, fill=gray!20, draw=gray] at (7,0) {};
            \node [inner sep=7pt, fill=gray!20, draw=gray] at (9.5,-0.7) {};
            \node [inner sep=7pt, fill=gray!20, draw=gray] at (7.5,+0.7) {};
            \node [inner sep=7pt, fill=gray!20, draw=gray] at (8.5,-0.7) {};
            \node [inner sep=7pt, fill=gray!20, draw=gray] at (8.5,+0.7) {};
            \node [inner sep=7pt, fill=gray!20, draw=gray] at (7.5,-0.7) {};
            \node [inner sep=7pt, fill=gray!20, draw=gray] at (9.5,+0.7) {};
          \end{scope}
        \end{tikzpicture}
      }
    \end{subfigure}
  \end{adjustbox}
  \caption{All unique (up to rotational and translational symmetries)
    reversible reactions to destroy or produce clusters
    (\textcircled{+}\textcircled{+}) and
    (\textcircled{+}\textcircled{--}).}
  \label{fig:++}
  \centering
  \begin{adjustbox}{minipage=\linewidth,scale=0.8}
    \begin{subfigure}[b]{0.48\textwidth}
      \scalebox{0.65}{
        \begin{tikzpicture}[->,thick,every node/.style={circle,draw=blue,font=\sffamily\Large\bfseries}]
          \draw[help lines] (0,0);
          \begin{scope}[node distance=0.1]
            \node (a)[inner sep=3.3pt] at (0,0) {--};
            \node (b)[inner sep=1pt] at (1,0) {+};
            \node (c)[draw = red,inner sep=3.3pt] at (1.5,+0.7) {--};
            \node [inner sep=7pt, fill=gray!20, draw=gray] at (0.5,-0.7) {};
            \node [inner sep=7pt, fill=gray!20, draw=gray] at (0.5,+0.7) {};
            \node [inner sep=7pt, fill=gray!20, draw=gray] at (-0.5,+0.7) {};
            \node [inner sep=7pt, fill=gray!20, draw=gray] at (-1,0) {};
            \node [inner sep=7pt, fill=gray!20, draw=gray] at (2,0) {};
            \node [inner sep=7pt, fill=gray!20, draw=gray] at (1.5,-0.7) {};
            \node [inner sep=7pt, fill=gray!20, draw=gray] at (-0.5,-0.7) {};
          \end{scope}
          \draw [<->] (b) to [ultra thick,out=350,in=330] (c);
          \node (1)[inner sep=2pt,draw=none] at (4.5,0.6) {$k_5$};
          \draw [-{Straight Barb[left]},black,line width=2pt] (3,0.1) to (6,0.1);
          \node (1)[inner sep=2pt,draw=none] at (4.5,-0.6) {$k_7$};
          \draw [-{Straight Barb[left]},black,line width=2pt] (6,-0.1) to (3,-0.1);
          \begin{scope}[node distance=0.1]
            \node (a)[inner sep=3.3pt] at (8,0) {--};
            \node (b)[inner sep=3.3pt] at (9,0) {--};
            \node (c)[draw = red,inner sep=1pt] at (9.5,+0.7) {+};
            \node [inner sep=7pt, fill=gray!20, draw=gray] at (7,0) {};
            \node [inner sep=7pt, fill=gray!20, draw=gray] at (10,0) {};
            \node [inner sep=7pt, fill=gray!20, draw=gray] at (7.5,+0.7) {};
            \node [inner sep=7pt, fill=gray!20, draw=gray] at (8.5,-0.7) {};
            \node [inner sep=7pt, fill=gray!20, draw=gray] at (8.5,+0.7) {};
            \node [inner sep=7pt, fill=gray!20, draw=gray] at (7.5,-0.7) {};
            \node [inner sep=7pt, fill=gray!20, draw=gray] at (9.5,-0.7) {};
          \end{scope}
        \end{tikzpicture}
      }
    \end{subfigure}\quad
    \begin{subfigure}[b]{0.48\textwidth}
      \scalebox{0.65}{
        \begin{tikzpicture}[->,thick,every node/.style={circle,draw=blue,font=\sffamily\Large\bfseries}]
          \draw[help lines] (0,0);
          \begin{scope}[node distance=0.1]
            \node (a)[inner sep=3.3pt] at (0,0) {--};
            \node (b)[inner sep=1pt] at (1,0) {+};
            \node (c)[draw = red,inner sep=3.3pt] at (2,0) {--};
            \node [inner sep=7pt, fill=gray!20, draw=gray] at (0.5,-0.7) {};
            \node [inner sep=7pt, fill=gray!20, draw=gray] at (0.5,+0.7) {};
            \node [inner sep=7pt, fill=gray!20, draw=gray] at (-0.5,+0.7) {};
            \node [inner sep=7pt, fill=gray!20, draw=gray] at (-1,0) {};
            \node [inner sep=7pt, fill=gray!20, draw=gray] at (1.5,-0.7) {};
            \node [inner sep=7pt, fill=gray!20, draw=gray] at (1.5,+0.7) {};
            \node [inner sep=7pt, fill=gray!20, draw=gray] at (-0.5,-0.7) {};
          \end{scope}
          \draw [<->] (b) to [ultra thick,out=80,in=100] (c);
          \node (1)[inner sep=2pt,draw=none] at (4.5,0.6) {$k_6$};
          \draw [-{Straight Barb[left]},black,line width=2pt] (3,0.1) to (6,0.1);
          \node (1)[inner sep=2pt,draw=none] at (4.5,-0.6) {$k_8$};
          \draw [-{Straight Barb[left]},black,line width=2pt] (6,-0.1) to (3,-0.1);
          \begin{scope}[node distance=0.1]
            \node (a)[inner sep=3.3pt] at (8,0) {--};
            \node (b)[inner sep=3.3pt] at (9,0) {--};
            \node (c)[draw = red,inner sep=1pt] at (10,0) {+};
            \node [inner sep=7pt, fill=gray!20, draw=gray] at (7,0) {};
            \node [inner sep=7pt, fill=gray!20, draw=gray] at (9.5,-0.7) {};
            \node [inner sep=7pt, fill=gray!20, draw=gray] at (7.5,+0.7) {};
            \node [inner sep=7pt, fill=gray!20, draw=gray] at (8.5,-0.7) {};
            \node [inner sep=7pt, fill=gray!20, draw=gray] at (8.5,+0.7) {};
            \node [inner sep=7pt, fill=gray!20, draw=gray] at (7.5,-0.7) {};
            \node [inner sep=7pt, fill=gray!20, draw=gray] at (9.5,+0.7) {};
          \end{scope}
        \end{tikzpicture}
      }
    \end{subfigure}
  \end{adjustbox}
  \caption{All unique (up to rotational and translational symmetries)
    reversible reactions to destroy or produce clusters
    (\textcircled{--}\textcircled{--}) and
    (\textcircled{+}\textcircled{--}).}
  \label{fig:--}
\end{figure}

As can be observed in Figure \ref{fig:++}, 3-clusters with three types
of bonds are involved in the derivation of rate equations. The first
type is the linear $3$-cluster in which the two bonds are colinear.
The second type is the cluster in which there is an obtuse angle of
120 degrees between the bonds due to the triangular shape of the
lattice. The third type is the triangular cluster in which the
elements form a triangle with 60 degrees between the bonds.  We will
refer to these as the linear, angled and triangular clusters,
respectively.  For simplicity, linear clusters will be represented as
a combination of elements with a straight line [$\overline{(\bullet
  \bullet \bullet)}$], angled clusters as a combination of elements
with a hat sign [$\widehat{(\bullet \bullet \bullet)}$] and
  triangular clusters as a combination of elements with a triangle
  [$\widetriangle{(\bullet \bullet \bullet)}$], where $\bullet$ is
either $+$ or $-$.  The set of all 3-cluster types will be denoted
\begin{equation}
\begin{aligned}
\Theta = \Big\{ & \overline{+++},\overline{---},\overline{++-},\overline{--+},\overline{+-+},\overline{-+-}, \\
                & \widehat{+++},\widehat{---}, \widehat{+-+},\widehat{-+-},\widehat{++-},\widehat{--+}, \\ 
                & \widetriangle{+++}, \widetriangle{---},\widetriangle{++-},\widetriangle{--+} \Big\}.
\end{aligned}
\label{eq:Theta}
\end{equation}
The rate equations for the ($\fminus\fminus$) and ($\fplus\fminus$)
clusters can be derived in a similar way, by considering all possible
reactions that produce or destroy these two clusters as shown in
Figure \ref{fig:--}. We thus obtain the following system of rate
equations for the concentrations $C_{++}$, $C_{--}$ and $C_{+-}$
\begin{subequations}
	\begin{align}
	\frac{d}{dt} C_{++}
	= 4k_1 C_{\widehat{+-+}}
	+ 2k_2 C_{\overline{+-+}}
	- 4k_3 C_{\widehat{++-}}
	- 2k_4 C_{\overline{++-}},
	\label{eq:dC++}
	\end{align}
	\begin{align}
	\frac{d}{dt} C_{--}
	= 4k_5 C_{\widehat{-+-}}
	+ 2k_6  C_{\overline{-+-}}
	- 4k_7 C_{\widehat{--+}}
	- 2k_8 C_{\overline{--+}},
	\label{eq:dC--}
	\end{align}
	\begin{align}
	\begin{split}
	\frac{d}{dt} C_{+-}
	= 2k_3 C_{\widehat{++-}}
	+ 2k_7 C_{\widehat{--+}}
	+ k_4 C_{\overline{++-}}
	+ k_8 C_{\overline{--+}}\\ 
	- 2k_1 C_{\widehat{+-+}}
	- 2k_5 C_{\widehat{-+-}}
	- k_2 C_{\overline{+-+}}
	- k_6  C_{\overline{-+-}} ~.
	\end{split}
	\label{eq:dC+-}
	\end{align}
	\label{eq:dC}
\end{subequations}
An important aspect of system \eqref{eq:dC} is its hierarchical
structure in the sense that the rates of change of concentrations of
2-clusters are given in terms of the concentrations of 3-clusters and
if one were to write down equations for their rates of change they
would involve concentrations of 4-clusters, etc. Thus, system
\eqref{eq:dC} is not closed and needs to be truncated which we will do so
here at the level of 2-clusters. Two strategies for closing the
truncated system are discussed in Section \ref{sec:closure}.

In addition, the normalization condition \eqref{eq:CCb} can be modified to a dynamic
form by taking the derivative with respect to time 
\begin{equation}
\begin{aligned}
\frac{d}{dt}C_{++} +
\frac{d}{dt} C_{--}
+ 2\frac{d}{dt} C_{+-} = 0.
\end{aligned}
\label{eq:dCCb}
\end{equation}
As can be verified, this equation is satisfied automatically by system
\eqref{eq:dC++}--\eqref{eq:dC+-}.  Moreover, the rate of the forward
reaction will be equal to the rate of corresponding reverse reaction
in the chemical equilibrium. As we are interested in the
equilibrium state of reactions, the following relations can be written
for each pair of forward and reverse reactions in equilibrium
\begin{subequations}
	\begin{align} 
	k_1 C_{\widehat{+-+}}
	= k_3 C_{\widehat{++-}} 
	\qquad \Longrightarrow \qquad 
	Q_1 = \frac{k_1}{k_3} = \frac{C_{\widehat{++-}}
          }{C_{\widehat{+-+}}}, 	\label{eq:Qa}
	\\[5mm]
	k_2 C_{\overline{+-+}}
	= k_4 C_{\overline{++-}}
	\qquad \Longrightarrow \qquad
	Q_2 = \frac{k_2}{k_4} =
          \frac{C_{\overline{++-}}}{C_{\overline{+-+}}}, 	\label{eq:Qb}
	\\[5mm]
	k_5 C_{\widehat{-+-}}
	= k_7 C_{\widehat{--+}}
	\qquad \Longrightarrow \qquad
	Q_3 = \frac{k_5}{k_7} = \frac{
          C_{\widehat{--+}}}{C_{\widehat{-+-}}}, 	\label{eq:Qc}
	\\[5mm]
	k_6  C_{\overline{-+-}}
	= k_8 C_{\overline{--+}}
	\qquad \Longrightarrow \qquad
	Q_4 = \frac{k_6}{k_8} =
          \frac{C_{\overline{--+}}}{C_{\overline{-+-}}}, 	\label{eq:Qd}
	\end{align}
	\label{eq:Q}
\end{subequations}
where $Q_i$, $i=1,\dots,4$, denote the equilibrium constants for each
reversible reaction.

\section{Closure Approximations}
\label{sec:closure}

In this section we discuss two strategies for closing system
\eqref{eq:dC}, by which we mean expressing the concentration of
3-clusters on the right-hand side (RHS) of this system in terms of a
suitable function of the concentrations of 2-clusters. In other words,
the goal is to replace each of the triplet concentrations $C_i$,
$i \in \Theta$, in \eqref{eq:dC} with suitably chosen functions
$g_i(C_+,C_-,C_{++},C_{--},C_{+-})$, such that the closed system will
have the form
\begin{subequations}
	\begin{align}
	\begin{split}
	\frac{d}{dt} C_{++}
	= 4k_1 g_{\widehat{+-+}}(C_+,C_-,C_{++},C_{--},C_{+-})
	+ 2k_2 g_{\overline{+-+}}(C_+,C_-,C_{++},C_{--},C_{+-})\\
	- 4k_3 g_{\widehat{++-}}(C_+,C_-,C_{++},C_{--},C_{+-})
	- 2k_4 g_{\overline{++-}}(C_+,C_-,C_{++},C_{--},C_{+-}),
	\label{eq:dC++c}
	\end{split}
	\end{align}
	\begin{align}
	\begin{split}
	\frac{d}{dt} C_{--}
	= 4k_5 g_{\widehat{-+-}}(C_+,C_-,C_{++},C_{--},C_{+-})
	+ 2k_6  g_{\overline{-+-}}(C_+,C_-,C_{++},C_{--},C_{+-})\\
	- 4k_7 g_{\widehat{--+}}(C_+,C_-,C_{++},C_{--},C_{+-})
	- 2k_8 g_{\overline{--+}}(C_+,C_-,C_{++},C_{--},C_{+-}),
	\label{eq:dC--c}
	\end{split}
	\end{align}
	\begin{align}
	\begin{split}
	\frac{d}{dt} C_{+-}
	= 2k_3 g_{\widehat{++-}}(C_+,C_-,C_{++},C_{--},C_{+-})
	+ 2k_7 g_{\widehat{--+}}(C_+,C_-,C_{++},C_{--},C_{+-})\\
	+ k_4 g_{\overline{++-}}(C_+,C_-,C_{++},C_{--},C_{+-})
	+ k_8 g_{\overline{--+}}(C_+,C_-,C_{++},C_{--},C_{+-})\\
	- 2k_1 g_{\widehat{+-+}}(C_+,C_-,C_{++},C_{--},C_{+-})
	- 2k_5 g_{\widehat{-+-}}(C_+,C_-,C_{++},C_{--},C_{+-})\\
	- k_2 g_{\overline{+-+}}(C_+,C_-,C_{++},C_{--},C_{+-})
	- k_6  g_{\overline{-+-}}(C_+,C_-,C_{++},C_{--},C_{+-}) ~.
	\end{split}
	\label{eq:dC+-c}
	\end{align}
	\label{eq:dCc}
\end{subequations}
The first approach to finding these functions is the pair
approximation based on the classical method introduced in
\cite{Ben-Avraham1992} and the second is a new optimal closure
approximation. The problem of finding the rate constants
$k_1,\dots,k_8$ in \eqref{eq:dC} will be addressed in Section
\ref{sec:Bayes}.

\subsection{Pair Approximation}
\label{sec:pair}

The pair approximation is a classical approach to closing truncated
hierarchical dynamical systems. It was first used by Dickman
\cite{Dickman1986} in a surface-reaction model and later by Matsuda et
al.~\cite{Matsuda1992} for a structured lattice appearing in a
population dynamics problem. In our model, we use the pair
approximation approach in order to close the dynamical system
\eqref{eq:dC} at the level of $2$-clusters. The state of a site is
denoted $i,j,k \in \{+,-\}$ for a two-element system. Global
concentrations are denoted $C_i$ giving the probability that a
randomly chosen site in the lattice is in state $i \in
\{+,-\}$. Similarly, {$C_{ij}$} is the global concentration of
2-clusters in state {$ij$}. In addition, {\em local}
concentrations are denoted {$P_{j|i}$} and give the conditional
probability that a randomly chosen nearest neighbour of a site in
state {$i$} is in state {$j$}.  These local concentrations
can be expressed in terms of global concentrations using the rules
governing conditional probabilities as \cite{Matsuda1992,Sato2000}
\begin{subequations}
  \begin{align}
    C_{ij} = C_{ji} = C_i P_{j|i} = C_{j} P_{i|j},\label{eq:S/S'a}\\
    \sum_{i\in\{+,-\}}^{} C_i = 1, \\
    \sum_{i\in\{+,-\}}^{} P_{i|j} = 1 ~~ for ~ any ~ j \in \{+,-\}.
  \end{align}
  \label{eq:S/S'}
\end{subequations}
Equation \eqref{eq:S/S'a} is invariant with respect to the rotational
symmetries of the lattice, cf.~Appendix \ref{sec:sym}. Also, the
global concentration of a triplet in state $(ijk)$ can be derived in a
similar approach as Equation \eqref{eq:S/S'a},
\begin{equation}
	\begin{aligned}
	C_{ijk} = C_i P_{j|i} P_{k|ij} = C_{ij} P_{k|ij}.
	\end{aligned}
	\label{eq:ijk}
\end{equation}
The $P_{k|ij}$ term in this equation involves $3$ elements in a
triplet.  In order to break down the triplet concentration in terms of
pair and singlet concentrations, one is required to find an equivalent
expression for the $P_{k|ij}$ term. The underlying assumption of the
pair approximation method is to neglect the interaction between the
non-nearest neighbour elements, $i$ and $k$ in this case, according to
Figure \ref{fig:d2} \cite{Matsuda1992,Harada1994,Sato2000}.  This
results in an approximation at the level of 3-clusters expressed in
terms of quantities defined at the level of 2-clusters as
\begin{equation}
	P_{k|ij} \approx P_{k|j}.
	\label{eq:PA_assumption}
\end{equation}
A different approach could also be adopted to derive the pair
approximation formulation resulting in the same closure model. In this
approach, assuming a triplet in state $(ijk)$ on a random lattice (in
which all non-nearest-neighbour elements are decoupled), the global
concentration of this triplet can be written as
\begin{subequations}
\label{eq:Cijk}
\begin{align}
C_{ijk} & = C_i C_j C_k Q_{ij} Q_{jk} T_{ijk}, \label{eq:CijkA} \\
Q_{ij} & = \frac{C_{ij}}{C_i C_j},
\end{align}
\end{subequations}
where $C_i$, $C_j$ and $C_k$ denote the global concentrations of
singlets, $Q_{ij}$ and $Q_{ik}$ are the pair correlations of nearest
neighbours and $T_{ijk}$ is the triple correlation of the chain. Note
that element $i$ and element $k$ on a random lattice are considered
not to be nearest-neighbours. Also, there is no factor $Q_{ik}$ in
equation \eqref{eq:CijkA} as the correlation of non-nearest-neighbours
is represented by $T_{ijk}$. According to the underlying assumption of
pair approximation, the non-nearest-neighbour elements are decoupled.
There is no deterministic way of calculating correlations of
non-nearest neighbour elements \cite{Baalen2000} and some additional
assumptions have to be made in order to close
\eqref{eq:dC}. The standard pair approximation method neglects all
triple correlations such that $T_{ijk} = 1$. {This is an
  equivalent approximation to Equation \eqref{eq:PA_assumption}.}

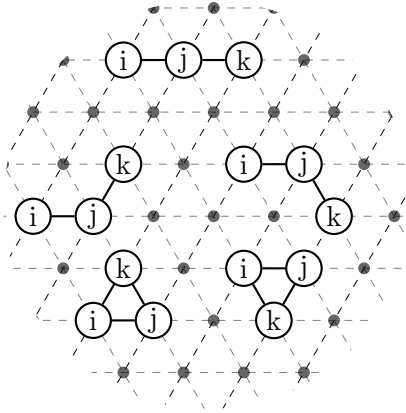
\begin{figure}
  \centering
  \begin{tikzpicture} [->,>=stealth',auto,thick,main node/.style={circle,draw,font=\sffamily\Large\bfseries},scale=0.8]
    \coordinate (Origin)   at (0,0);
    \coordinate (XAxisMin) at (-5,0);
    \coordinate (XAxisMax) at (5,0);
    \coordinate (YAxisMin) at (0,-5);
    \coordinate (YAxisMax) at (0,5);
    \clip (-3.6,-2.4) circle (3.4cm); 
    \begin{scope}
      \pgftransformcm{1}{0}{1/2}{sqrt(3)/2}{\pgfpoint{0cm}{0cm}}
      \draw[style=help lines,dashed,draw=black] (-12,-12) grid[ystep=0cm,xstep=1cm] (12,12);
    \end{scope}
    
    \pgftransformcm{1}{0}{-1/2}{sqrt(3)/2}{\pgfpoint{0cm}{0cm}} 
    \draw[style=help lines,dashed] (-12,-12) grid[step=1cm] (12,12);
    \foreach \x in {-6,-5.5,...,1}{
      \foreach \y in {-6,-5.5,...,1}{
        \node[draw = black, draw opacity = 0.6, circle,inner sep=1.3pt,fill=black, fill opacity=0.6] at (2*\x,2*\y) {};
      }
    }
    \node(1)[draw = black,circle,inner sep=2pt,fill=white] at (-5,0) {i};
    \node(2)[draw = black,circle,inner sep=1.3pt,fill=white] at (-4,0) {j};
    \node(3)[draw = black,circle,inner sep=1.4pt,fill=white] at (-3,0) {k};
    \draw [thick,-,black] (1) -- (2) node {};
    \draw [thick,-,black] (2) -- (3) node {};
    
    \node(1)[draw = black,circle,inner sep=2pt,fill=white] at (-8,-3) {i};
    \node(2)[draw = black,circle,inner sep=1.3pt,fill=white] at (-7,-3) {j};
    \node(3)[draw = black,circle,inner sep=1.4pt,fill=white] at (-6,-2) {k};
    \draw [thick,-,black] (1) -- (2) node {};
    \draw [thick,-,black] (2) -- (3) node {};
    
    \node(1)[draw = black,circle,inner sep=2pt,fill=white] at (-4,-2) {i};
    \node(2)[draw = black,circle,inner sep=1.3pt,fill=white] at (-3,-2) {j};
    \node(3)[draw = black,circle,inner sep=1.4pt,fill=white] at (-3,-3) {k};
    \draw [thick,-,black] (1) -- (2) node {};
    \draw [thick,-,black] (2) -- (3) node {};
    
    \node(1)[draw = black,circle,inner sep=2pt,fill=white] at (-8,-5) {i};
    \node(2)[draw = black,circle,inner sep=1.3pt,fill=white] at (-7,-5) {j};
    \node(3)[draw = black,circle,inner sep=1.4pt,fill=white] at (-7,-4) {k};
    \draw [thick,-,black] (1) -- (2) node {};
    \draw [thick,-,black] (2) -- (3) node {};
    \draw [thick,-,black] (1) -- (3) node {};
    
    \node(1)[draw = black,circle,inner sep=2pt,fill=white] at (-5,-4) {i};
    \node(2)[draw = black,circle,inner sep=1.3pt,fill=white] at (-4,-4) {j};
    \node(3)[draw = black,circle,inner sep=1.4pt,fill=white] at (-5,-5) {k};
    \draw [thick,-,black] (1) -- (2) node {};
    \draw [thick,-,black] (2) -- (3) node {};
    \draw [thick,-,black] (1) -- (3) node {};
    
  \end{tikzpicture}
  \caption{Schematic of a 2D triangular lattice with chains of
    $3$-clusters with $180$-degree bonds, $120$-degree bonds, and
    $60$-degree bonds.  These cluster types are referred to,
    respectively, as linear, angled and triangular throughout this
    text. The clumping intensity of this lattice is equal to the
    proportion of the triangles over all triplets types, which is
    equal to $\frac{2}{5}$.}
  \label{fig:3clusters}
\end{figure}

Each regular lattice can be described by two parameters: the number of
neighbours per site ($m$) and the proportion of triangles to triplets
($\theta$), which determines the clumping intensity of the lattice. A
triangular lattice has $m=6$ neighbours per site and $\theta =
\frac{2}{5}$, as shown in Figure \ref{fig:3clusters}. Similarly,
chain-like triplets in a triangular lattice can be categorized into
two groups: linear triplets with 180-degree bonds, and angled triplets
with 120-degree bonds. As is evident from Figure \ref{fig:3clusters},
the probability of finding a triplet in a closed form, angled form and
linear form is equal to $\frac{2}{5}$, $\frac{2}{5}$, and
$\frac{1}{5}$, respectively. As the shape of triplets is important in
our model, these probabilities have to be taken into account as
coefficients when calculating the corresponding concentrations. Morris
\cite{MorrisAndrew1997} and Keeling \cite{Keeling1997} have proposed
formulas for approximating the fraction of closed and open chains in a
certain state $(ijk)$ on a regular lattice by taking into account the
clumping effect of triangles in the lattice. Following these studies,
the concentrations of each type of triplet are approximated as
\begin{subequations}
  \begin{align}
    C_{\overline{ijk}} & \approx g_{\overline{ijk}} = (1-\theta) \frac{1}{3} \frac{C_{ij} C_{jk}}{C_{j}},\label{eq:Cijk2a}\\
    C_{\widehat{ijk}} &  \approx g_{\widehat{ijk}}  = (1-\theta) \frac{2}{3} \frac{C_{ij} C_{jk}}{C_{j}},\label{eq:Cijk2b}\\
    C_{\widetriangle{ijk}} & \approx g_{\widetriangle{ijk}} = \theta \frac{C_{ij} C_{jk} C_{ki}}{C_{i} C_{j} C_{k}},\label{eq:Cijk2c}
  \end{align}
  \label{eq:Cijk2}%
\end{subequations}
where $\overline{ijk}$ denotes a linear cluster, $\widehat{ijk}$
denotes an angled triplet with 120-degree bonds and $\widetriangle{ijk}$
denotes a triangular cluster with 60-degree bonds. The specific forms
taken by expressions \eqref{eq:Cijk2a}--\eqref{eq:Cijk2b} for
different $i,j,k \in \{+,-\}$ are collected in Table
\ref{tab:closures}. Applying this pair approximation to close system
\eqref{eq:dC} gives
\begin{subequations}
\begin{align} 
\label{eq:dCclosed}
  \begin{aligned}
    \frac{d}{dt}  C_{++}
    = & \phantom{-} 4k_1 \frac{2}{3} (1-\theta) \frac{C_{+-}^2}{C_{-}}
    + 2k_2 \frac{1}{3} (1-\theta) \frac{C_{+-}^2}{C_{-}}\\
    & - 4k_3 \frac{2}{3} (1-\theta) \frac{C_{++} C_{+-}}{C_{+}}
    - 2k_4 \frac{1}{3} (1-\theta) \frac{C_{++} C_{+-}}{C_{+}},
  \end{aligned} \\
  \begin{aligned}
    \frac{d}{dt}  C_{--}
    =  & \phantom{-} 4k_5 \frac{2}{3} (1-\theta) \frac{C_{+-}^2}{C_{+}}
    + 2k_6 \frac{1}{3} (1-\theta) \frac{C_{+-}^2}{C_{+}}\\
    & - 4k_7 \frac{2}{3} (1-\theta) \frac{C_{--} C_{+-}}{C_{-}}
    - 2k_8 \frac{1}{3} (1-\theta) \frac{C_{--} C_{+-}}{C_{-}}.
  \end{aligned} 
\end{align}
\end{subequations}
We note that equation \eqref{eq:dC+-} is eliminated from the system of
equations as we use the normalization condition
$\frac{d}{dt}C_{++} + \frac{d}{dt} C_{--} + 2\frac{d}{dt} C_{+-}
  = 0$, cf.~\eqref{eq:dCCb}, to close the system.

\subsection{Optimal Approximation}
\label{sec:optimal}

As will be shown in Section \ref{sec:results_opt}, the closure based
on the pair approximation introduced above is not very accurate.  In
order to improve the accuracy of the closure, here we propose a new
approach based on nonlinear regression analysis of simulated annealing
data. This is a data-driven strategy where an {\em optimal} form of
the closure is obtained by fitting an expression in an assumed
well-justified form to the data. The pair approximation scheme
attempts to predict the concentrations of the higher-order clusters in
terms of concentrations of lower-order ones using expressions with the
functional forms given in \eqref{eq:Cijk2}. In the new approach, we
close system \eqref{eq:dC} using relations generalizing the
expressions in \eqref{eq:Cijk2} which depend on a number of adjustable
parameters.  These parameters, representing the exponents of different
concentrations, are then calibrated against the simulated annealing
data by solving a suitable constrained optimization problem.
Information about the new more general closure relations and how they
compare to the pair approximation for different 3-clusters is
collected in Table \ref{tab:closures} where we also group the
parameters to be determined in the vector $\mathbf{V}_i$, with
$i \in \Theta$ representing different cluster types.

\begin{table}[h!]
  \begin{ruledtabular}
    \begin{tabular}{ c  c  c  c }
      Triplet Type & Pair Approximation & Optimal Approximation  & Parameters (exponents)  \\
          $i$      & $g_i(C_+,\dots,C_{+-})$ & $g_i(C_+,\dots,C_{+-};\mathbf{V}_i)$ & $\mathbf{V}_i$ \\
      $\overline{+++}$      & $\frac{1}{5} \frac{C_{++}^{2} }{C_{+}}$ & $\frac{1}{5} \frac{C_{++}^{\gamma_1} }{C_{+}^{\xi_1}}$ & $\mathbf{V}_{\overline{+++}} = \left[\gamma_1~\xi_1\right]$ \\[3mm]
      $\overline{---}$      &  $\frac{1}{5} \frac{C_{--}^{2} }{C_{-}}$ & $\frac{1}{5} \frac{C_{--}^{\gamma_1} }{C_{-}^{\xi_1}}$ & $\mathbf{V}_{\overline{---}} = \left[\gamma_1~\xi_1\right]$\\[3mm]
      $\overline{+-+}$      & $\frac{1}{5} \frac{C_{+-}^{2}}{ C_{-}}$& $\frac{1}{5} \frac{C_{+-}^{\gamma_1}}{C_{+}^{\xi_1} C_{-}^{\xi_2}}$ & $\mathbf{V}_{\overline{+-+}} = \left[\gamma_1~\xi_1~\xi_2\right]$\\[3mm]
      $\overline{-+-}$      & $\frac{1}{5} \frac{C_{+-}^{2}}{C_{+}}$ & $\frac{1}{5} \frac{C_{+-}^{\gamma_1}}{C_{+}^{\xi_1} C_{-}^{\xi_2}}$ & $\mathbf{V}_{\overline{-+-}} = \left[\gamma_1~\xi_1~\xi_2\right]$\\[3mm]
      $\overline{++-}$      & $\frac{1}{5} \frac{C_{++} C_{+-}}{C_{+}}$ & $\frac{1}{5} \frac{C_{++}^{\gamma_1} C_{+-}^{\gamma_2}}{C_{+}^{\xi_1} C_{-}^{\xi_2}}$ & $\mathbf{V}_{\overline{++-}} = \left[\gamma_1~\gamma_2~\xi_1~\xi_2\right]$\\[3mm]
      $\overline{--+}$      & $\frac{1}{5} \frac{C_{--} C_{+-}}{C_{-}}$ & $\frac{1}{5} \frac{C_{--}^{\gamma_1} C_{+-}^{\gamma_2}}{C_{+}^{\xi_1} C_{-}^{\xi_2}}$ & $\mathbf{V}_{\overline{--+}} = \left[\gamma_1~\gamma_2~\xi_1~\xi_2\right]$ \\[3mm]
      $\widehat{+++}$      & $\frac{2}{5} \frac{C_{++}^2}{C_{+}}$ & $\frac{2}{5} \frac{C_{++}^{\gamma_1}}{C_{+}^{\xi_1}}$ & $\mathbf{V}_{\widehat{+++}} = \left[\gamma_1~\xi_1\right]$\\[3mm]
      $\widehat{---}$       &$\frac{2}{5} \frac{C_{--}^{2}}{C_{-}}$& $\frac{2}{5} \frac{C_{--}^{\gamma_1}}{C_{-}^{\xi_1}}$ & $\mathbf{V}_{\widehat{---}} = \left[\gamma_1~\xi_1\right]$\\[3mm]
      $\widehat{+-+}$       &$\frac{2}{5} \frac{C_{+-}^{2}}{C_{-}}$& $\frac{2}{5} \frac{C_{+-}^{\gamma_1}}{C_{+}^{\xi_1} C_{-}^{\xi_2}}$ & $\mathbf{V}_{\widehat{+-+}} = \left[\gamma_1~\xi_1~\xi_2\right]$\\[3mm]
      $\widehat{-+-}$      &$\frac{2}{5} \frac{C_{+-}^{2}}{C_{+}}$ & $\frac{2}{5} \frac{C_{+-}^{\gamma_1}}{C_{+}^{\xi_1} C_{-}^{\xi_2}}$  & $\mathbf{V}_{\widehat{-+-}} = \left[\gamma_1~\xi_1~\xi_2\right]$\\[3mm]
      $\widehat{++-}$      &$\frac{2}{5} \frac{C_{++} C_{+-}}{C_{+}}$ & $\frac{2}{5} \frac{C_{++}^{\gamma_1} C_{+-}^{\gamma_2}}{C_{+}^{\xi_1} C_{-}^{\xi_2}}$ & $\mathbf{V}_{\widehat{++-}} = \left[\gamma_1~\gamma_2~\xi_1~\xi_2\right]$\\[3mm]
      $\widehat{--+}$      & $\frac{2}{5} \frac{C_{--} C_{+-}}{C_{-}}$ & $\frac{2}{5} \frac{C_{--}^{\gamma_1} C_{+-}^{\gamma_2}}{C_{+}^{\xi_1} C_{-}^{\xi_2}}$ &  $\mathbf{V}_{\widehat{--+}} = \left[\gamma_1~\gamma_2~\xi_1~\xi_2\right]$\\[3mm]
      $\widetriangle{+++}$  &$\frac{2}{5} \frac{C_{++}^{3}}{C_{+}^{3}}$ & $\frac{2}{5} \frac{C_{++}^{\gamma_1}}{C_{+}^{\xi_1}}$ & $\mathbf{V}_{\widetriangle{+++}} = \left[\gamma_1~\xi_1\right]$\\[3mm]
      $\widetriangle{---}$  & $\frac{2}{5} \frac{C_{--}^{3}}{C_{-}^{3}}$& $\frac{2}{5} \frac{C_{--}^{\gamma_1}}{C_{-}^{\xi_1}}$ & $\mathbf{V}_{\widetriangle{---}} = \left[\gamma_1~\xi_1\right]$\\[3mm]
      $\widetriangle{++-}$  & $\frac{2}{5} \frac{C_{++} C_{+-}^{2}}{C_{+}^{2} C_{-}}$ & $\frac{2}{5} \frac{C_{++}^{\gamma_1} C_{+-}^{\gamma_2}}{C_{+}^{\xi_1} C_{-}^{\xi_2}}$ & $\mathbf{V}_{\widetriangle{++-}} = \left[\gamma_1~\gamma_2~\xi_1~\xi_2\right]$\\[3mm]
      $\widetriangle{--+}$  & $\frac{2}{5} \frac{C_{--} C_{+-}^{2}}{C_{+}C_{-}^{2}}$ & $\frac{2}{5} \frac{C_{--}^{\gamma_1} C_{+-}^{\gamma_2}}{C_{+}^{\xi_1} C_{-}^{\xi_2}}$ & $\mathbf{V}_{\widetriangle{--+}} = \left[\gamma_1~\gamma_2~\xi_1~\xi_2\right]$\\[3mm]
    \end{tabular}
    \caption{The functional forms of the closures based on the pair
      approximation and on the proposed optimal closures for each
      triplet type.  Unknown parameters (exponents) are indicated in
      the last column.}
    \label{tab:closures}
  \end{ruledtabular}
\end{table}

Notably, the new functional forms are generalizations of the
expressions used in the pair approximation obtained by allowing for
more freedom in how the new expressions for closures depend on the
cluster concentrations. The numerators of the new expressions involve
concentrations of {\em all} nearest-neighbour 2-clusters such that the
effect of non-nearest-neighbour clusters is still neglected. The
denominators, on the other hand, involve the concentrations of
singlets present in the triplet which makes the functional form of the
new closure different from the pair approximation in some cases.  The
parameters (exponents) defining the proposed optimal closures in Table
\ref{tab:closures} are subject to the following constraints ensuring
well-posedness of the resulting system \eqref{eq:dCc}
\begin{enumerate}
\item the difference of the sums of the exponents in the numerators
  and in the denominators is equal to one, i.e., $\sum_{j}^{}
    \gamma_j - \sum_{j}^{} \xi_j = 1$, ensuring that the terms
  representing the closure have the units of concentration,

\item the exponents in the numerators need to be non-negative,
    i.e., $\gamma_j \ge 0$, since otherwise the corresponding terms
  representing the closure model may become unbounded as the
  concentration approaches zero, causing solutions of the ODE system
  \eqref{eq:dCc} to blow up,

\item the exponents in the numerators need to be bounded
  $\gamma_1, \gamma_2 \le \delta$, where $\delta$ is the upper bound
  on the exponent which needs to be specified, as otherwise the
  corresponding terms representing the closure model may also become
  large causing solutions of the ODE system \eqref{eq:dCc} to blow up,

\item while denominators involve concentrations of singlets
    only, which are time independent, in some cases it is necessary to
    restrict the corresponding exponents as otherwise the terms
    representing the closure model will have large prefactors which
    may also cause the solutions of the ODE system \eqref{eq:dCc} to
    blow up; hence, we impose $\beta_1 \le \xi_1, \xi_2 \le \beta_2$,
    where $\beta_1$ and $\beta_2$ are the lower and upper bounds on
    the exponents to be specified; 
\end{enumerate}

Optimal parameters $\mathbf{V}_i $ of the closure model are obtained
separately for each cluster type $i$ by minimizing the mean-square
error between the experimental concentration data $\widetilde{C}_i(t)$
obtained from simulated annealing experiments, and the predictions of
the corresponding ansatz function
$g_i(\widetilde{C}_+,\widetilde{C}_-,\widetilde{C}_{++}(t),\widetilde{C}_{--}(t),\widetilde{C}_{+-}(t);\mathbf{V}_i)$,
cf.~Table \ref{tab:closures}, obtained with the parameter vector
$\mathbf{V}_i$ over the time window $[0,T]$, where $T$ corresponds to
the end of the simulated annealing process. Then, for each
$i\in \Theta$, error functional is defined as
\begin{equation}
\allowdisplaybreaks[1]
\begin{aligned}
J_i(\mathbf{V}_i) = \frac{1}{2} \int_{0}^{T} \left[g_i(\widetilde{C}_+,\widetilde{C}_-,\widetilde{C}_{++}(t),\widetilde{C}_{--}(t),\widetilde{C}_{+-}(t);\mathbf{V}_i) - \widetilde{C}_{i}(t) \right]^2 dt
\end{aligned}
\label{eq:Ji}
\end{equation}
which leads to the following family of constrained optimization
problems
\begin{equation}
\allowdisplaybreaks[1]
\begin{aligned}
& \underset{\mathbf{V}_i}{\min} ~~ J_i(\mathbf{V}_i),\\
& \text{subject to:} ~~~
\begin{cases}
0 \le \gamma_j \le \delta, & 1 \le j \le \Gamma_i \\
\beta_1 \le \xi_j \le \beta_2, &  1 \le j \le \Xi_i  \\
\sum_{j}^{} \gamma_j - \sum_{j}^{} \xi_j = 1
\end{cases}, 
\end{aligned}
\label{eq:min}
\end{equation}
for each $i\in \Theta$, where $\Gamma_i, \Xi_i \in \{ 1,2 \}$ are the
numbers of the exponents appearing in the numerator and the
denominator for a given cluster type, cf.~Table \ref{tab:closures}.

We note that choosing different values of the adjustable parameters
$\delta$, $\beta_1$ and $\beta_2$, which determine how stringent
the constraints in the optimization problem \eqref{eq:min} are, has the
effect of regularizing the solutions of this problem. We will
consider the following two cases (when the lower/upper bound is equal
to $-\infty / \infty$, this means that effectively there is no bound)
\begin{itemize}
\item ``soft'' regularization with $\beta_1 = -\infty$, $\beta_2 =
  \infty$, $\delta = 6$, and

\item ``hard'' regularization with $\beta_1=0$ and $\beta_2 = \delta = 2$.
\end{itemize}
In each case optimization problem \eqref{eq:min} is solved numerically
in MATLAB using the nonlinear programming routine {\tt fmincon}.  The
optimal closures determined in these two ways are compared to the pair
approximation in Section \ref{sec:results_opt}.

\section{Determining Reaction Rates via Bayesian Inference}
\label{sec:Bayes}

In order for the truncated model \eqref{eq:dCc} closed with either the
pair or optimal approximation to predict the time evolution of
2-cluster concentrations, it must be equipped with correct values of
the rate constants $k_1,\dots,k_8$, cf.~Figures \ref{fig:++} and
\ref{fig:--}. Here we show how these constants can be determined by
solving an appropriate inverse problem. It will be demonstrated that
this problem is in fact ill-posed and a suitable solution will be
obtained using Bayesian inference which also provides information
about the uncertainty of this solution.

We define the error functional as
\begin{equation}
\mathscr{J}(\mathbf{K}) = \frac{1}{2} \int_{0}^{T} \big\|\mathbf{C}(t,\mathbf{K}) - \widetilde{\mathbf{C}}(t) \big\|_2^2 \, dt + \alpha \big\| \mathbf{Q}(\mathbf{K}) - \widetilde{\mathbf{Q}} \big\|_2^2,
\label{eq:J}
\end{equation}
where $\widetilde{\mathbf{C}}(t) = \left[
\widetilde{C}_{++}(t),\widetilde{C}_{--}(t),\widetilde{C}_{+-}(t)
\right]$ is the vector of pair concentrations obtained from the
simulated annealing experiment, cf.~Figure \ref{fig:SA}c, 
$\mathbf{K} = \left[ k_1,k_2,\cdots,k_8 \right]$ is the vector
of unknown rate constants, and
$\mathbf{C}(t,\mathbf{K})$ is the vector of pair concentrations
predicted by model \eqref{eq:dCc} equipped with the rate constants
$\bK$. The second term in \eqref{eq:J} is the mean-square error
between the equilibrium constants $\bQ(\bK) = [Q_1, Q_2, Q_3, Q_4]$,
cf.~relation \eqref{eq:Q}, predicted by model \eqref{eq:dCc} equipped
with parameters $\bK$ and the equilibrium constant $\widetilde{\bQ} =
[\widetilde{Q}_1, \widetilde{Q}_2, \widetilde{Q}_3, \widetilde{Q}_4 ]$
obtained experimentally via simulated annealing. We note that the
equilibrium constants in \eqref{eq:Q} are written in terms of
3-cluster concentrations and one of the closure models (i.e., the pair
or the optimal approximation) is used to express the equilibrium
constants in terms of 2-cluster concentrations. The parameter $\alpha$
weights the relative importance of matching the equilibrium constants
versus matching the time-dependent concentrations in \eqref{eq:J}.

The optimal reaction rates are then obtained by solving the problem
\begin{equation}
\allowdisplaybreaks[1]
\begin{aligned}
& \underset{\bK \in \RR^8}{\min \mathscr{J}(\bK)} \\
& \text{subject to system \eqref{eq:dCc}}
\end{aligned}
\label{eq:minJ}
\end{equation}
separately for the case of the pair and the optimal approximations. We
note that the minimization problems \eqref{eq:min} and \eqref{eq:minJ}
are in fact quite different: in the former the mismatch between the
evolution of 3-cluster concentrations is minimized with respect to a
suitably-parameterized structure of the closure model, whereas in the
latter one seeks to minimize the mismatch between the evolution of
$2$-clusters in order to find the optimal reaction rates in the closed
system \eqref{eq:dCc}.

Inverse problems such as \eqref{eq:minJ} are often ill-posed, in the
sense that they usually do not admit a unique exact solution, but
rather many, typically infinitely many, approximate solutions. This is
a result of the presence of multiple local minima, which is a
consequence of the non-convexity of the error functional \eqref{eq:J},
and the fact that these minima are often ``shallow'' reflecting weak
dependence of the model predictions $\bC$ on the parameters $\bK$. As
will be evident from the results presented in Section
\ref{sec:results}, it is thus not very useful to solve problem
\eqref{eq:minJ} directly using standard methods of numerical
optimization \cite{nw00}. Instead, we will adopt a probabilistic
approach based on Bayesian inference where the unknown parameters in
the vector $\bK$ and the corresponding model predictions $\bC$ will be
represented in terms of suitable conditional probability
densities. This will allow us to systematically assess the relative
uncertainty of the many approximate solutions admitted by problem
\eqref{eq:minJ}. The mathematical foundations of Bayesian inference
are reviewed in the monographs
\cite{t05,KaipioSomersalo2005,Smith2013}.

In the Bayesian framework the distribution of the model parameters is
given by the {\em posterior} probability distribution
$\mathbb{P}\left(\textbf{K}|\widetilde{\textbf{C}}\right)$ defined as
the probability of obtaining parameters $\bK$ given the observed
experimental data $\tbC$. According to Bayes' rule, we then have
\begin{equation}
\mathbb{P}\left(\textbf{K}|\widetilde{\textbf{C}}\right) = \frac{\mathbb{P}\left(\widetilde{\textbf{C}}|\textbf{K}\right) \mathbb{P}\left(\textbf{K}\right)}{\mathbb{P}\left(\widetilde{\textbf{C}}\right)},
\label{eq:Bayes}
\end{equation}
where $\PP\left(\widetilde{\textbf{C}}|\textbf{K}\right)$ is the
\textit{likelihood} function describing the likelihood of obtaining
observations $\tbC$ given the model parameters $\bK$,
$\PP\left(\bK\right)$ is the {\em prior} probability distribution
reflecting some a priori assumptions on the parameters $\bK$ (based,
e.g., on direct measurements or literature data), whereas
$\PP\left(\bC\right)$ can be viewed as a normalizing factor.

A common approach to choosing the prior distribution
$\PP\left(\bK\right)$ is to use an uniform distribution, leading to
the so-called uninformative prior, and this is the approach we adopt
here. As regards the likelihood function, it is usually defined as
\begin{equation}
\PP\left(\widetilde{\textbf{C}}|\textbf{K}\right) \propto e^{-\mathscr{J}\left(\textbf{K}\right)}.
\label{eq:lhood}
\end{equation}
This definition of the likelihood function arises from the fact that
parameter values are considered more likely if they produce model
predictions $\bC$ closer to the data $\tbC$. Moreover, if the error
functional is a quadratic function of the model parameters $\bK$, then
the distribution in \eqref{eq:lhood} is Gaussian.

The main challenge is efficient sampling of the likelihood function
$\PP\left(\widetilde{\textbf{C}}|\textbf{K}\right)$ and this can be
performed using a Markov-Chain Monte-Carlo (MCMC) approach. It is a
form of a random walk in the parameter space designed to 
preference the sampling of
high-likelihood regions of the space while also exploring other
regions. MCMC methods are commonly used to sample arbitrary
distributions known up to a normalizing factor. In particular, these
methods are used to sample distributions in high dimensions where
exploration of the entire space with classical methods is
computationally intractable.  MCMC techniques have found applications
in many different fields such as electrochemistry
\cite{Sethurajan2019}, medical imaging \cite{Zhou2019,Huo2020},
environmental and geophysical sciences \cite{Laine2008,Lucas2017} and
ecology \cite{Camli2020}.

In the MCMC algorithm, a kernel $\mathcal{Q}
\left(\textbf{K}^*|\textbf{K}\right)$ is used to generate a proposal
for a move in the parameter space from the current point $\bK$ to a
new point $\textbf{K}^*$. This new point is accepted with a
probability given by the Hastings ratio; otherwise, it is rejected
(the ``Metropolis rejection''). In order to preserve the reversibility
of the Markov chain, the Hastings ratio for the acceptance probability
is defined as
\begin{equation}
\alpha \left(\textbf{K}^*,\textbf{K}\right) = \min \left\{1,\frac{\PP\left(\textbf{K}^*|\widetilde{\textbf{C}}\right) \mathcal{Q}\left(\textbf{K}|\textbf{K}^*\right) }{\PP \left(\textbf{K}|\widetilde{\textbf{C}} \right) \mathcal{Q}\left(\textbf{K}^*|\textbf{K}\right)} \right\}.
\label{eq:alpha}
\end{equation}
Thus, the Markov chain is reversible with respect to the
posterior distribution, meaning that a transition in space is equally
probable during forward and backward evolution. This
property makes the posterior distribution invariant on the Markov
chain. In other words, if given enough iterations, the distribution
converges to its equilibrium distribution. The most common choice of
the random walk is in the form
\begin{equation}
\textbf{K}^* = \textbf{K} + \boldsymbol{\xi}
\end{equation}
such that
$\mathcal{Q}\left(\textbf{K}^*|\textbf{K}\right) = \mathcal{Q}\left(
  \textbf{K}^* - \textbf{K} \right) = \mathcal{Q}\left(
  \boldsymbol{\xi} \right)$, where $\boldsymbol{\xi}$ is an
8-dimensional random variable drawn from a uniform distribution with
scale $\boldsymbol{\sigma} \in \RR^8$, i.e.,
$\boldsymbol{\xi} \sim \mathcal{U}\left[
  -\boldsymbol{\sigma},\boldsymbol{\sigma} \right]$. Note that the
components of the scale $\boldsymbol{\sigma}$ represent intervals
defining the uniform distribution. It has been suggested that uniform
kernels outperform Gaussian ones in terms of convergence of the MCMC
algorithm \cite{Thawornwattana2018}, hence, we adopt the uniform
kernel in our study.  The choice of symmetric kernels simplifies
relation \eqref{eq:alpha} as the factors representing the density in
the numerator and denominator cancel. However, the choice of scale for
the proposal kernel is nontrivial. Small scales will result in slow
convergence to the posterior distribution, whereas large scales will
prevent sampling of desirable regions in the parameter space.
Moreover, in our model there is no prior information about an
appropriate scale for the proposal kernel. In order to tackle this
issue, a two-step Delayed-Rejection Metropolis-Hastings (DR-MH)
algorithm is used \cite{Bedard2014,Laine2008,Trust2016}. In this
algorithm, the rejection of the first proposed point at a given
iteration of the Markov chain is delayed by proposing a new step in
the space based on a different scale. Normally, the scale of the first
kernel is chosen to be large in order to explore a wider region of the
high-dimensional parameter space and the scale of the second kernel is
small to gather more samples from higher-likelihood regions. This
approach combines exploration of large regions in a high dimensional
space with focus on high-likelihood neighbourhoods.
The DR-MH algorithm also ensures the reversibility of the Markov
chain, meaning that the direction of time in which the random walk is
taking place does not affect the dynamics of the Markov chain. In
other words, a random walk in the forward direction of the chain from
state $n$ to state $n+1$ is equally probable as the reverse walk from
state $n+1$ to state $n$. This ensures that the chain remains in an
equilibrium state as it evolves. This is an important property as the
Markov chain is essentially a random walk in the posterior space and
reversibility is required to ensure it remains in the same posterior
space. The acceptance probability of the delayed proposed point is
calculated using relation \eqref{eq:20}. To initialize the DR-MH
algorithm, we require an initial set of model parameters. They could
be random, without any prior information about the parameters. As an
alternative, we determine this initial point
$\boldsymbol{\mathcal{Y}}_1$ by solving problem \eqref{eq:minJ} using
a standard numerical optimization method \cite{nw00} several times
with random initial guesses and then taking the best parameter set
corresponding to the lowest value of the cost functional
$\mathscr{J}(\bK)$.  Algorithm \ref{alg:MCMC} outlines the entire
procedure needed to approximate the posterior probability distribution
$\PP\left(\textbf{K}|\widetilde{\textbf{C}}\right)$.  Additional
details concerning MCMC approaches can be found in monographs
\cite{Kaipo2006,Robert2004}.
\begin{widetext}
  \begin{algorithm}[H]
    \SetAlgoLined
    \KwIn{~~$M$ \textbf{---} Number of samples to be drawn from the posterior distribution\\
~~~~~~~~~~~~~$\mathcal{Q}_1\left(\textbf{K}^*|\textbf{K}\right)$ \textbf{---} Proposal density of the first trial\\
~~~~~~~~~~~~~$\boldsymbol{\mathcal{Y}}_1$ \textbf{---} initial point for the random walk in the space $\mathbb{R}^8$\\
~~~~~~~~~~~~~$\boldsymbol{\sigma}_1, \boldsymbol{\sigma}_2$ \textbf{---} scales defining the random walk}
    \KwOut{$\PP\left(\textbf{K}|\widetilde{\textbf{C}}\right)$ \textbf{---} Posterior probability distribution}
    \SetAlgoVlined
    \vspace{0.2cm}\hrule\vspace{0.2cm}
    $n \leftarrow 1$\\
    \Repeat(){$M$ samples are drawn}{
      Propose a step: $\boldsymbol{\xi} \sim \mathcal{U}\left[ -\boldsymbol{\sigma}_1,\boldsymbol{\sigma}_1 \right] $\\
      Propose a candidate: $\boldsymbol{\mathcal{Y}}_1 = \textbf{K}^{n-1} + \boldsymbol{\xi} $\\
      Accept the proposed step with probability $\alpha_1$:
      \begin{equation}
        \begin{split}
          \alpha_1 \left(\boldsymbol{\mathcal{Y}}_1,\textbf{K}^{n-1}\right) \propto \min \Bigg\{ 1,\frac{\exp(-\mathscr{J}\left(\boldsymbol{\mathcal{Y}}_1\right))}{\exp(-\mathscr{J}\left(\textbf{K}^{n-1}\right))} \Bigg\}
        \end{split}
      \end{equation}\\
      Draw a random number: $r \sim \mathcal{U}\left[0,1\right] $\\
      \eIf(){$\alpha_1\left(\boldsymbol{\mathcal{Y}}_1,\textbf{K}^{n-1}\right)  < r $}{
        Propose a new step with scale $\boldsymbol{\sigma}_2$: $\boldsymbol{\xi} \sim \mathcal{U}\left[ -\boldsymbol{\sigma}_2,\boldsymbol{\sigma}_2 \right] $\\
        Propose a new candidate:  $\boldsymbol{\mathcal{Y}}_2 = \textbf{K}^{n-1} + \boldsymbol{\xi} $\\
        Accept the new proposed point with probability $\alpha_2$:
        \begin{equation}
          \alpha_2 \left(\textbf{K}^{n-1},\boldsymbol{\mathcal{Y}}_1,\boldsymbol{\mathcal{Y}}_2\right) = \min\Bigg\{1, \frac{\PP \left(\boldsymbol{\mathcal{Y}}_2|\widetilde{\textbf{C}}\right) \mathcal{Q}_1\left(\boldsymbol{\mathcal{Y}}_2|\boldsymbol{\mathcal{Y}}_1\right)\left[\max\left(0, 1-\frac{\PP\left(\boldsymbol{\mathcal{Y}}_1|\widetilde{\textbf{C}}\right)}{\PP\left(\boldsymbol{\mathcal{Y}}_2|\widetilde{\textbf{C}}\right)}\right)\right]}{\PP\left(\ \textbf{K}^{n-1} |\widetilde{\textbf{C}}\right) \mathcal{Q}_1\left(\boldsymbol{\mathcal{Y}}_1|\textbf{K}^{n-1}\right)\left[\max\left(0, 1-\frac{\PP\left(\boldsymbol{\mathcal{Y}}_1|\widetilde{\textbf{C}}\right)}{\PP\left(\textbf{K}^{n-1}|\widetilde{\textbf{C}}\right)}\right)\right]}\Bigg\}
          \label{eq:20}
        \end{equation}\\
        Draw a random number: $r \sim \mathcal{U}\left[0,1\right] $\\
        \eIf(){$\alpha_2\left(\textbf{K}^{n-1},\boldsymbol{\mathcal{Y}}_1,\boldsymbol{\mathcal{Y}}_2\right) > r $}{$\textbf{K}^n \leftarrow \boldsymbol{\mathcal{Y}}_2$\\ $n \leftarrow n+1$}(){discard $\boldsymbol{\mathcal{Y}}_2$}
      }(){$\textbf{K}^n \leftarrow \boldsymbol{\mathcal{Y}}_1$\\ $n \leftarrow n+1$}
      
    }()
    Construct the posterior probability distribution
    \caption{ \textsc{Random walk delayed rejection algorithm} }
    \label{alg:MCMC}
  \end{algorithm}
\end{widetext}

In our computations reported in Section \ref{sec:results_rates} we
employ an uniform positive prior $\PP\left(\textbf{K}\right)$, whereas
the scales of the first and second trial of the two-step delayed
rejection algorithm are defined as the initial guess
$\boldsymbol{\mathcal{Y}}_1$ multiplied by a factor of 0.1 and 0.01,
respectively. The total number of
samples in the Markov chain is $M = 10^5$.

\section{Results}
\label{sec:results}

In this section we first determine the optimal structure of the
closure models given in Table \ref{tab:closures} by solving
optimization problem \eqref{eq:min} for each type of 3-cluster in the
set $\Theta$, cf.~\eqref{eq:Theta}, as described in Section
\ref{sec:optimal}. Then, based on these results, in Section
\ref{sec:interpret} we propose a new closure model which we refer to
as \textit{Sparse Approximation} (SA) and in Section
\ref{sec:predict} we assess the predictive capability of the
considered models by analyzing how accurately they predict the time
evolution of 3-cluster concentrations for a range of different
stoichiometries. Finally, we determine the reaction rates in the
truncated model \eqref{eq:dCc} closed with the pair approximation,
optimal approximation and sparse approximation using
Bayesian inference to solve problem \eqref{eq:minJ}, as described in
Section \ref{sec:Bayes}.

\subsection{Optimal Closures}
\label{sec:results_opt}

Parameters of the closure relations given in Table \ref{tab:closures}
are determined separately for each cluster type by solving problem
\eqref{eq:min} and the obtained results are collected in the form of
the values of the exponents in Table \ref{tab:exp}, where, for
comparison we also show the exponents corresponding to the pair
approximation, cf.~Section \ref{sec:pair}. We recall that for each
3-cluster type problem \eqref{eq:min} is solved with both soft and
hard regularization. In Table \ref{tab:exp}, the optimal results are
presented for solving problem \eqref{eq:min} subject to hard
regularization ($\beta_1 = 0$, $\delta=\beta_2=2$) by separately
fitting the closure models to the data obtained for two systems with
Li$_{1/2}$Mn$_{1/2}$ and Li$_{1/3}$Mn$_{2/3}$.  The first system is
interesting since, as we shall see below, due to the symmetry in the
concentrations of Li and Mn, closure models calibrated based on the
data from this system are particularly robust with respect to
different stoichiometries. The second system is considered in our
analysis due to its interesting behaviour at low temperatures where
physically relevant crystalline microstructure are obtained, as
discussed in Section \ref{sec:problem}. This system is also used as a
benchmark in \cite{Harris2017}. In Table \ref{tab:exp} we note that
most of the exponents in the optimal closure approximation tend to be
different from the corresponding exponents in the pair
approximation. Interestingly, we observe that many exponents obtained
for the optimal closure by fitting to the data for the system
Li$_{1/2}$Mn$_{1/2}$ are equal to zero or one, opening the possibility
of finding a simpler closure model to be investigated in Section
\ref{sec:interpret}.

The accuracy of representing the concentrations of 3-clusters based on
2-cluster concentrations is investigated for different closure
approximations in Figure \ref{fig:OA3OA2} for 4 representative triplet
types, namely, $\overline{+++}$, $\overline{-+-}$, $\widehat{---}$ and
$\widetriangle{---}$.  A significant improvement is evident for most
cluster types when the optimal closure is used. This is confirmed in
quantitative terms in Figure \ref{fig:OAPA_error} showing the
mean-square error \eqref{eq:Ji} for each $3$-cluster type for the pair
approximation and the optimal closure fitted to Li$_{1/2}$Mn$_{1/2}$
and Li$_{1/3}$Mn$_{2/3}$ systems.  For both systems and for almost all
3-cluster types the optimal closure leads to a more accurate
description with errors \eqref{eq:Ji} smaller by a few orders of
magnitude than when the pair approximation is used. In the next
section we will simplify the obtained optimal closure and will propose
an interpretation of the resulting structure.

\begin{table}[h!]
  \begin{ruledtabular}
    \begin{tabular}{ c c c c c c c c c c c c c}
      \multirow{2}{*}{Triplet Type}& PA & OA-1/3 & OA-1/2 & PA & OA-1/3 & OA-1/2 & PA & OA-1/3 & OA-1/2 & PA & OA-1/3 & OA-1/2\\
      \cline{2-2}
      \cline{3-3}
      \cline{4-4}
      \cline{5-5}
      \cline{6-6}
      \cline{7-7}
      \cline{8-8}
      \cline{9-9}
      \cline{10-10}
      \cline{11-11}
      \cline{12-12}
      \cline{13-13}
      & \multicolumn{3}{c}{$\gamma_1$}&\multicolumn{3}{c}{$\xi_1$}&& \\
      \cline{2-4}
      \cline{5-7}
      $\overline{+++}$      & 2 & 1.12 & 1.00  & 1 & 0.12 & 0.00    & - & - & - & - & - & -        \\[1mm]
      $\overline{---}$      & 2 & 1.19 & 1.00    & 1 & 0.19 & 0.00    & - & - & - & - & - & -        \\[1mm]
      $\widehat{+++}$       & 2 & 1.00 & 1.00 & 1 & 0.00 & 0.00  & - & - & - & - & - & -                   \\[1mm]
      $\widehat{---}$       & 2 & 1.39 & 1.00 & 1 & 0.39 & 0.00       & - & - & - & - & - & -        \\[1mm]
      $\widetriangle{+++}$  & 3 & 2.00 & 2.00   & 3 & 0.99 & 0.99     & - & - & - & - & - & -              \\[1mm]
      $\widetriangle{---}$  & 3 & 1.76 & 1.18 & 3 & 0.76 & 0.18 & - & - & - & - & - & -        \\[1mm]
      & \multicolumn{3}{c}{$\gamma_1$}&\multicolumn{3}{c}{$\xi_1$}&\multicolumn{3}{c}{$\xi_2$}& \\
      \cline{2-4}
      \cline{5-7}
      \cline{8-10}
      $\overline{+-+}$      & 2 & 1.00 & 0.99 & 1 & 0.00 & 0.00 & 0 & 0.00 & 0.00        & - & - & -             \\[1mm]
      $\overline{-+-}$      & 2 & 2.00 & 0.99 & 1 & 0.99 & 0.00 & 0 & 0.00 & 0.00 & - & - & -                    \\[1mm]
      $\widehat{+-+}$       & 2 & 1.00 & 1.00 & 1 & 0.00 & 0.00 & 0 & 0.00 & 0.00 & - & - & -                    \\[1mm]
      $\widehat{-+-}$       & 2 & 0.99 & 1.00 & 1 & 0.00 & 0.00 & 0 & 0.00 & 0.00 & - & - & -                    \\[1mm]
      & \multicolumn{3}{c}{$\gamma_1$}&\multicolumn{3}{c}{$\gamma_2$}&\multicolumn{3}{c}{$\xi_1$}&\multicolumn{3}{c}{$\xi_2$} \\
      \cline{2-4}
      \cline{5-7}
      \cline{8-10}
      \cline{11-13}
      $\overline{++-}$      & 1 & 2.00 & 0.00 & 1 & 0.00 & 1.00 & 1 & 0.00 & 0.00 & 0 & 1.00 & 0.00                    \\[1mm]
      $\overline{--+}$      & 1 & 0.38 & 0.00 & 1 & 0.62 & 1.00 & 1 & 0.00 & 0.00 & 0 & 0.00 & 0.00              \\[1mm]
      $\widehat{++-}$       & 1 & 2.00 & 0.72 & 1 & 0.52 & 0.28 & 1 & 1.52 & 0.00 & 0 & 0.00 & 0.00        \\[1mm]
      $\widehat{--+}$       & 1 & 0.66 & 0.15 & 1 & 0.34 & 0.85 & 1 & 0.00 & 0.00 & 0 & 0.00 & 0.00 \\[1mm]
      $\widetriangle{++-}$  & 1 & 2.00 & 0.00 & 2 & 0.00 & 1.00 & 2 & 0.00 & 0.00 & 1 & 0.99 & 0.00                    \\[1mm]
      $\widetriangle{--+}$  & 1 & 0.60 & 0.00 & 2 & 0.40 & 1.00 & 1 & 0.00 & 0.00 & 2 & 0.00 & 0.00             \\[1mm]
    \end{tabular}
    \caption{Exponents defining the optimal closure models, cf.~Table
      \ref{tab:closures}, found by solving problem \eqref{eq:min} with
      hard regularization ($\beta_1 = 0$, $\beta_2=\delta=2$) based on
      the data for the system Li$_{1/3}$Mn$_{2/3}$ (OA-1/3) and the
      system Li$_{1/2}$Mn$_{1/2}$ (OA-1/2) for each 3-cluster type
      indicated in the first column.  For comparison, the exponents
      characterizing the pair approximation (PA) are also shown. The
      results are rounded to two decimal places.}
    \label{tab:exp}
  \end{ruledtabular}
\end{table}

\begin{figure}
  \centering
    \begin{subfigure}[b]{0.45\textwidth}
  	\centering
  	\includegraphics[width=\textwidth]{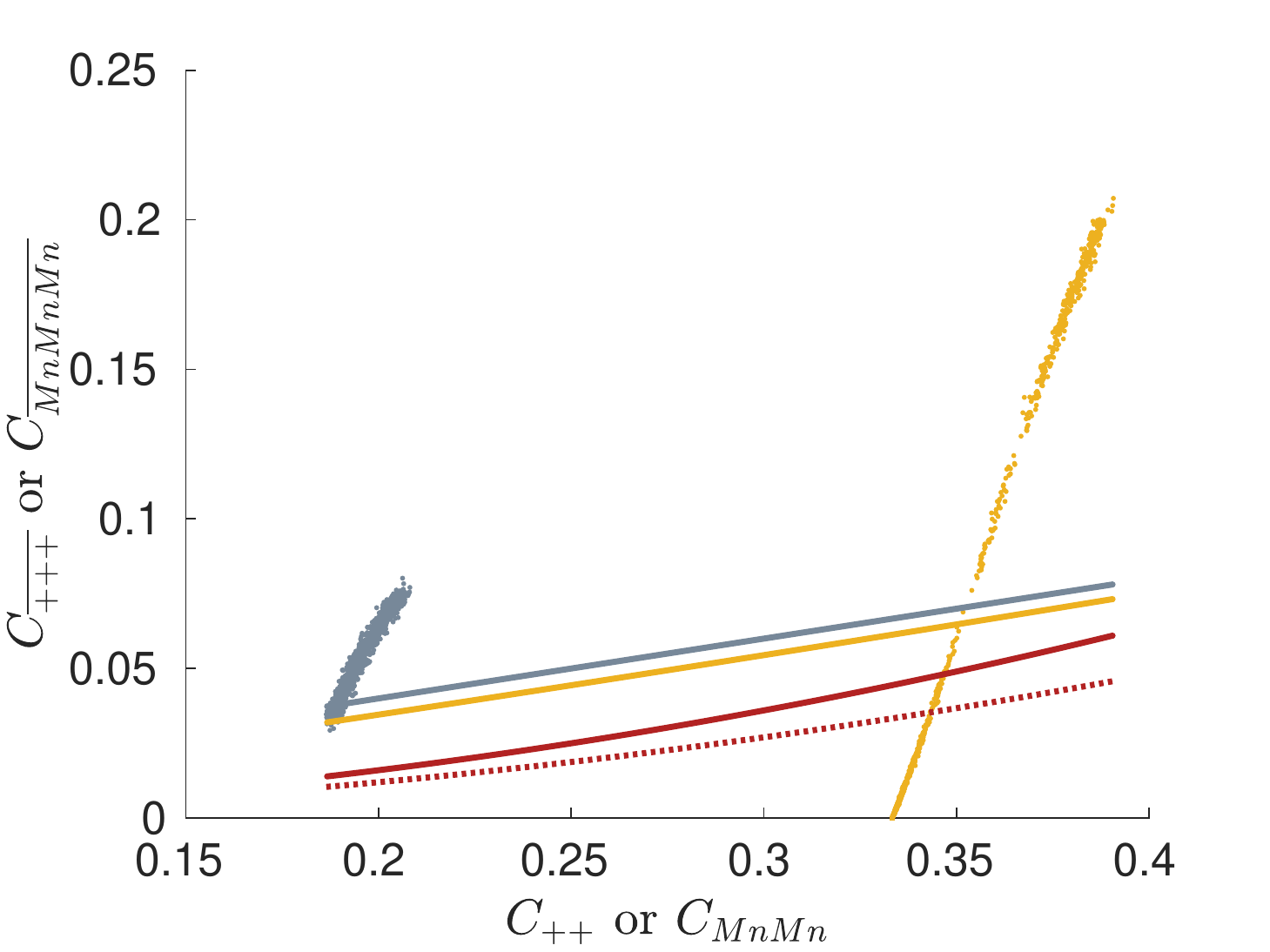}
  	\caption{}
  	\end{subfigure}
    \begin{subfigure}[b]{0.45\textwidth}
      \centering
      \includegraphics[width=\textwidth]{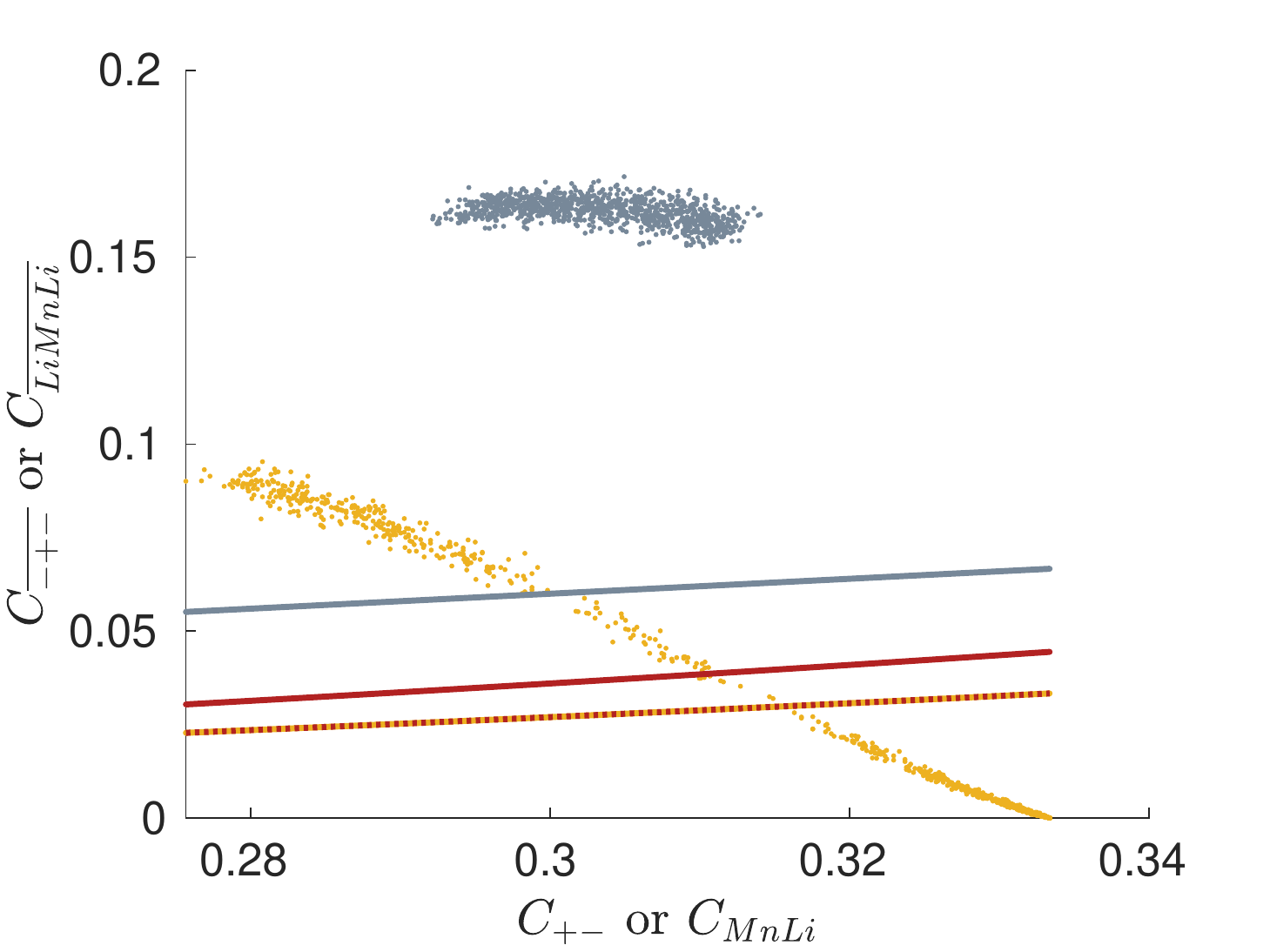}
      \caption{}
    \end{subfigure}
    \begin{subfigure}[b]{0.45\textwidth}
      \centering
      \includegraphics[width=\textwidth]{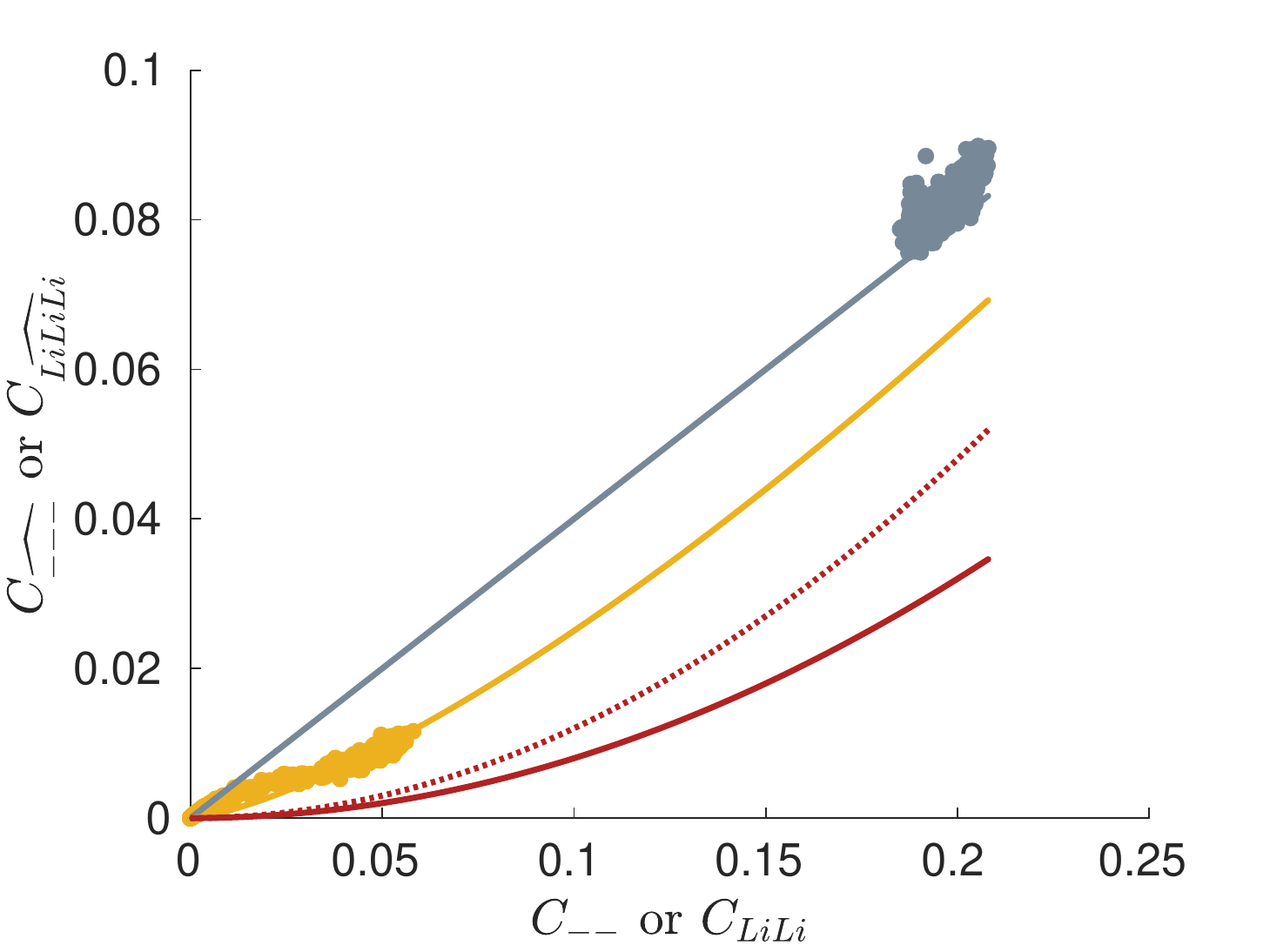}
      \caption{}
    \end{subfigure}
    \begin{subfigure}[b]{0.45\textwidth}
      \centering
      \includegraphics[width=\textwidth]{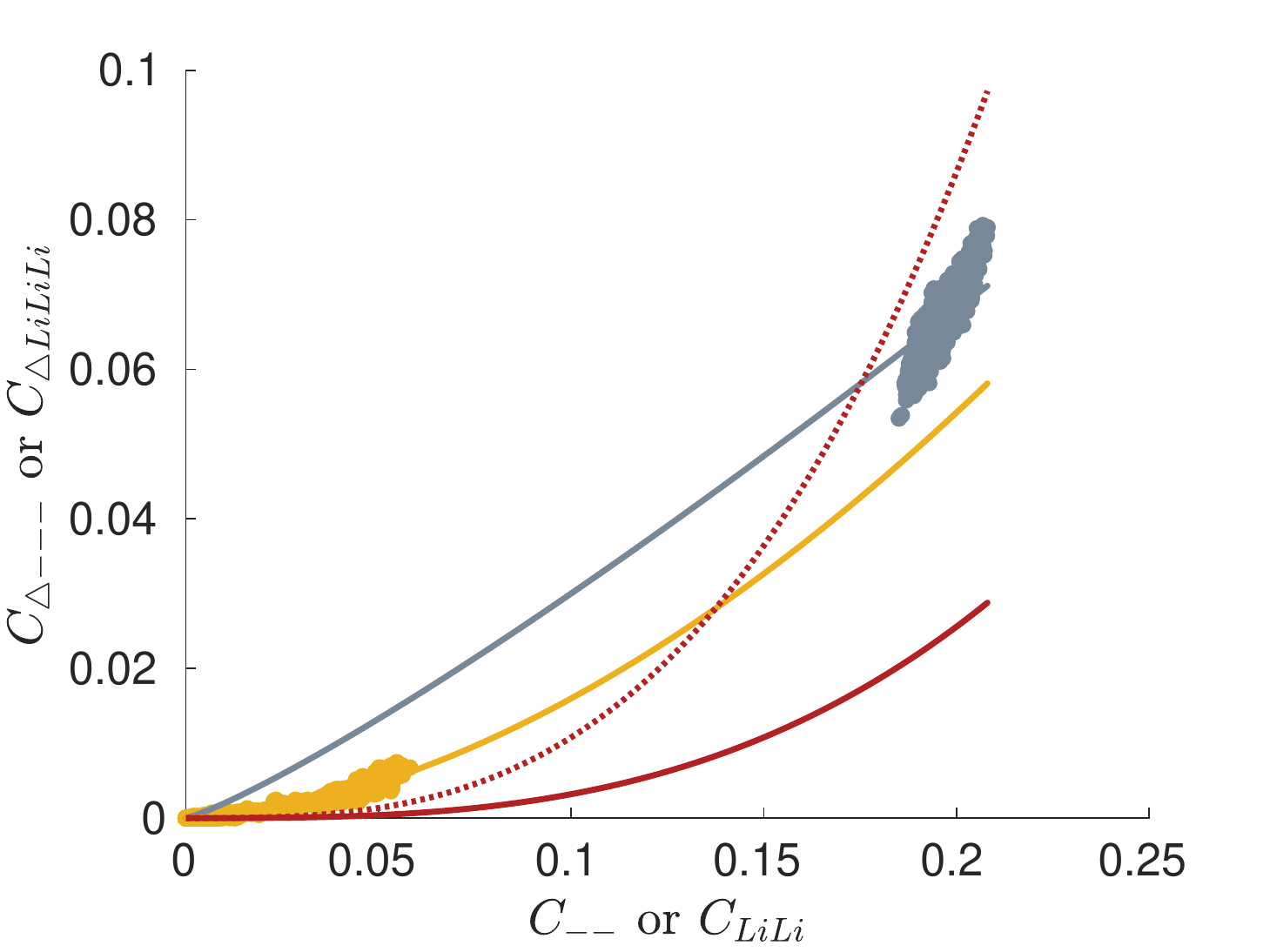}
      \caption{}
    \end{subfigure}
    \caption{Experimental triple concentrations for (gray symbols) the
      system Li$_{1/2}$Mn$_{1/2}$ and (yellow symbols) the system
      Li$_{1/3}$Mn$_{2/3}$ as functions of the corresponding pair
      concentrations for the 3-cluster types: $\overline{+++}$ (a),
      $\overline{-+-}$ (b), $\widehat{---}$ (c) and
      $\widetriangle{---}$ (d).  The corresponding reconstructions of
      triple concentrations from lower-order concentrations obtained
      via the optimal approximation and the pair approximation are
      shown with the grey lines and red solid lines for the system
      Li$_{1/2}$Mn$_{1/2}$, and with the yellow and red dotted lines
      for the system Li$_{1/3}$Mn$_{2/3}$. Note that the yellow and
      red dotted lines overlap in (b).}
  \label{fig:OA3OA2}
\end{figure}

\begin{figure}
	\centering
	\mbox{
		\begin{subfigure}[b]{0.5\textwidth}
			\centering
			\includegraphics[width=\textwidth]{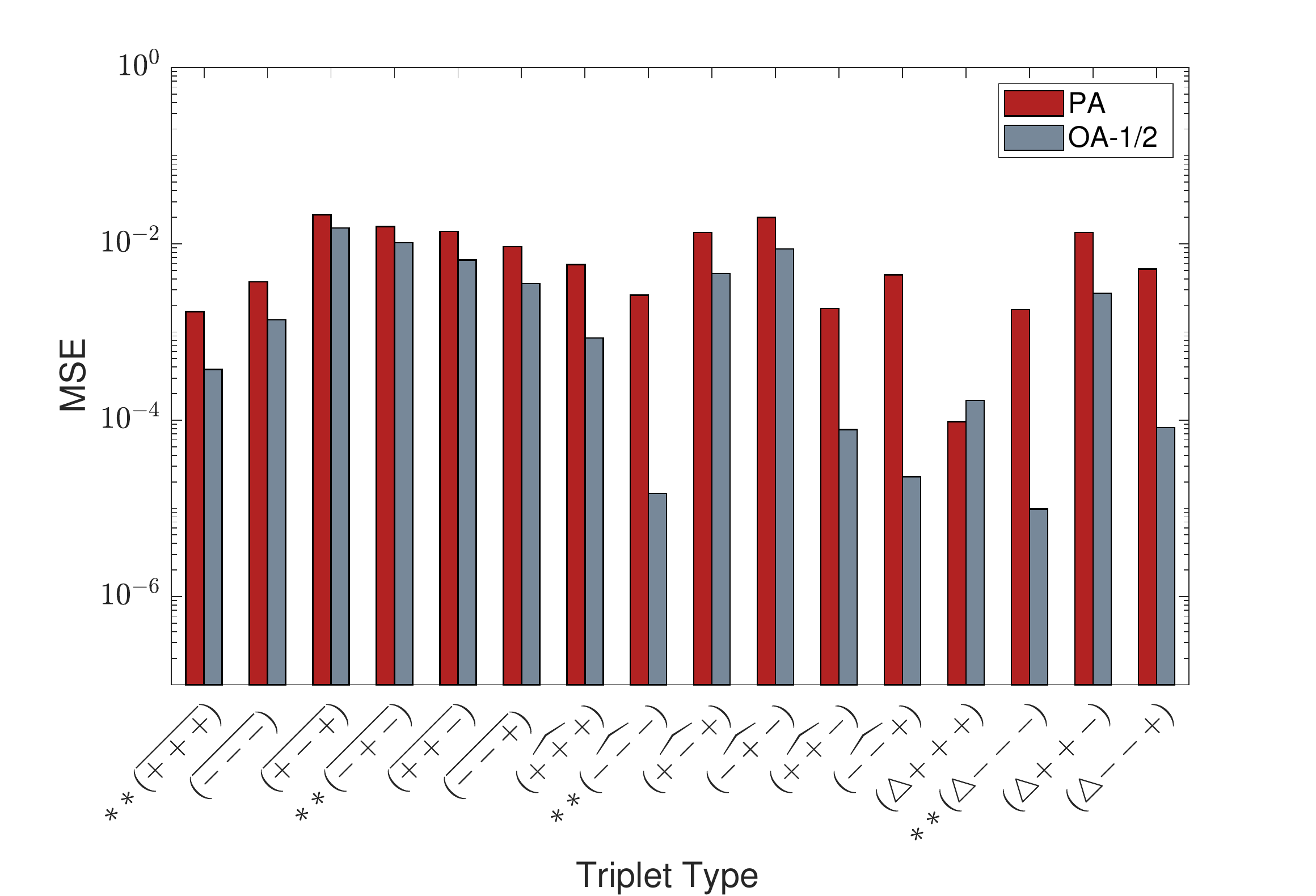}
			\caption{}
		\end{subfigure}
		\begin{subfigure}[b]{0.5\textwidth}
			\centering
			\includegraphics[width=\textwidth]{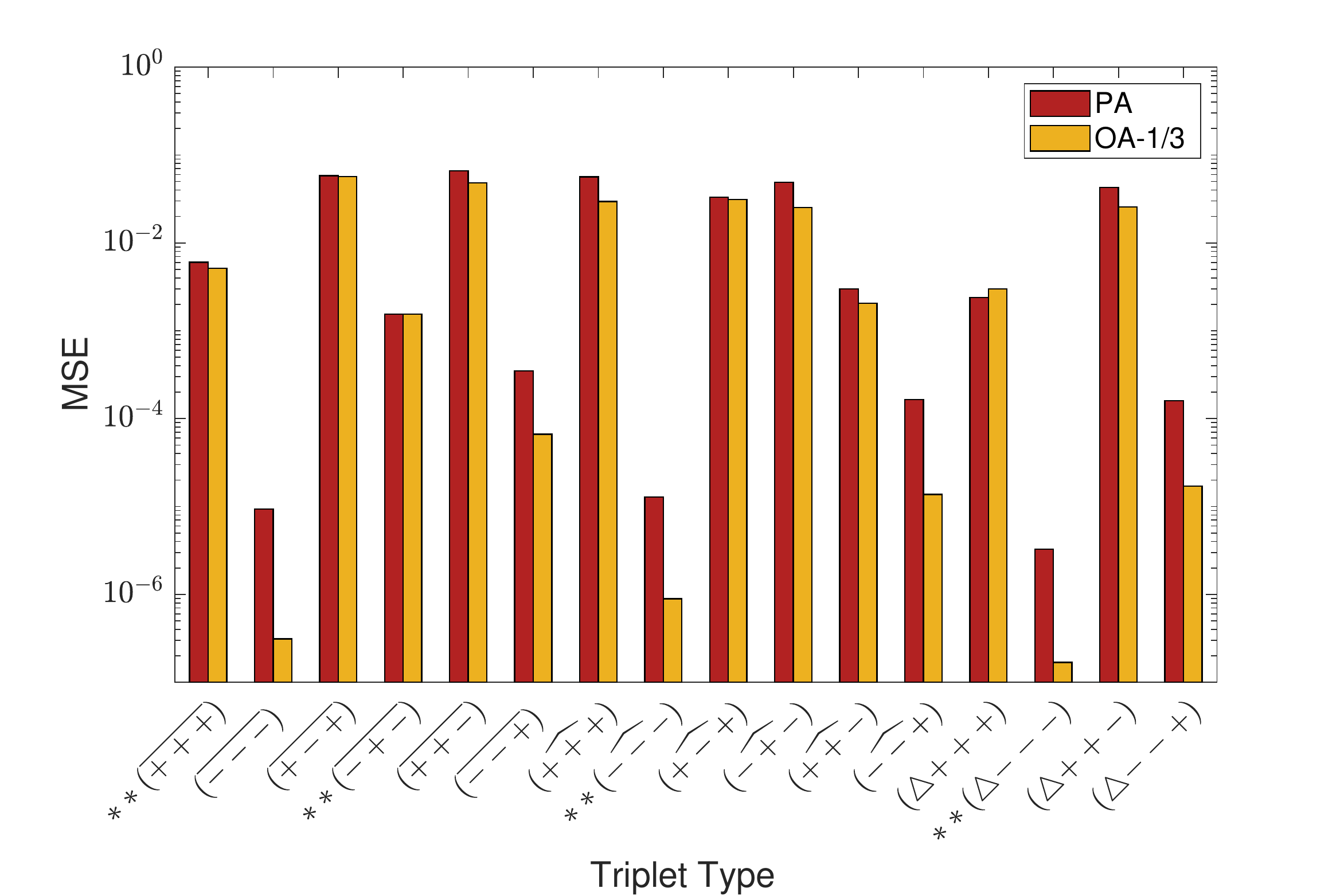}
			\caption{}
	\end{subfigure}}
      \caption{The mean-square errors \eqref{eq:Ji} for the pair
        approximation (PA) and optimal approximation subject to hard
        regularization for (a) the system Li$_{1/2}$Mn$_{1/2}$
        (OA-1/2) and (b) for the system Li$_{1/3}$Mn$_{2/3}$
        (OA-1/3). Predictions of the closure models for the triplet
        types marked with $(**)$ are analyzed in Figure
        \ref{fig:OA3OA2}.}
	\label{fig:OAPA_error}
\end{figure}

     \FloatBarrier

\subsection{Sparse Approximation and its Interpretation}
\label{sec:interpret}

In this section we investigate the exponents characterizing the
optimal closure presented in Table \ref{tab:exp}. As can be observed,
many exponents in the optimal closure relations are equal or close to
zero and this trend is more pronounced in the optimal closure obtained
by fitting the data for the symmetric system Li$_{1/2}$Mn$_{1/2}$
(when an exponent is zero, then the closure relation does not depend
on the corresponding 2-cluster concentration).  Thus, as is evident
from Table \ref{tab:newexp}, the resulting structure of the closure is
much simpler (``sparser'') for the optimal approximation than for the
closure obtained based on the pair approximation. More specifically,
note that for all triplet types, except for $(\widehat{++-})$,
$(\widehat{--+})$, $(\widetriangle{+++})$ and $(\widetriangle{---})$,
the optimal closure depends on the concentration of one 2-cluster
only. In order to make the structure of the closure model more uniform
which will facilitate its interpretation, we adjust the expressions
which do not follow the pattern. More specifically, in the optimal
closure relations for the clusters $(\widehat{--+})$ and
$(\widetriangle{---})$ the exponents are rounded up and down to the
nearest integer, whereas for $(\widehat{++-})$ and
  $(\widetriangle{+++})$ the change is more significant and involves
  adjusting the structure of the closure relation. We refer to this
simplified closure model as the {\em Sparse Approximation}
(SA) and its functional form is presented in Table \ref{tab:newexp}.

\begin{table}[h!]
	\begin{ruledtabular}
		\begin{tabular}{ c c c c }
			Triplet Type & Pair Approximation & Optimal Approximation & Sparse Approximation \\
			\cline{1-1}
			\cline{2-2}
			\cline{3-3}
			\cline{4-4}
			$\overline{+++}$      & $\frac{1}{5} \frac{C_{++}^{2} }{C_{+}}$                   & $\frac{1}{5} C_{++}$ & $\frac{1}{5} C_{++}$    \\[1mm]
			$\overline{---}$      & $\frac{1}{5} \frac{C_{--}^{2} }{C_{-}}$                   & $\frac{1}{5} C_{--}$ & $\frac{1}{5} C_{--}$    \\[1mm]
			$\overline{+-+}$      & $\frac{1}{5} \frac{C_{+-}^{2}}{ C_{-}}$                   & $\frac{1}{5} C_{+-}$ & $\frac{1}{5} C_{+-}$    \\[1mm]
			$\overline{-+-}$      & $\frac{1}{5} \frac{C_{+-}^{2}}{C_{+}}$                    & $\frac{1}{5} C_{+-}$ & $\frac{1}{5} C_{+-}$    \\[1mm]
			$\overline{++-}$      & $\frac{1}{5} \frac{C_{++} C_{+-}}{C_{+}}$                 & $\frac{1}{5} C_{+-}$ & $\frac{1}{5} C_{+-}$    \\[1mm]
			$\overline{--+}$      & $\frac{1}{5} \frac{C_{--} C_{+-}}{C_{-}}$                 & $\frac{1}{5} C_{+-}$ & $\frac{1}{5} C_{+-}$ \\[1mm]
			$\widehat{+++}$       & $\frac{2}{5} \frac{C_{++}^2}{C_{+}}$                      & $\frac{2}{5} C_{++}$ & $\frac{2}{5} C_{++}$    \\[1mm]
			$\widehat{---}$       & $\frac{2}{5} \frac{C_{--}^{2}}{C_{-}}$                    & $\frac{2}{5} C_{--}$ & $\frac{2}{5} C_{--}$    \\[1mm]
			$\widehat{+-+}$       & $\frac{2}{5} \frac{C_{+-}^{2}}{C_{-}}$                    & $\frac{2}{5} C_{+-}$ & $\frac{2}{5} C_{+-}$    \\[1mm]
			$\widehat{-+-}$       & $\frac{2}{5} \frac{C_{+-}^{2}}{C_{+}}$                    & $\frac{2}{5} C_{+-}$ & $\frac{2}{5} C_{+-}$    \\[1mm]
			$\widehat{++-}$       & $\frac{2}{5} \frac{C_{++} C_{+-}}{C_{+}}$                 & $\left[\frac{2}{5} C_{++}^{0.72} C_{+-}^{0.28}\right]$     & $\frac{2}{5} C_{+-}$        \\[1mm]
			$\widehat{--+}$       & $\frac{2}{5} \frac{C_{--} C_{+-}}{C_{-}}$                 & $\left[\frac{2}{5} C_{--}^{0.15} C_{+-}^{0.85}\right]$     &  $\frac{2}{5} C_{+-}$    \\[1mm]
			$\widetriangle{+++}$  & $\frac{2}{5} \frac{C_{++}^{3}}{C_{+}^{3}}$                & $\left[\frac{2}{5} \frac{C_{++}^{2}}{C_{+}}\right]$                  & $\frac{2}{5} C_{++}$     \\[1mm]
			$\widetriangle{---}$  & $\frac{2}{5} \frac{C_{--}^{3}}{C_{-}^{3}}$                & $\left[\frac{2}{5} \frac{C_{--}^{1.18}} {C_{-}^{0.18}}\right]$   & $\frac{2}{5} C_{--}$  \\[1mm]
			$\widetriangle{++-}$  & $\frac{2}{5} \frac{C_{++} C_{+-}^{2}}{C_{+}^{2} C_{-}}$   & $\frac{2}{5} C_{+-}$ & $\frac{2}{5} C_{+-}$     \\[1mm]
			$\widetriangle{--+}$  & $\frac{2}{5} \frac{C_{--} C_{+-}^{2}}{C_{+}C_{-}^{2}}$    & $\frac{2}{5} C_{+-}$ & $\frac{2}{5} C_{+-}$    \\[1mm]
		\end{tabular}
		\caption{Closure relations for 3-clusters of different
                  types derived based on the pair approximation, the
                  optimal approximation using the data for the system
                  Li$_{1/2}$Mn$_{1/2}$, cf.~Table \ref{tab:exp}, and
                  the sparse approximation discussed in
                  Section \ref{sec:interpret}.}
		\label{tab:newexp}
	\end{ruledtabular}
\end{table}

\begin{figure}
	\centering
	\includegraphics[width=\textwidth]{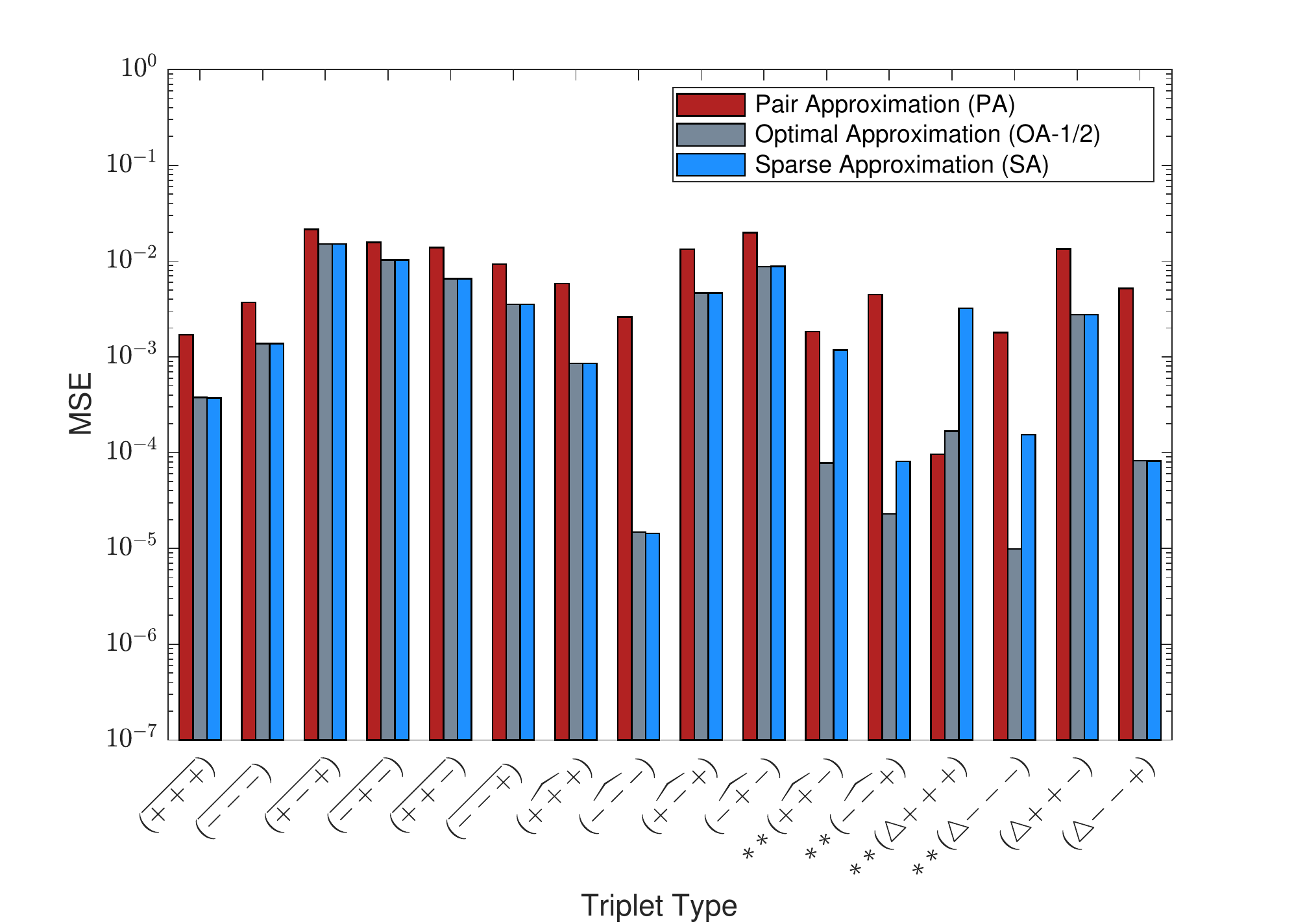}
	\caption{The mean-square reconstructions errors \eqref{eq:Ji}
          for the pair approximation, the optimal approximation
          constructed subject to hard regularization based on the data
          for the system Li$_{1/2}$Mn$_{1/2}$ and for the
          corresponding sparse approximation for different
          cluster types, cf.~Table \ref{tab:newexp}. Note that the
          results for the last two closures differ only for the
          clusters marked with $(**)$.}
	\label{fig:PA_OA_AOA}
\end{figure}

We now comment on how to interpret the structure of the sparse
approximation.  As discussed in Section \ref{sec:pair}, the pair
approximation model neglects the correlation between non-nearest
neighbour elements. This is due to the lack of information about the
triple correlation term $T_{ijk}$ in \eqref{eq:CijkA}. Considering
relations \eqref{eq:Cijk} for the sparse approximation, the
triplet correlation term is $T_{ijk} = \frac{C_j}{C_{ij}}$ for the
linear and angled triplets, and $T_{ijk} =
\frac{C_{i}C_{j}C_{k}}{C_{ij}C_{jk}}$ for the triangular triplets.
This is contrary to the assumption that $T_{ijk}=1$ which is central
to the pair approximation. With the data in Table \ref{tab:newexp} we
are now in the position to refine the assumptions underlying these
approximations.  Referring to relations \eqref{eq:Cijk}, the
concentration of the triplet ($C_{ijk}$) can be written as the global
pair concentration ($C_{ij}$) times the conditional probability of
finding a nearest-neighbour element to the pair in a certain state
($P_{k|ij}$). Considering the linear and angled triplets in the
sparse approximation formulation, we obtain
\begin{equation}
	\label{eq:Cijk_lin}
	\begin{aligned}
	C_{ijk} = C_i C_j C_k Q_{ij} Q_{jk} T_{ijk} = C_i C_j C_k \frac{C_{ij}}{C_i C_j} \frac{C_{jk}}{C_j C_k}  \frac{C_j}{C_{ij}} = C_{ij} \frac{C_{jk}/C_j}{C_{ij}/C_j} = C_{ij} \frac{P_{k|j}}{P_{i|j}}.
	\end{aligned}
\end{equation}
In a similar way one can consider the triangular triplets where
\begin{equation}
\label{eq:Cijk_tri}
\begin{aligned}
C_{ijk} = C_i C_j C_k Q_{ij} Q_{jk} Q_{ik} T_{ijk} = C_{ij} \frac{C_{ik}/C_i}{C_{ij}/C_i} = C_{ij} \frac{P_{k|i}}{P_{j|i}}.
\end{aligned}
\end{equation}
We thus deduce 
\begin{subequations}
\label{eq:Pijk}
\begin{align}
P_{k|ij} = \frac{P_{k|j}}{P_{i|j}}, ~~~~~ \text{for linear and angled clusters},\label{eq:Pijka}\\
P_{k|ij} = \frac{P_{k|i}}{P_{j|i}}, ~~~~~~~~~~~~~~~ \text{for triangular clusters}.\label{eq:Pijkb}
\end{align}
\end{subequations}
These relations break down the probability of a $3$-cluster in terms
of probabilities of two $2$-clusters. They can be regarded as
generalizations of the pair approximation model, cf. relation
\eqref{eq:PA_assumption}, with the inclusion of a term in the
denominator. To understand the meaning of this extension of the pair
approximation, we refer to relation \eqref{eq:PA_assumption}. It is
clear that closure is achieved using the pair approximation by
assuming that the conditional probability of an element $k$ being a
nearest-neighbour of $j$ is equal to the conditional probability of
$k$ being a nearest-neighbour of an $ij$ pair. In other words, the
pair approximation model assumes that an element $j$ is always a
nearest-neighbour of $i$, and we cannot find an element $j$ which is
not a nearest-neighbour of $i$. However, we know that this simplifying
assumption is not correct in general and there is always a possibility
of finding an element $j$ which is not a nearest-neighbour of $i$. By
re-arranging relation \eqref{eq:Pijka} in the form
$P_{k|j} = P_{i|j} P_{k|ij}$, it is evident that the SA model assumes
that $j$ might not always be a nearest-neighbour of $i$ and accounts
for this possibility through the term $P_{i|j}$. A similar
interpretation can be adopted for triangular clusters.

The accuracy of the optimal approximation is certainly affected when
the exponents in the closure relations for the four triplet types are
adjusted as discussed above, cf.~Table \ref{tab:newexp}. Figure
\ref{fig:PA_OA_AOA} shows the reconstruction errors for triplet
concentrations obtained using different closure models for the system
Li$_{1/2}$Mn$_{1/2}$. As can be expected, the SA model is less
accurate in comparison to the OA model for the triplets
$(\widehat{++-})$, $(\widehat{--+})$, $(\widetriangle{+++})$ and
$(\widetriangle{---})$. However, the performance of SA model is still
better than that of the pair approximation model for the triplets
$(\widehat{++-})$, $(\widehat{--+})$ and $(\widetriangle{---})$. To
conclude, the adjustments to the OA model sacrifice a degree of the
accuracy in reconstructing the triplet concentration for
$(\widetriangle{+++})$ while achieving a simpler and interpretable
model.

As a result of the simple structure of the SA closure, cf.~Table
\ref{tab:newexp}, system \eqref{eq:dCc} closed with this model becomes
linear and hence analytically solvable. It takes the form
\begin{subequations}
	\begin{align}
	\frac{d}{dt} C_{++} = 2\alpha_1 C_{+-},
	\label{eq:dC++AOA}\\[2mm]
	\frac{d}{dt} C_{--} = 2\alpha_2 C_{+-},
	\label{eq:dC--AOA}\\[2mm]
	\frac{d}{dt} C_{+-} = \left( -\alpha_1 - \alpha_2 \right) C_{+-},
	\label{eq:dC+-AOA}
	\end{align}
	\label{eq:dCcAOA}
\end{subequations}
where the parameters
$\alpha_1 = \frac{4}{5} k_1 + \frac{1}{5} k_2 - \frac{4}{5} k_3 -
\frac{1}{5} k_4$ and
$\alpha_2 = \frac{4}{5} k_5 + \frac{1}{5} k_6 - \frac{4}{5} k_7 -
\frac{1}{5} k_8$ are linear combinations of the reaction rates. The
solution then is
\begin{subequations}
	\begin{alignat}{2}
	C_{+-}(t) &= \mu_1 e^{(-\alpha_1-\alpha_2)t}, & \qquad\quad \mu_1 &= C_{+-_{0}},
	\label{eq:analytic+-}\\
	C_{++}(t) &= \frac{2\alpha_1 \mu_1}{-\alpha_1-\alpha_2}
        e^{(-\alpha_1-\alpha_2)t} + \mu_2, &  \mu_2 &= C_{++_{0}} - \frac{2\alpha_1\mu_1}{-\alpha_1-\alpha_2},
	\label{eq:analytic++}\\
	C_{--}(t) &= \frac{2\alpha_2 \mu_1}{-\alpha_1-\alpha_2}
        e^{(-\alpha_1-\alpha_2)t} + \mu_3, &   \mu_3 &= C_{--_{0}} - \frac{2\alpha_2\mu_1}{-\alpha_1-\alpha_2},
	\label{eq:analytic--}
	\end{alignat}
	\label{eq:analytic}
\end{subequations}
where $C_{+-_{0}}$, $C_{++_{0}}$ and $C_{--_{0}}$ are the initial
concentrations of the corresponding $2$-clusters. As is evident in
\eqref{eq:analytic}, the concentrations $C_{++}$ and $C_{--}$ decrease
exponentially in time with the decay rate
$-(\alpha_1+\alpha_2)$. These two parameters instead of eight reaction
rates $k_1$ to $k_8$ are sufficient to describe the evolution of
concentrations of different clusters in time. In addition to producing
an analytically solvable model, an advantage of the SA closure is
that the inverse problem \eqref{eq:minJ} also simplifies and needs to
be solved with respect to $\alpha_1$ and $\alpha_2$ only which does
not require Bayesian inference. Results will be presented in Section
\ref{sec:results_rates}.  

\FloatBarrier

\subsection{Prediction Capability of the Closure Models}
\label{sec:predict}

In order to assess the predictive capability of the truncated model
closed with the optimal approximation or the sparse
approximation, the $3$-cluster concentrations are reconstructed as
functions of time from $2$-cluster concentrations. We are interested
in evaluating the prediction accuracy of these models in comparison to
the model equipped with the pair approximation. In order to assess the
robustness of these predictions, we will do this for stoichiometries
other than the one for which the models were calibrated, cf.~Sections
\ref{sec:results_opt} and \ref{sec:interpret}. More specifically,
while the simulated annealing data for the system with the composition
Li$_{1/3}$Mn$_{2/3}$ was used for calibration, cf.~Figure
\ref{fig:SA}, accuracy of the models will be analyzed here for 10
different stoichiometries Li$_{x}$Mn$_{1-x}$,
$x \in \{0.25,0.30,0.33,0.36,0.42,0.50,0.58,0.64,0.70,0.75\}$. In
particular, we are interested in the effect of regularization --- soft
versus hard with different parameters $\delta$, $\beta_1$ and
$\beta_2$ --- in the solution of problem \eqref{eq:min}.

Robustness of the model performance will be assessed in terms of the
mean-square error \eqref{eq:Ji} averaged over all types of 3-clusters,
i.e.,
\begin{equation}
\mathcal{E} = \frac{1}{|\Theta|}\sum_{i \in \Theta}^{} J_i,
\label{eq:E}
\end{equation}
where $|\Theta| = 16$ is the total number of $3$-clusters,
cf.~\eqref{eq:Theta}, and the true $3$-cluster concentrations
$\widetilde{C}_i(t)$ are obtained from simulated annealing experiments
performed for each considered stoichiometry.  The corresponding
2-cluster concentrations are used to reconstruct the $3$-cluster
concentrations as a function of time for each triplet type via the
optimal and sparse closure approximations. Thus, this
diagnostic is designed to asses only the accuracy of the closure
relations given in Table \ref{tab:newexp}, rather than of the entire
truncated model \eqref{eq:dCc}.

Error \eqref{eq:E} is shown as function of the stoichiometry for the
optimal closure obtained for the system Li$_{1/3}$Mn$_{2/3}$ subject
to hard and soft regularization in Figures \ref{fig:E}a and
\ref{fig:E}b, respectively. In addition, in these figures we also show
the errors obtained with the model based on the pair approximation. As
can be observed, harder regularization results in larger prediction
errors for stoichiometries close to Li$_{1/3}$Mn$_{2/3}$ in comparison
to softer regularization strategies. On the other hand, harder
regularization reveals better predictive performance for
stoichiometries different from Li$_{1/3}$Mn$_{2/3}$. In other words,
less aggressive regularization performs better on stoichiometries
close to the stoichiometry for which the calibration of the closure
relations from Table \ref{tab:closures} was performed in Section
\ref{sec:results_opt}, and the performance gradually degrades as the
stoichiometries become more different from Li$_{1/3}$Mn$_{2/3}$. We
thus conclude that there is a trade-off between robustness and
accuracy of the closure models, in the sense that models optimized for
a particular stoichiometry tend to be less robust when used to
describe other stoichiometries.

Finally, robustness of the closures based on the pair approximation,
the optimal approximation subject to hard regularization for the
system Li$_{1/3}$Mn$_{2/3}$ and the corresponding sparse
approximation is compared for a range of stoichiometries in Figure
\ref{fig:E}. Note that solving the minimization problem \eqref{eq:min}
subject to hard regularization produces more versatile closure models
that can be applied to a range of stoichiometries without significant
loss of accuracy. Hence, the optimal approximation models of interest
are achieved by hard regularization in \eqref{eq:min}. Figure
\ref{fig:MSE} shows the mean error \eqref{eq:E} for a range of
stoichiometries for the three aforementioned closure models. A
significant improvement with respect to the performance of the pair
approximation model is achieved by the optimal closure models for all
stoichiometries. As can be observed, the SA model performs better
than the OA-1/3 model for most of the stoichiometries, except the ones
that are close to the system Li$_{1/3}$Mn$_{2/3}$.  This is due to the
fact that in the OA-1/3 model the minimization problem \eqref{eq:min}
is solved for the system Li$_{1/3}$Mn$_{2/3}$, and hence fits are more
accurate in the neighbourhood of this stoichiometry. We conclude by
noting that when averaged over all stoichiometries, the performance of
the sparse approximation model is improved by $36.13\%$ over
the performance of the pair approximation model.

\begin{figure}
	\centering
	\begin{subfigure}[b]{0.49\textwidth}
		\centering
		\includegraphics[width=\textwidth]{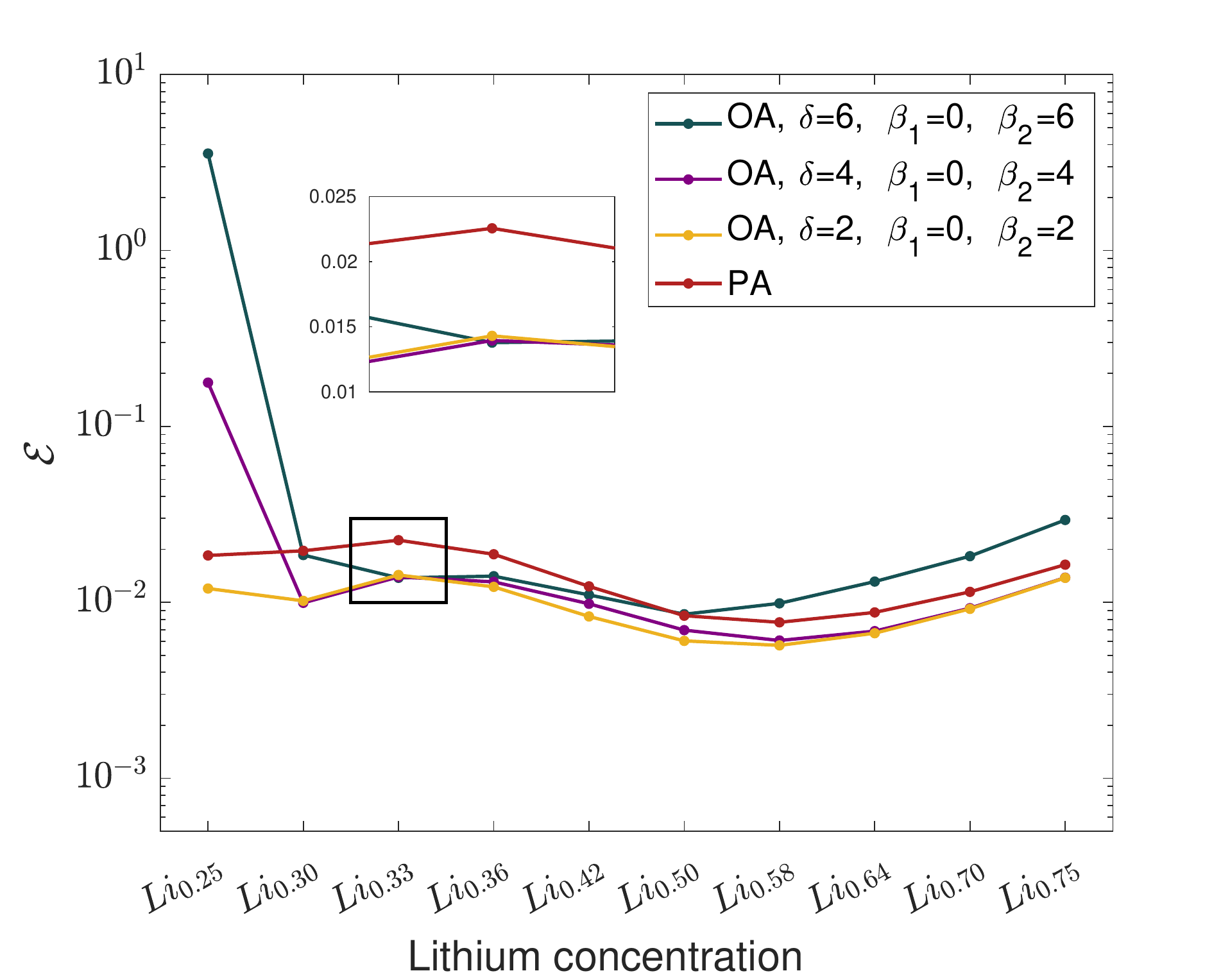}
		\caption{}
	\end{subfigure}
	\begin{subfigure}[b]{0.49\textwidth}
		\centering
		\includegraphics[width=\textwidth]{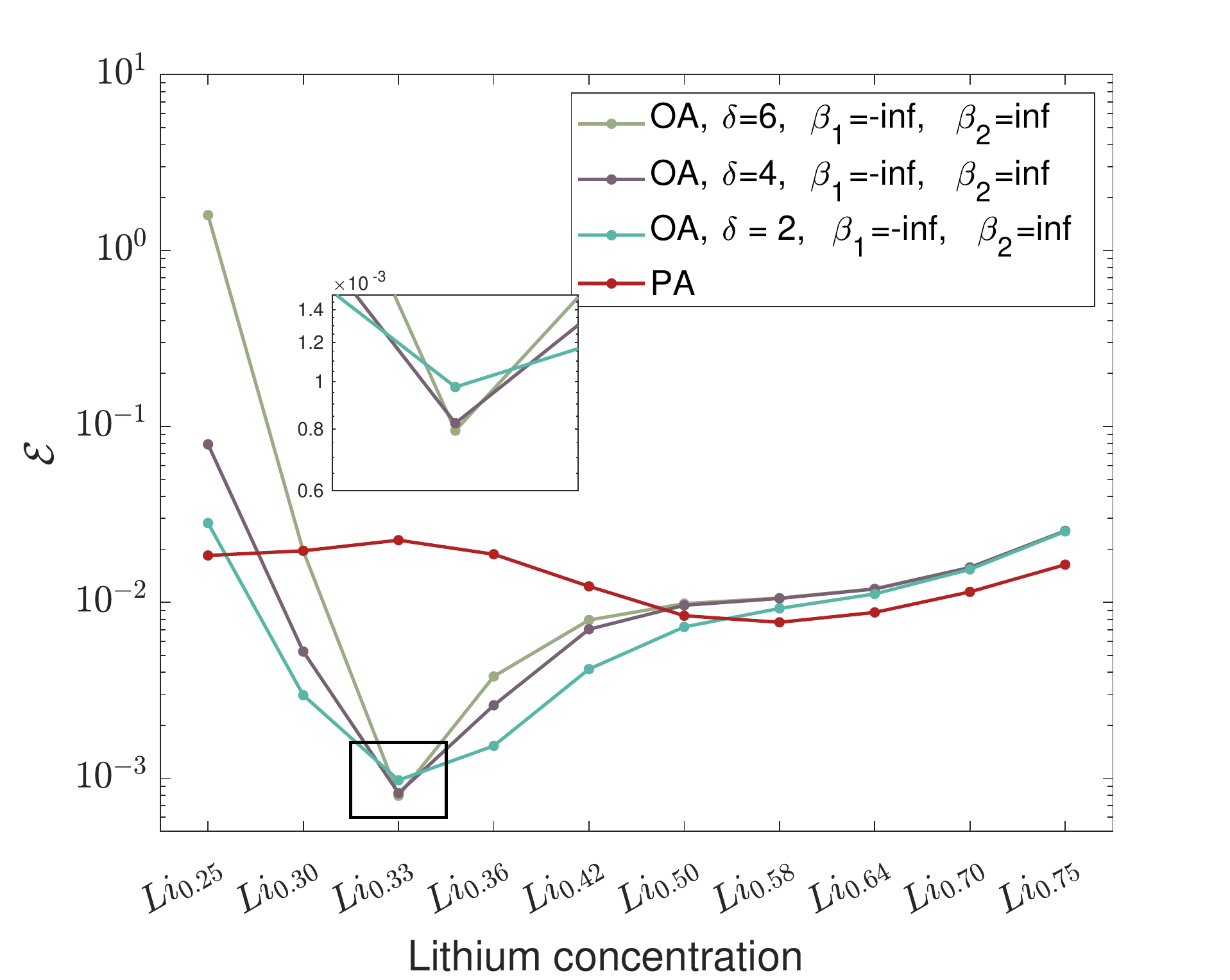}
		\caption{}
	\end{subfigure}
	\caption{Dependence of the mean error \eqref{eq:E}
          characterizing the accuracy of the different closure
          relations on the stoichiometry for (a) hard regularization
          and (b) soft regularization employed in the solution of
          optimization problem \eqref{eq:min} with parameters
          indicated in the legend for Li$_{1/3}$Mn$_{2/3}$ system.
          ``PA'' and ``OA'' refer to, respectively, the pair and the
          optimal approximation.}
	\label{fig:E}
\end{figure}
\begin{figure}
	\centering
	\begin{subfigure}[b]{0.6\textwidth}
		\centering
		\includegraphics[width=\textwidth]{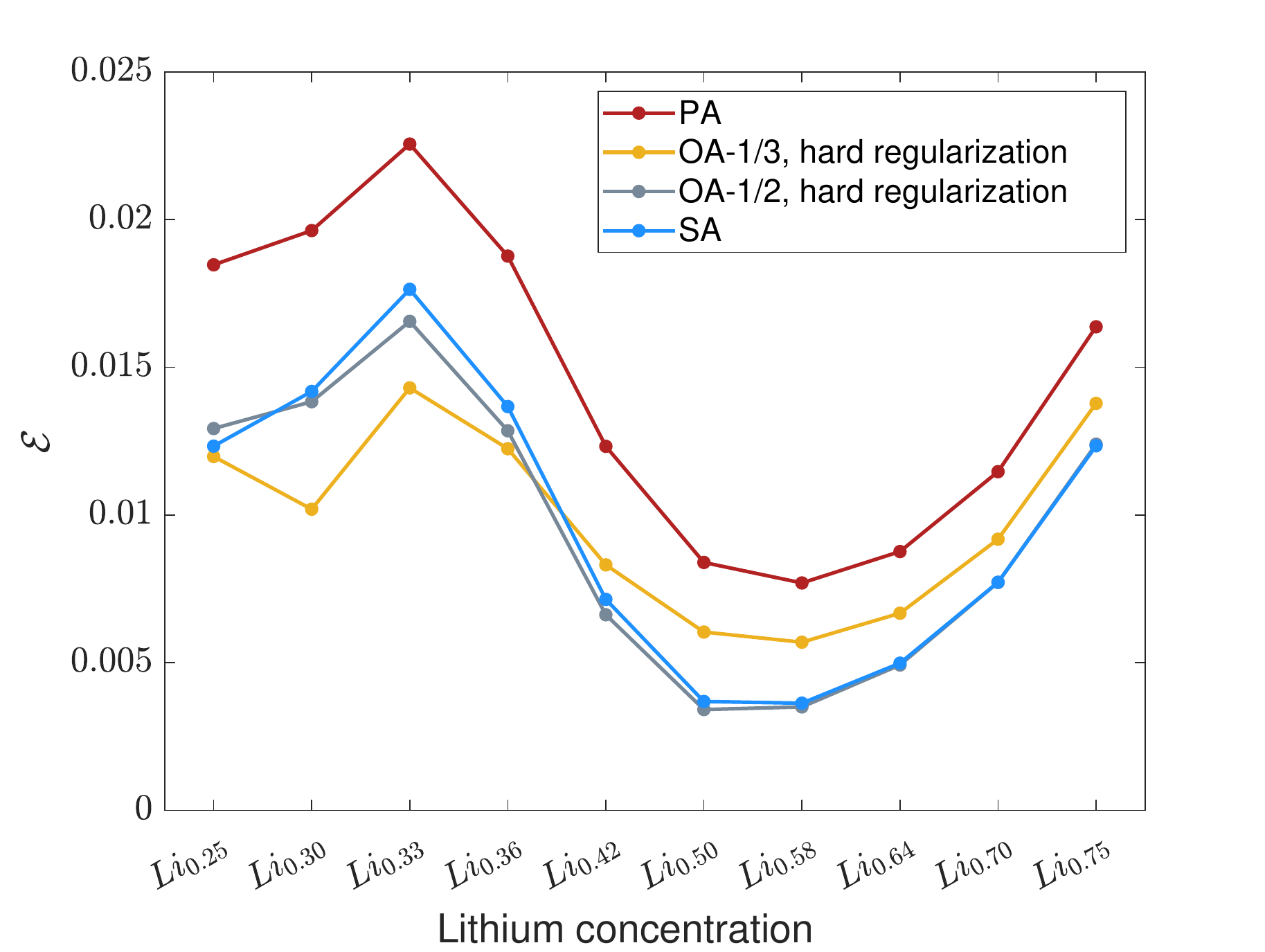}
	\end{subfigure}
	\caption{The mean error \eqref{eq:E} characterizing the
          accuracy of the different closure relations indicated in the
          legend for a range of different stoichiometries. }
	\label{fig:MSE}
\end{figure}

\FloatBarrier

\subsection{Inferring Reaction Rates}
\label{sec:results_rates}

The reaction rates $k_1,\dots,k_8$ in system \eqref{eq:dCc} are
determined in probabilistic terms using Bayesian inference for the
pair approximation and the optimal closure models.  On the other hand,
for the sparse approximation there are only two unknown
parameters ($\alpha_1$ and $\alpha_2$) so they can be inferred by
solving the problem
$\min_{(\alpha_1,\alpha_2) \in \RR^2} \J(\alpha_1,\alpha_2)$ where the
concentrations in the error functional are evaluated using the
closed-form relations \eqref{eq:analytic}. Although this minimization
problem is not convex, a global minimum can be found using standard
optimization methods.

In the problems involving the pair approximation and the optimal
closure models some of the reaction rates were found to be essentially
equal to zero (or vanishingly small), so here the results are
presented for the remaining rates only. In Figures
\ref{fig:posterior}a and \ref{fig:posterior}c we visualize the Markov
chains obtained with Algorithm \ref{alg:MCMC} for system
\eqref{eq:dCc} closed with, respectively, the pair approximation, the
optimal approximation with exponents determined subject to hard
regularization (OA-1/2), cf.~Table \ref{tab:exp}. The Cartesian
coordinates of each point in Figures \ref{fig:posterior}a,c represent
{three of the parameters} characterizing an individual Monte-Carlo
sample, whereas information about the remaining parameters is encoded
in the color of the symbol via the red-green-blue (RGB) mapping, as
shown in the color maps in Figures \ref{fig:posterior}b,d. The size of
the symbols is proportional to $\J(\bK)^{-1}$ such that parameter
values producing better fits stand out as they are represented with
larger symbols. Note that, for clarity, the entire Markov chains are
not presented in Figure \ref{fig:posterior} as the data is filtered
based on the value of the cost function (i.e., data points are shown
only if $\J(\bK)$ is smaller than some threshold).

It is evident from Figures \ref{fig:posterior}a,c that in each case
parameter values producing good fits form a number of
clusters, which reflects the fact that problem \eqref{eq:minJ} indeed
admits multiple local minima. It is also interesting to see that good
fits are obtained with some of the reaction rates varying by 200\% or
more which is a manifestation of the ill-posedness of problem
\eqref{eq:minJ} when the outputs $\bC(\bK)$ reveal weak dependence on
some of the parameters in $\bK$. In order to compare the quality of
fits obtained with the pair and optimal approximations, in Figures
\ref{fig:hist}a,b we show the histograms of the values of the error
functional $\J(\bK)$ obtained along the Markov chains. Overall, the
quality of the fits is comparable in both cases and exhibits
significant uncertainty, although poor fits appear more likely when
the closure based on the the pair approximation is used.  The optimal
parameter values for the closure based on the SA model are
$(\alpha_1^*,\alpha_2^*) = (-0.083,-0.166)$ and, as we can see in Figures
\ref{fig:hist}a,b, while the accuracy of the fit is lower than in the
previous two cases, there is effectively no uncertainty in the
determination of the parameters.

\begin{figure}
	\centering
	\begin{subfigure}[b]{0.42\textwidth}
		\centering
		\includegraphics[width=\textwidth]{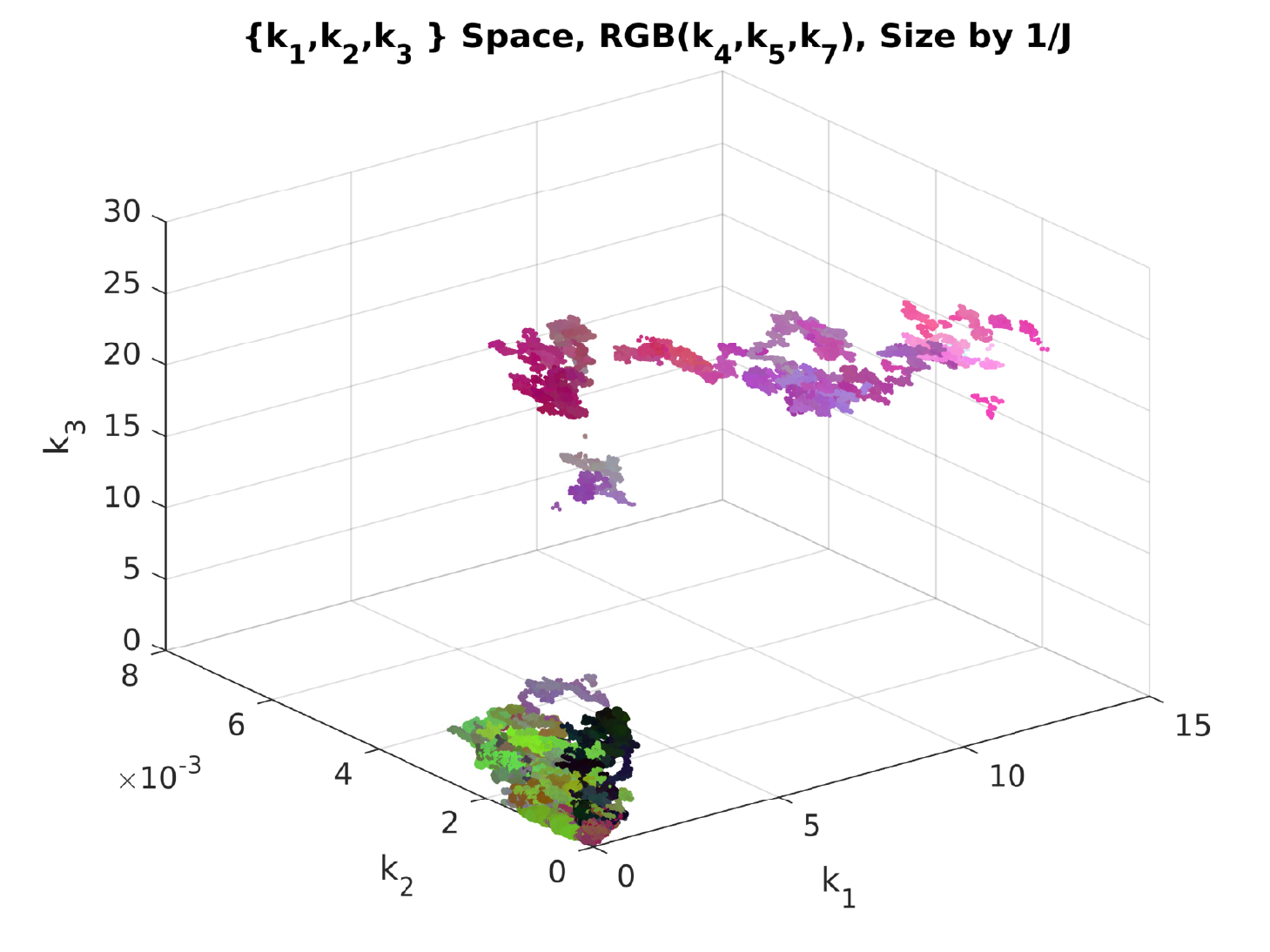}
		\caption{}
	\end{subfigure}
	\begin{subfigure}[b]{0.42\textwidth}
		\centering
		\includegraphics[width=\textwidth]{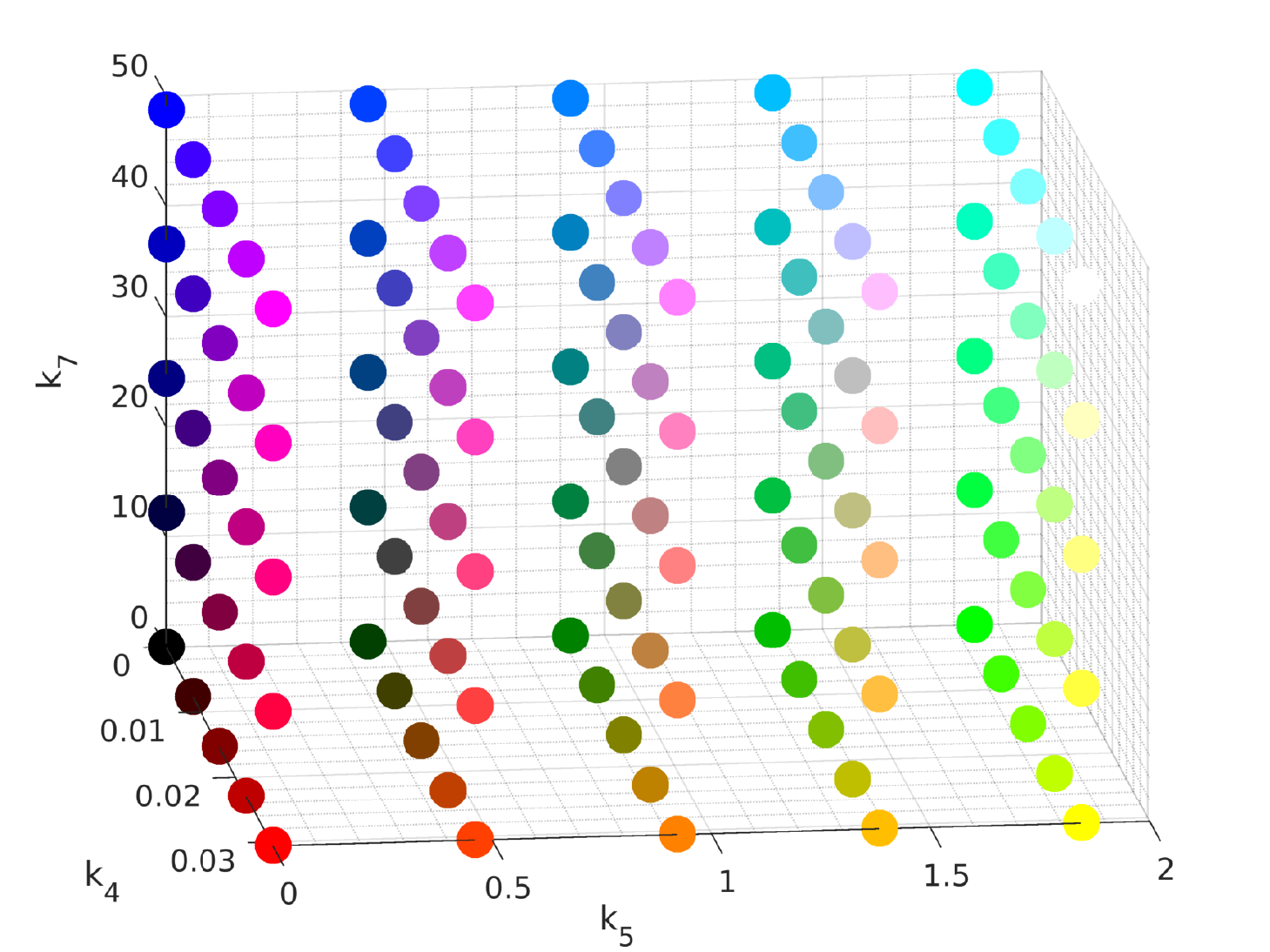}
		\caption{}
	\end{subfigure}
	\begin{subfigure}[b]{0.42\textwidth}
		\centering
		\includegraphics[width=\textwidth]{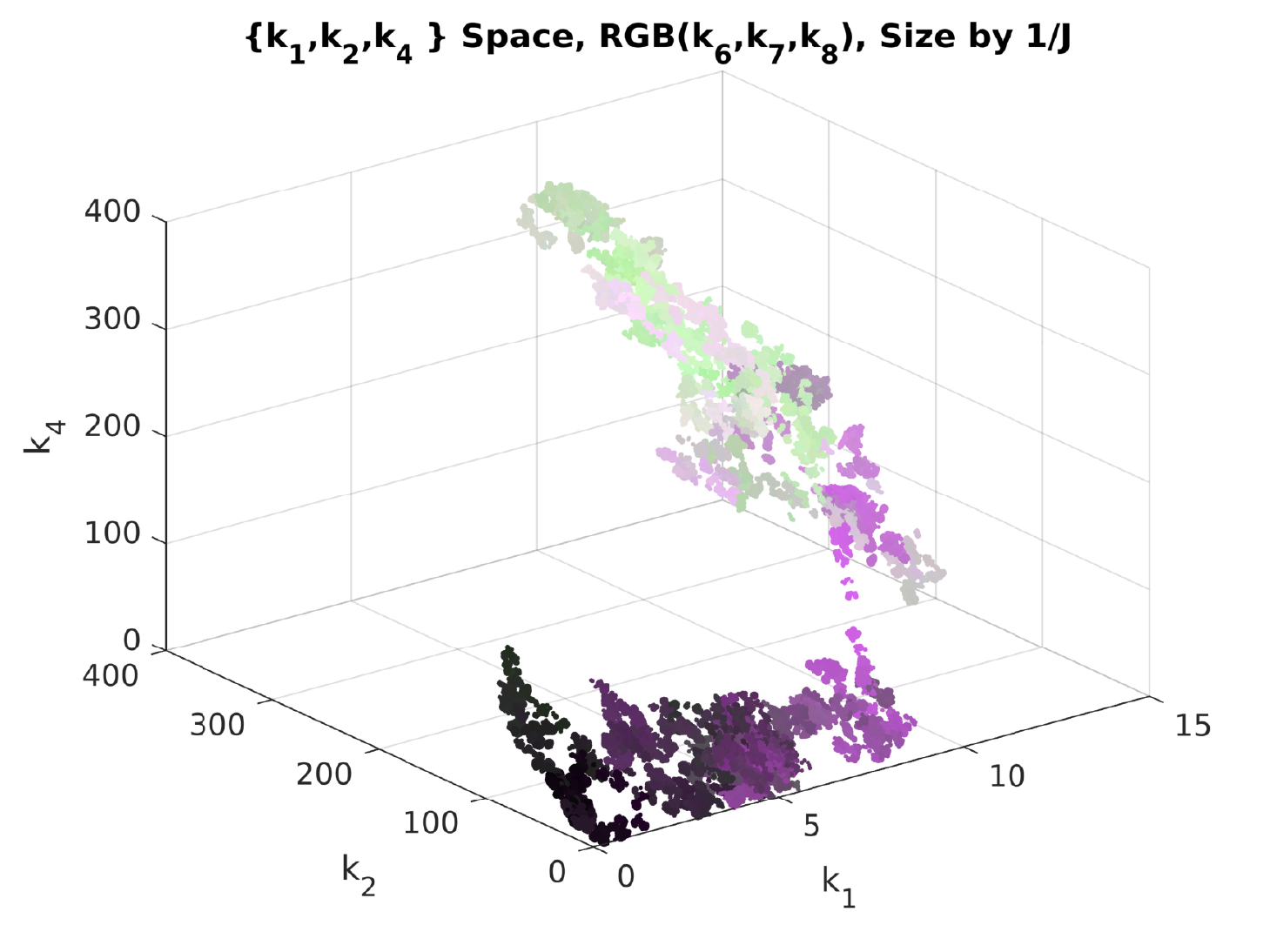}
		\caption{}
	\end{subfigure}
	\begin{subfigure}[b]{0.42\textwidth}
		\centering
		\includegraphics[width=\textwidth]{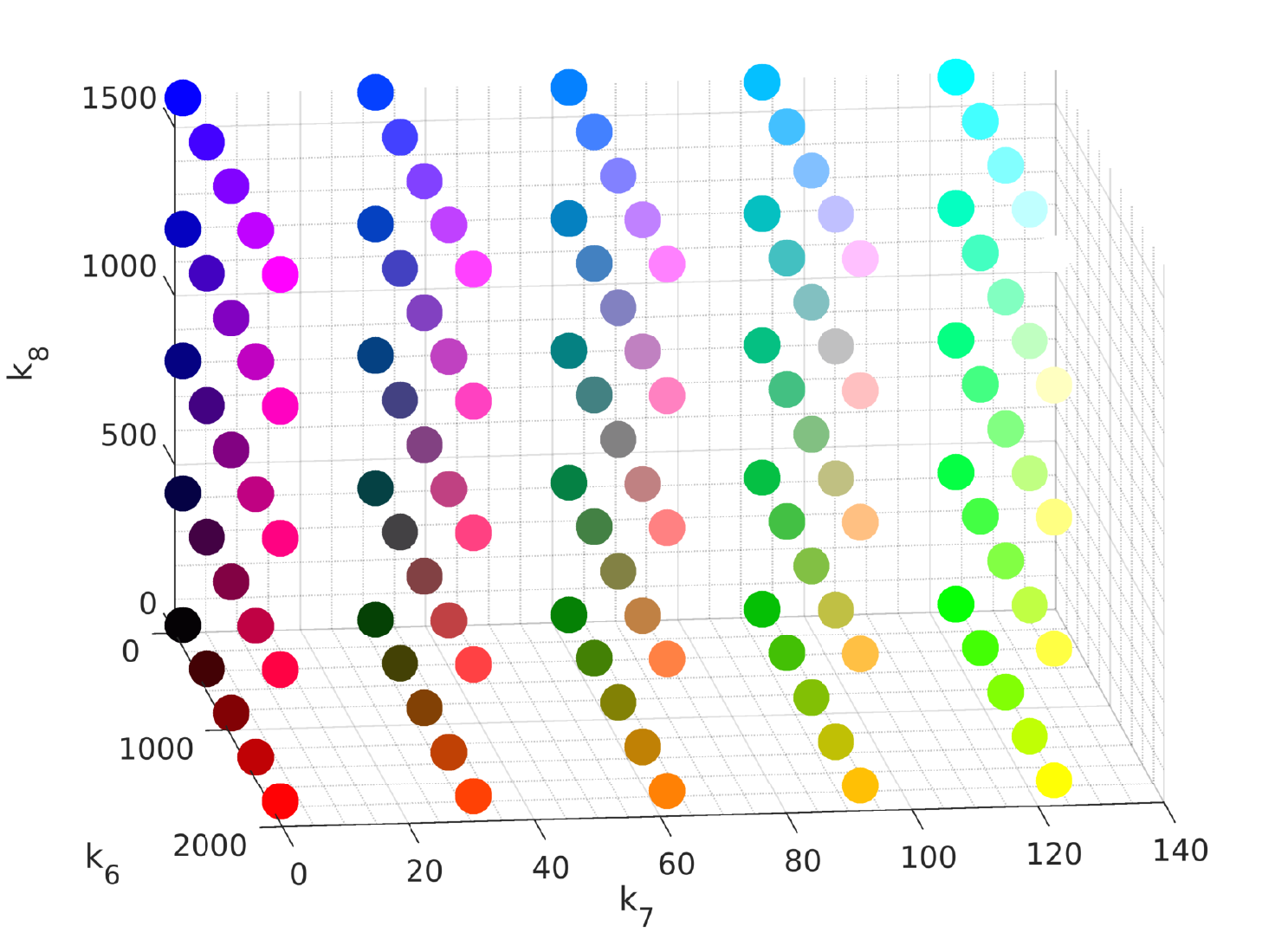}
		\caption{}
	\end{subfigure}
	\caption{Posterior probability densities
          $\mathbb{P}\left(\textbf{K}|\widetilde{\textbf{C}}\right)$
          obtained using Algorithm \ref{alg:MCMC} for problem
          \eqref{eq:minJ} with system \eqref{eq:dCc} closed using (a)
          the pair approximation and (c) the optimal approximation
          with exponents determined subject to hard regularization
          (OA-1/2).  The parameters $k_1$, $k_2$ and $k_3$ are
          represented in term of the Cartesian coordinates whereas the
          remaining three nonzero rate constants are encoded in terms
          of the color of the symbols via the color maps shown in
          panel (b) and (d).  The size of the symbols in panels (a)
          and (c) is proportional to $J(\bK)^{-1}$.}
	\label{fig:posterior}
\end{figure}

\begin{figure}
  \centering
  \mbox{
  \begin{subfigure}[b]{0.45\textwidth}
    \centering
    \includegraphics[width=\textwidth]{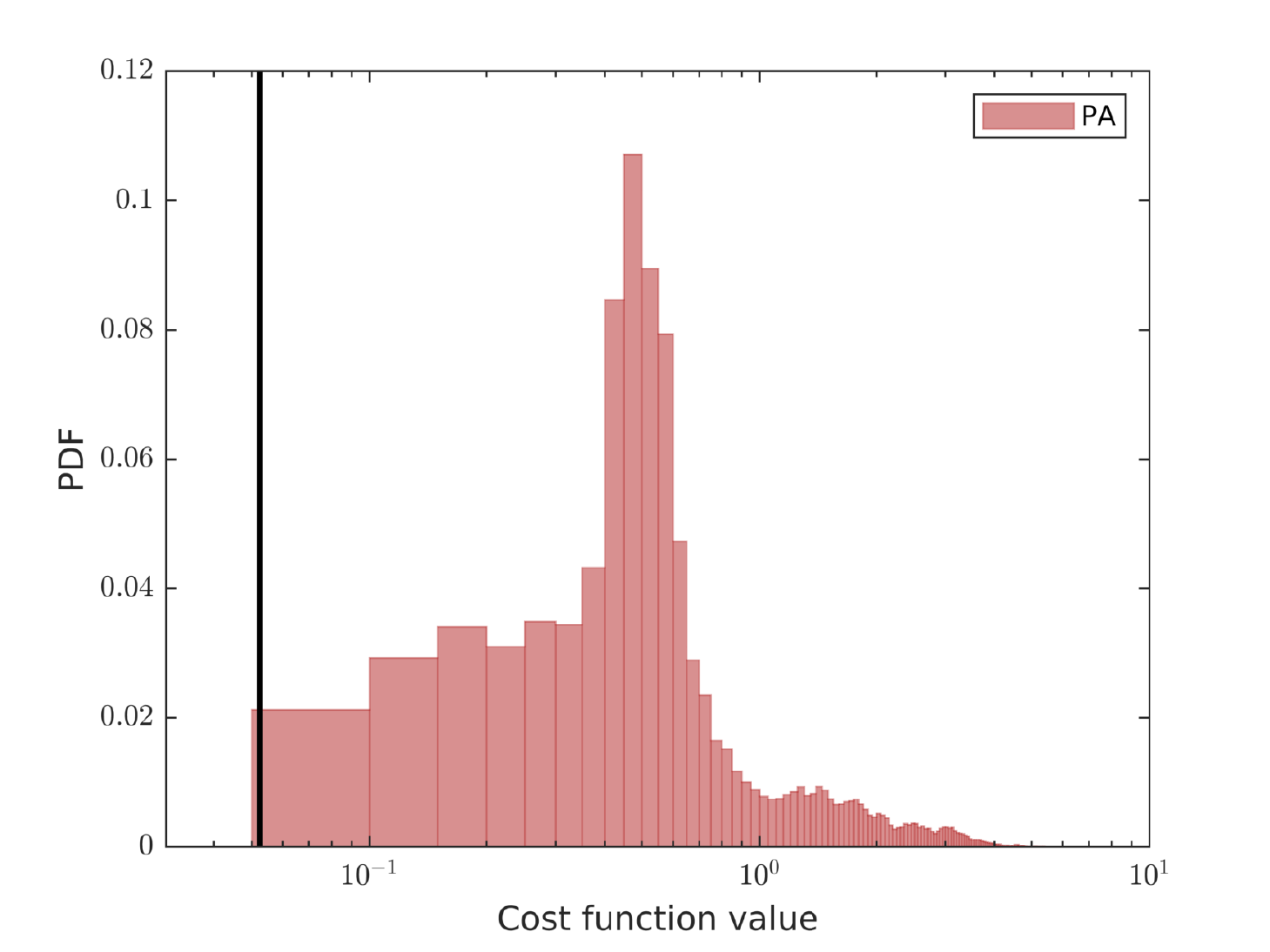} 
    \caption{}
  \end{subfigure}
  \begin{subfigure}[b]{0.45\textwidth}
    \centering
    \includegraphics[width=\textwidth]{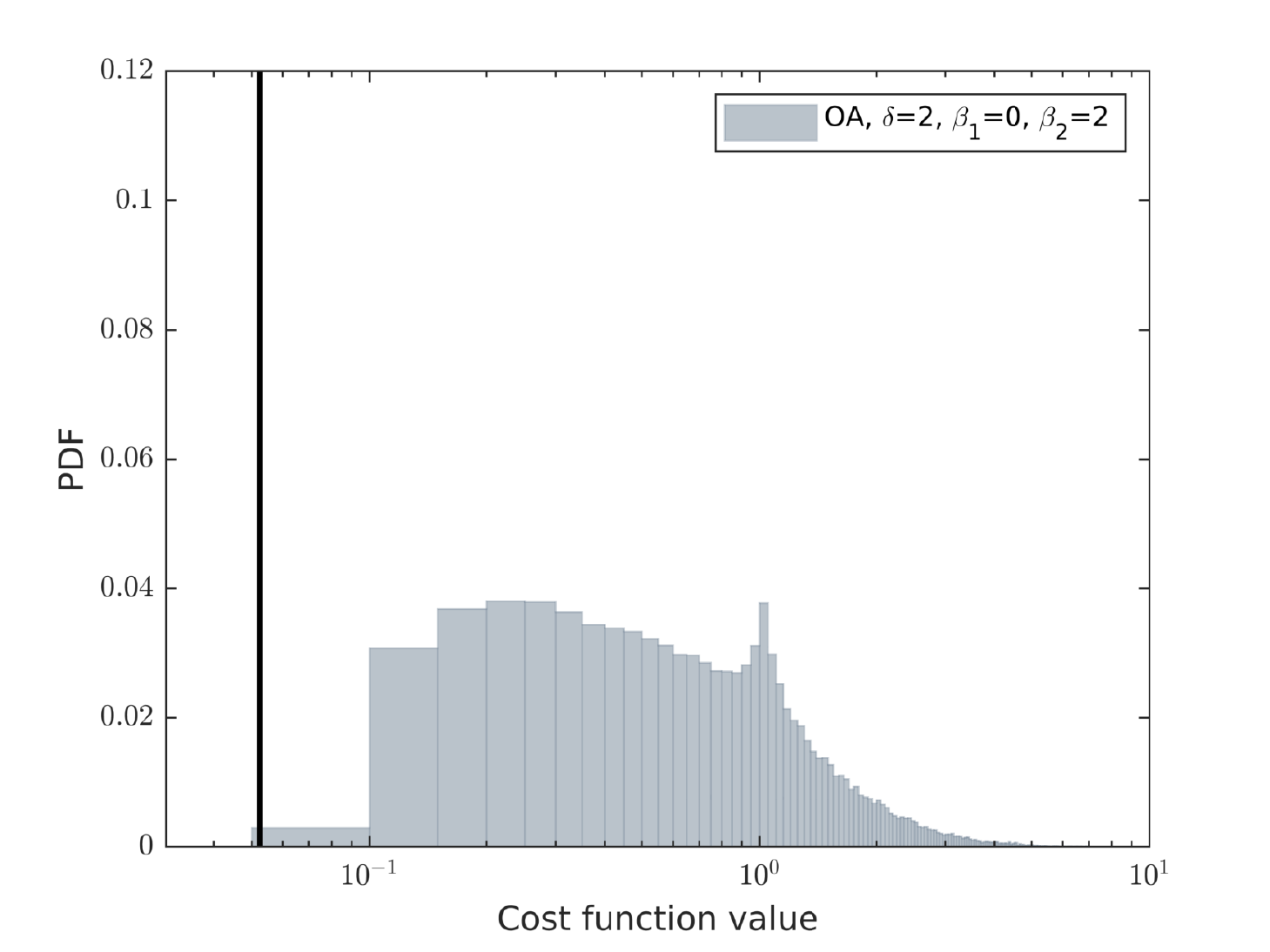} 
    \caption{}
  \end{subfigure}
}
\caption{Histograms of the error functional $\J(\bK)$ obtained along
  the Markov chains for problem \eqref{eq:minJ} with system
  \eqref{eq:dCc} closed using (a) the pair approximation and (b) the
  optimal approximation with exponents determined subject to hard
  regularization (OA-1/2).  The black vertical lines represent the values of
  the error functional $\J(\alpha_1^*,\alpha_2^*)$ obtained when the
  model based on the SA closure is used.}
  \label{fig:hist}
\end{figure}


Finally, some additional comments are in place as regards the results
shown in Figure \ref{fig:posterior}.  The parameters $k_6$ and $k_8$
are close to zero in the model with the closure based on the pair
approximation. These two parameters along with $k_5$ and $k_7$
contribute to the production and destruction of the $(--)$ cluster,
cf.~Figure \ref{fig:--}.  In equilibrium, the concentration of the
$(--)$ (or Li-Li) cluster is zero as is evident in Figure
\ref{fig:lattice}. Hence, the reaction rates have values needed to
annihilate the $(--)$ cluster. The linear triplets $(\overline{--+})$
and $(\overline{-+-})$ do not exist in the equilibrium state and
therefore both $k_6$ and $k_8$ are very close to zero, cf.~Figure
\ref{fig:--}. On the other hand, the $(\widehat{-+-})$ cluster does
exist in the equilibrium state, and hence the reaction rates $k_5$ and
$k_7$ are not zero. However, $k_7$ is bigger than $k_5$, highlighting
the fact that the $(--)$ cluster needs to be destroyed at equilibrium.

The analysis presented above is also true from the point of view of
the equilibrium constants in \eqref{eq:Q}. Indeed, the equilibrium
constants $Q_2$ and $Q_4$ reduce to unity when we use the model
closure based on the optimal approximation subject to hard
regularization. This forces the parameters $k_2$ and $k_6$ to be
highly correlated with $k_4$ and $k_8$, respectively. Additionally,
the equilibrium constants $Q_3$ and $Q_4$ control the production and
destruction of the $(--)$ cluster, cf.~\eqref{eq:Q}.  The constant
$Q_4$ reduces to unity and hence does not contribute to destruction of
the $(--)$ cluster at equilibrium, which is controlled by the
parameters $k_5$ and $k_7$. The wide range of obtained values of $k_6$
and $k_8$, cf.~Figures \ref{fig:posterior}a,c, indicates the low
sensitivity of the model to these parameters, which is in agreement
with our analysis.

\FloatBarrier

\section{Summary \& Conclusions}
\label{sec:final}
We have considered a mathematical model for the evolution of 
different cluster types in a structured lattice. We focused 
our attention on the structured lattice of a nickel-based oxide 
similar to those used in Li-ion batteries. That being said, 
the approach used here is much more broadly applicable. As 
is usual, the mean-clustering approach gives rise to an 
infinite hierarchy of ordinary differential equations, where 
concentrations of clusters of a certain size are described 
in terms of concentrations of clusters of higher order. This 
infinite hierarchy must be truncated at an arbitrary level 
and closed with a suitable closure model (or closure condition) 
in order to be solvable. This closure requires an approximation 
of the concentrations of the higher-order clusters in terms of 
the concentrations of lower-order ones. As a point of departure, 
we consider the pair approximation which is a classical closure 
model, and then introduce its generalization referred to as the 
optimal approximation which is calibrated using a novel data-driven 
approach.

The optimal approximation can be tuned for different levels of 
accuracy and robustness by adjusting the degree of regularization 
employed in the solution of the optimization problem. Our 
analysis shows that the model subject to soft regularization 
results in highly accurate approximations for the local stoichiometry 
but the accuracy deteriorates for other stoichiometries. On the 
other hand, the model subject to hard regularization has a lower 
accuracy at the local stoichiometry but is more robust with respect 
to changes of stoichiometry. The model subject to hard regularization 
produces more accurate results than the pair approximation for a 
broad range of stoichiometries. More importantly, the closure model 
found in this way turns out to have a simple structure with many 
exponents having nearly integer values. Exploiting this structure, 
we arrive at the sparse approximation model which is linear and 
therefore analytically solvable.

In addition to being simpler, the sparse approximation model is also
more accurate and robust than the pair approximation, in that it can
be applied to a wide range of stoichiometries without a significant
loss of accuracy. This model is interpretable as it makes it possible
to refine some of the simplifying assumptions at the heart of the pair
approximation. One of these assumptions states that the conditional
probability of $k$ being a nearest neighbour of $ij$ in a triplet
($ijk$) is equal to that of $k$ being a nearest neighbour of $j$. In
other words, it is assumed that every $j$ element in the lattice has a
nearest neighbour in state $i$. The sparse approximation refines this
assumption by adding a term that takes into account the conditional
probability of $j$ being a nearest neighbour of $i$. This correction
makes the model both simpler and more accurate.

The reaction rates in system (8) closed using one of the closure 
models are determined by formulating a suitable inverse problem. 
We solve these problems using a state-of-the-art Bayesian inference 
approach which also allows us to estimate the uncertainties of 
the reconstructed parameters. The results obtained show that 
the inverse problem is in fact ill-posed in the case of the closures 
based on the pair and optimal approximation, in the sense that the 
corresponding optimization problems admit multiple local minima. 
Moreover, these minima tend to be “shallow” reflecting the low 
sensitivity of the models closed with the pair and optimal 
approximations to the reaction rates. As a result, the inferred 
values of these parameters suffer from uncertainties on the order 
of 200\%. In contrast, the model closed using the sparse 
approximation is well-posed with respect to $\alpha_1$ and $\alpha_2$ which 
are linear combinations of reaction rates. This model is 
analytically solvable which completely eliminates the uncertainty 
in the reconstruction of its parameters. Based on these observations, 
we conclude that the sparse approximation is superior to the pair 
approximation.

Notably, the mean-cluster modelling approach considered in the present 
work can be used to describe the evolution of clusters of arbitrary size 
and type defined on structured lattices various types. The size and shape 
of the cluster and the structure of the lattice determine the reactions 
between elements. More complicated lattices and bigger cluster sizes 
involve more possible nearest-neighbour element swaps, resulting in a 
larger number of parameters in the model. The sparse approximation 
methodology could be utilized in a similar way to close the corresponding 
hierarchical models.

\pagebreak
\appendix
\section{Rotational Symmetry}
\label{sec:sym}

\begin{theorem}
  In a 2D triangular lattice (where each element is surrounded
  by $6$ nearest-neighbours), different spatial orientations of a
  particular $2$-cluster retain the same concentration, i.e., the
  probability of finding a particular $2$-cluster in the lattice is
  independent of its spatial orientation.
	\label{thm:1}
\end{theorem}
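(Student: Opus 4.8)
The plan is to exploit the point-group symmetry of the triangular lattice together with the spatial homogeneity of the configuration statistics already assumed in the model. First I would fix notation: for each of the six nearest-neighbour bond directions $\theta \in \{0^\circ, 60^\circ, 120^\circ, 180^\circ, 240^\circ, 300^\circ\}$, with $e_\theta$ the corresponding unit lattice vector, I would define the oriented $2$-cluster concentration $C_{ij}^{(\theta)}$ as the probability, under the configuration measure governing the lattice, that a site $x$ is in state $i$ while its neighbour $x+e_\theta$ is in state $j$. By the spatial homogeneity stated in Section~\ref{sec:problem}, this probability is independent of the reference site $x$, so the quantity is well defined, and the content of the theorem is precisely that $C_{ij}^{(\theta)}$ does not depend on $\theta$.

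The core of the argument is that the configuration measure is invariant under the rotation $R_{60^\circ}$ by $60^\circ$ about any lattice site, which generates the rotational part of the point group of the triangular lattice. I would justify this invariance from the structure of the model: the energy $E=\sum_i E_i$ is built from the charge balance over the six equivalent nearest neighbours of each oxygen site, so it is manifestly unchanged by $R_{60^\circ}$; the acceptance rule \eqref{eq:boltzmann} depends on configurations only through $\Delta E$; and the nearest-neighbour swap dynamics treats all six bond directions on an equal footing. Consequently the Boltzmann weight $e^{-E/T}$, and hence the measure it induces on configurations, is invariant under $R_{60^\circ}$.

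With invariance in hand the conclusion is immediate. The rotation $R_{60^\circ}$ maps the bond $(x,\,x+e_\theta)$ to $(x,\,x+e_{\theta+60^\circ})$ while carrying the site states along, so it furnishes a measure-preserving bijection between configurations exhibiting an $(i,j)$ cluster in orientation $\theta$ and those exhibiting an $(i,j)$ cluster in orientation $\theta+60^\circ$; invariance of the measure then yields $C_{ij}^{(\theta)}=C_{ij}^{(\theta+60^\circ)}$, and iterating through the six directions shows that $C_{ij}^{(\theta)}$ is independent of $\theta$. The same reasoning applied to $R_{180^\circ}$ (which reverses the bond) gives $C_{ij}^{(\theta)}=C_{ij}^{(\theta+180^\circ)}$, and comparing this with the trivial relabelling identity $C_{ij}^{(\theta)}=C_{ji}^{(\theta+180^\circ)}$ (obtained by reading the same bond from its other endpoint and invoking homogeneity) recovers $C_{+-}=C_{-+}$, as used in the Remark following \eqref{eq:CC}.

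The step I expect to be the main obstacle is the rigorous justification of rotational invariance of the measure, rather than the geometric bookkeeping, which is routine. Invariance of the equilibrium Boltzmann distribution follows cleanly from invariance of the energy, but to obtain the statement throughout the transient annealing process one must also argue that a rotation-invariant initial distribution is preserved under the symmetric dynamics. I would handle this by observing that the random initial state in Figure~\ref{fig:lattice}(a) is exchangeable, hence invariant in law under lattice rotations, and that a rotationally symmetric transition kernel propagates this invariance to every later time, so that $C_{ij}^{(\theta)}$ is orientation-independent at each stage of the protocol.
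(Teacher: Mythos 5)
Your proposal is correct but follows a genuinely different route from the paper. The paper's proof never invokes invariance of the configuration measure under the lattice point group: it applies the law of total probability to a marked $+$ site, summing the oriented $2$-cluster concentrations over the state of the neighbour in each of the six bond directions and reading each bond from both endpoints, so that every such sum equals the orientation-independent singlet concentration $C_+$; equating these sums yields the claimed identities (in particular $C_{+-}=C_{-+}$ as used after \eqref{eq:CC}). Your argument instead establishes that the law of the whole configuration is invariant under the rotation $R_{60^\circ}$ --- from the rotational symmetry of the energy, the equivariance of the swap/Metropolis kernel \eqref{eq:boltzmann}, and the exchangeability of the random initial state --- and then reads off orientation-independence directly. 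Each approach buys something: the paper's is elementary and requires no analysis of the dynamics, but the marginalization identities by themselves only constrain the sums $C^{(\theta)}_{++}+C^{(\theta)}_{+-}$ and give $C^{(\theta)}_{+-}=C^{(\theta)}_{-+}$ within a fixed orientation $\theta$; full orientation-independence of each individual concentration (e.g.\ of $C_{++}$) tacitly relies on exactly the kind of symmetry of the underlying measure that you make explicit. Your route is therefore stronger and more self-contained --- it delivers orientation-independence for clusters of every size and shape, not just pairs --- at the price of having to verify the rotational equivariance of the annealing dynamics and of the initial law, which you correctly identify as the only nontrivial step (and which holds here, modulo the usual caveat that the finite-lattice boundary conditions must be compatible with the six-fold symmetry).
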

\begin{proof}
  Assuming one site in a $\fplus$ state, the concentration of this
  element can be obtained by summing over concentrations of all
  $2$-clusters in which the second element iterates over the possible
  elements in the system. Using ($\bullet$) to denote an
    unspecified state in the lattice, we then have
	\begin{equation}
	\begin{aligned}
	C \left( \stackengine{\Sstackgap}{\fplus}{}{O}{c}{\quietstack}{\useanchorwidth}{S} \right) = 
	C \left( \stackengine{\Sstackgap}{\fplus\hspace{0.1em}$\bullet$}{}{O}{c}{\quietstack}{\useanchorwidth}{S} \right)   =
	C \left( \stackengine{\Sstackgap}{\fplus\hspace{0.1em}\fplus}{}{O}{c}{\quietstack}{\useanchorwidth}{S} \right) +
	C \left( \stackengine{\Sstackgap}{\fplus\hspace{0.1em}\fminus}{}{O}{c}{\quietstack}{\useanchorwidth}{S} \right) \\ 
	= 
	C \left( \stackengine{\Sstackgap}{$\bullet$\hspace{0.1em}\fplus}{}{O}{c}{\quietstack}{\useanchorwidth}{S} \right)   =
	C \left( \stackengine{\Sstackgap}{\fplus\hspace{0.1em}\fplus}{}{O}{c}{\quietstack}{\useanchorwidth}{S} \right) +
	C \left( \stackengine{\Sstackgap}{\fminus\hspace{0.1em}\fplus}{}{O}{c}{\quietstack}{\useanchorwidth}{S} \right) \\ 
	=
	C \bigg( \stackengine{\Sstackgap}{\fplus}{~~~~$\bullet$}{O}{c}{\quietstack}{\useanchorwidth}{S} \bigg)   =
	C \bigg( \stackengine{\Sstackgap}{\fplus}{~~~~\fplus}{O}{c}{\quietstack}{\useanchorwidth}{S} \bigg) +
	C \bigg( \stackengine{\Sstackgap}{\fplus}{~~~~\fminus}{O}{c}{\quietstack}{\useanchorwidth}{S} \bigg) \\
	=
	C \bigg( \stackengine{\Sstackgap}{\fplus}{$\bullet$~~~~}{O}{c}{\quietstack}{\useanchorwidth}{S} \bigg)   =
	C \bigg( \stackengine{\Sstackgap}{\fplus}{\fplus~~~~}{O}{c}{\quietstack}{\useanchorwidth}{S} \bigg) +
	C \bigg( \stackengine{\Sstackgap}{\fplus}{\fminus~~~~}{O}{c}{\quietstack}{\useanchorwidth}{S} \bigg) \\
	=
	C \bigg( \stackengine{\Sstackgap}{\fplus}{~~~~$\bullet$}{U}{c}{\quietstack}{\useanchorwidth}{S} \bigg)   =
	C \bigg( \stackengine{\Sstackgap}{\fplus}{~~~~\fplus}{U}{c}{\quietstack}{\useanchorwidth}{S} \bigg) +
	C \bigg( \stackengine{\Sstackgap}{\fplus}{~~~~\fminus}{U}{c}{\quietstack}{\useanchorwidth}{S} \bigg) \\
	=
	C \bigg( \stackengine{\Sstackgap}{\fplus}{$\bullet$~~~~}{U}{c}{\quietstack}{\useanchorwidth}{S} \bigg)   =
	C \bigg( \stackengine{\Sstackgap}{\fplus}{\fplus~~~~}{U}{c}{\quietstack}{\useanchorwidth}{S} \bigg) +
	C \bigg( \stackengine{\Sstackgap}{\fplus}{\fminus~~~~}{U}{c}{\quietstack}{\useanchorwidth}{S} \bigg) \\
	\Longrightarrow \quad
	C \bigg( \stackengine{\Sstackgap}{\fplus\hspace{0.1em}\fminus}{}{O}{c}{\quietstack}{\useanchorwidth}{S} \bigg) =
	C \bigg( \stackengine{\Sstackgap}{\fminus\hspace{0.1em}\fplus}{}{O}{c}{\quietstack}{\useanchorwidth}{S} \bigg) =
	C \bigg( \stackengine{\Sstackgap}{\fplus}{~~~~\fminus}{O}{c}{\quietstack}{\useanchorwidth}{S} \bigg) =
	C \bigg( \stackengine{\Sstackgap}{\fplus}{\fminus~~~~}{O}{c}{\quietstack}{\useanchorwidth}{S} \bigg) =
	C \bigg( \stackengine{\Sstackgap}{\fplus}{~~~~\fminus}{U}{c}{\quietstack}{\useanchorwidth}{S} \bigg) =
	C \bigg( \stackengine{\Sstackgap}{\fplus}{\fminus~~~~}{U}{c}{\quietstack}{\useanchorwidth}{S} \bigg).
	\end{aligned}
	\end{equation}
\end{proof}

\begin{acknowledgments}
  AA and BP were supported by a Collaborative Research \& Development
  grant \# CRD494074-16 from the Natural Sciences \& Engineering
  Research Council of Canada and a Research Excellence grant \#
  RE-09-051 from the The Ontario Research Fund.  JF was supported by
  the Faraday Institution MultiScale Modelling (MSM) project Grant
  number EP/S003053/1.
\end{acknowledgments}

\pagebreak

\begin{thebibliography}{45}%
\makeatletter
\providecommand \@ifxundefined [1]{%
 \@ifx{#1\undefined}
}%
\providecommand \@ifnum [1]{%
 \ifnum #1\expandafter \@firstoftwo
 \else \expandafter \@secondoftwo
 \fi
}%
\providecommand \@ifx [1]{%
 \ifx #1\expandafter \@firstoftwo
 \else \expandafter \@secondoftwo
 \fi
}%
\providecommand \natexlab [1]{#1}%
\providecommand \enquote  [1]{``#1''}%
\providecommand \bibnamefont  [1]{#1}%
\providecommand \bibfnamefont [1]{#1}%
\providecommand \citenamefont [1]{#1}%
\providecommand \href@noop [0]{\@secondoftwo}%
\providecommand \href [0]{\begingroup \@sanitize@url \@href}%
\providecommand \@href[1]{\@@startlink{#1}\@@href}%
\providecommand \@@href[1]{\endgroup#1\@@endlink}%
\providecommand \@sanitize@url [0]{\catcode `\\12\catcode `\$12\catcode
  `\&12\catcode `\#12\catcode `\^12\catcode `\_12\catcode `\%12\relax}%
\providecommand \@@startlink[1]{}%
\providecommand \@@endlink[0]{}%
\providecommand \url  [0]{\begingroup\@sanitize@url \@url }%
\providecommand \@url [1]{\endgroup\@href {#1}{\urlprefix }}%
\providecommand \urlprefix  [0]{URL }%
\providecommand \Eprint [0]{\href }%
\providecommand \doibase [0]{https://doi.org/}%
\providecommand \selectlanguage [0]{\@gobble}%
\providecommand \bibinfo  [0]{\@secondoftwo}%
\providecommand \bibfield  [0]{\@secondoftwo}%
\providecommand \translation [1]{[#1]}%
\providecommand \BibitemOpen [0]{}%
\providecommand \bibitemStop [0]{}%
\providecommand \bibitemNoStop [0]{.\EOS\space}%
\providecommand \EOS [0]{\spacefactor3000\relax}%
\providecommand \BibitemShut  [1]{\csname bibitem#1\endcsname}%
\let\auto@bib@innerbib\@empty
\bibitem [{\citenamefont {Harris}\ \emph {et~al.}(2017)\citenamefont {Harris},
  \citenamefont {Foster}, \citenamefont {Tessaro}, \citenamefont {Jiang},
  \citenamefont {Yang}, \citenamefont {Wu}, \citenamefont {Protas},\ and\
  \citenamefont {Goward}}]{Harris2017}%
  \BibitemOpen
  \bibfield  {author} {\bibinfo {author} {\bibfnamefont {K.~J.}\ \bibnamefont
  {Harris}}, \bibinfo {author} {\bibfnamefont {J.~M.}\ \bibnamefont {Foster}},
  \bibinfo {author} {\bibfnamefont {M.~Z.}\ \bibnamefont {Tessaro}}, \bibinfo
  {author} {\bibfnamefont {M.}~\bibnamefont {Jiang}}, \bibinfo {author}
  {\bibfnamefont {X.}~\bibnamefont {Yang}}, \bibinfo {author} {\bibfnamefont
  {Y.}~\bibnamefont {Wu}}, \bibinfo {author} {\bibfnamefont {B.}~\bibnamefont
  {Protas}},\ and\ \bibinfo {author} {\bibfnamefont {G.~R.}\ \bibnamefont
  {Goward}},\ }\bibfield  {title} {\bibinfo {title} {{Structure Solution of
  Metal-Oxide Li Battery Cathodes from Simulated Annealing and Lithium NMR
  Spectroscopy}},\ }\href {https://doi.org/10.1021/acs.chemmater.7b00836}
  {\bibfield  {journal} {\bibinfo  {journal} {Chemistry of Materials}\ }\textbf
  {\bibinfo {volume} {29}},\ \bibinfo {pages} {5550} (\bibinfo {year}
  {2017})}\BibitemShut {NoStop}%
\bibitem [{\citenamefont {Hasenbusch}(2001)}]{hasenbusch2001monte}%
  \BibitemOpen
  \bibfield  {author} {\bibinfo {author} {\bibfnamefont {M.}~\bibnamefont
  {Hasenbusch}},\ }\bibfield  {title} {\bibinfo {title} {Monte carlo studies of
  the three-dimensional ising model in equilibrium},\ }\href@noop {} {\bibfield
   {journal} {\bibinfo  {journal} {International Journal of Modern Physics C}\
  }\textbf {\bibinfo {volume} {12}},\ \bibinfo {pages} {911} (\bibinfo {year}
  {2001})}\BibitemShut {NoStop}%
\bibitem [{\citenamefont {Strecka}\ and\ \citenamefont
  {Jascur}(2015)}]{strecka2015brief}%
  \BibitemOpen
  \bibfield  {author} {\bibinfo {author} {\bibfnamefont {J.}~\bibnamefont
  {Strecka}}\ and\ \bibinfo {author} {\bibfnamefont {M.}~\bibnamefont
  {Jascur}},\ }\bibfield  {title} {\bibinfo {title} {A brief account of the
  ising and ising-like models: Mean-field, effective-field and exact results},\
  }\href@noop {} {\bibfield  {journal} {\bibinfo  {journal} {arXiv preprint
  arXiv:1511.03031}\ } (\bibinfo {year} {2015})}\BibitemShut {NoStop}%
\bibitem [{\citenamefont {Matsuda}\ \emph {et~al.}(1992)\citenamefont
  {Matsuda}, \citenamefont {Ogita}, \citenamefont {Sasaki},\ and\ \citenamefont
  {Sato}}]{Matsuda1992}%
  \BibitemOpen
  \bibfield  {author} {\bibinfo {author} {\bibfnamefont {H.}~\bibnamefont
  {Matsuda}}, \bibinfo {author} {\bibfnamefont {N.}~\bibnamefont {Ogita}},
  \bibinfo {author} {\bibfnamefont {A.}~\bibnamefont {Sasaki}},\ and\ \bibinfo
  {author} {\bibfnamefont {K.}~\bibnamefont {Sato}},\ }\bibfield  {title}
  {\bibinfo {title} {{Statistical Mechanics of Population: The Lattice
  Lotka-Volterra Model}},\ }\href {https://doi.org/10.1143/ptp/88.6.1035}
  {\bibfield  {journal} {\bibinfo  {journal} {Progress of Theoretical Physics}\
  }\textbf {\bibinfo {volume} {88}},\ \bibinfo {pages} {1035} (\bibinfo {year}
  {1992})}\BibitemShut {NoStop}%
\bibitem [{\citenamefont {Harada}\ and\ \citenamefont
  {Iwasa}(1994)}]{Harada1994}%
  \BibitemOpen
  \bibfield  {author} {\bibinfo {author} {\bibfnamefont {Y.}~\bibnamefont
  {Harada}}\ and\ \bibinfo {author} {\bibfnamefont {Y.}~\bibnamefont {Iwasa}},\
  }\bibfield  {title} {\bibinfo {title} {{Lattice population dynamics for
  plants with dispersing seeds and Vegetative propagation}},\ }\href
  {https://doi.org/10.1007/BF02514940} {\bibfield  {journal} {\bibinfo
  {journal} {Researches on Population Ecology}\ }\textbf {\bibinfo {volume}
  {36}},\ \bibinfo {pages} {237} (\bibinfo {year} {1994})}\BibitemShut
  {NoStop}%
\bibitem [{\citenamefont {Silva}\ \emph {et~al.}(2020)\citenamefont {Silva},
  \citenamefont {Rodrigues},\ and\ \citenamefont {Ferreira}}]{Silva2020}%
  \BibitemOpen
  \bibfield  {author} {\bibinfo {author} {\bibfnamefont {D.~H.}\ \bibnamefont
  {Silva}}, \bibinfo {author} {\bibfnamefont {F.~A.}\ \bibnamefont
  {Rodrigues}},\ and\ \bibinfo {author} {\bibfnamefont {S.~C.}\ \bibnamefont
  {Ferreira}},\ }\bibfield  {title} {\bibinfo {title} {{High prevalence regimes
  in the pair-quenched mean-field theory for the
  susceptible-infected-susceptible model on networks}},\ }\href
  {https://doi.org/10.1103/PhysRevE.102.012313} {\bibfield  {journal} {\bibinfo
   {journal} {Physical Review E}\ }\textbf {\bibinfo {volume} {102}},\ \bibinfo
  {pages} {1} (\bibinfo {year} {2020})}\BibitemShut {NoStop}%
\bibitem [{\citenamefont {Keeling}(2011)}]{Keeling2011}%
  \BibitemOpen
  \bibfield  {author} {\bibinfo {author} {\bibfnamefont {M.~J.}\ \bibnamefont
  {Keeling}},\ }\bibfield  {title} {\bibinfo {title} {{The effects of local
  spatial structure on epidemiological invasions}},\ }\href
  {https://doi.org/10.1515/9781400841356.480} {\bibfield  {journal} {\bibinfo
  {journal} {The Structure and Dynamics of Networks}\ }\textbf {\bibinfo
  {volume} {9781400841}},\ \bibinfo {pages} {480} (\bibinfo {year}
  {2011})}\BibitemShut {NoStop}%
\bibitem [{\citenamefont {Eames}\ and\ \citenamefont
  {Keeling}(2002)}]{Eames2002}%
  \BibitemOpen
  \bibfield  {author} {\bibinfo {author} {\bibfnamefont {K.~T.}\ \bibnamefont
  {Eames}}\ and\ \bibinfo {author} {\bibfnamefont {M.~J.}\ \bibnamefont
  {Keeling}},\ }\bibfield  {title} {\bibinfo {title} {{Modeling dynamic and
  network heterogeneities in the spread of sexually transmitted diseases}},\
  }\href {https://doi.org/10.1073/pnas.202244299} {\bibfield  {journal}
  {\bibinfo  {journal} {Proceedings of the National Academy of Sciences of the
  United States of America}\ }\textbf {\bibinfo {volume} {99}},\ \bibinfo
  {pages} {13330} (\bibinfo {year} {2002})}\BibitemShut {NoStop}%
\bibitem [{\citenamefont {Pedro}\ \emph {et~al.}(2015)\citenamefont {Pedro},
  \citenamefont {Figueiredo},\ and\ \citenamefont {Ferreira}}]{Pedro2015}%
  \BibitemOpen
  \bibfield  {author} {\bibinfo {author} {\bibfnamefont {T.~B.}\ \bibnamefont
  {Pedro}}, \bibinfo {author} {\bibfnamefont {W.}~\bibnamefont {Figueiredo}},\
  and\ \bibinfo {author} {\bibfnamefont {A.~L.}\ \bibnamefont {Ferreira}},\
  }\bibfield  {title} {\bibinfo {title} {{Mean-field theory for the long-range
  contact process with diffusion}},\ }\href
  {https://doi.org/10.1103/PhysRevE.92.032131} {\bibfield  {journal} {\bibinfo
  {journal} {Physical Review E - Statistical, Nonlinear, and Soft Matter
  Physics}\ }\textbf {\bibinfo {volume} {92}},\ \bibinfo {pages} {1} (\bibinfo
  {year} {2015})}\BibitemShut {NoStop}%
\bibitem [{\citenamefont {Satō}\ \emph {et~al.}(1994)\citenamefont {Satō},
  \citenamefont {Matsuda},\ and\ \citenamefont {Sasaki}}]{Sato1994}%
  \BibitemOpen
  \bibfield  {author} {\bibinfo {author} {\bibfnamefont {K.}~\bibnamefont
  {Satō}}, \bibinfo {author} {\bibfnamefont {H.}~\bibnamefont {Matsuda}},\
  and\ \bibinfo {author} {\bibfnamefont {A.}~\bibnamefont {Sasaki}},\
  }\bibfield  {title} {\bibinfo {title} {{Pathogen invasion and host extinction
  in lattice structured populations}},\ }\href
  {https://doi.org/10.1007/BF00163881} {\bibfield  {journal} {\bibinfo
  {journal} {Journal of Mathematical Biology}\ }\textbf {\bibinfo {volume}
  {32}},\ \bibinfo {pages} {251} (\bibinfo {year} {1994})}\BibitemShut
  {NoStop}%
\bibitem [{\citenamefont {Keeling}\ \emph {et~al.}(1997)\citenamefont
  {Keeling}, \citenamefont {Rand},\ and\ \citenamefont {Morris}}]{Keeling1997}%
  \BibitemOpen
  \bibfield  {author} {\bibinfo {author} {\bibfnamefont {M.~J.}\ \bibnamefont
  {Keeling}}, \bibinfo {author} {\bibfnamefont {D.~A.}\ \bibnamefont {Rand}},\
  and\ \bibinfo {author} {\bibfnamefont {A.~J.}\ \bibnamefont {Morris}},\
  }\bibfield  {title} {\bibinfo {title} {{Correlation models for childhood
  epidemics}},\ }\href {https://doi.org/10.1098/rspb.1997.0159} {\bibfield
  {journal} {\bibinfo  {journal} {Proceedings of the Royal Society B:
  Biological Sciences}\ }\textbf {\bibinfo {volume} {264}},\ \bibinfo {pages}
  {1149} (\bibinfo {year} {1997})}\BibitemShut {NoStop}%
\bibitem [{\citenamefont {Bauch}(2005)}]{Bauch2005}%
  \BibitemOpen
  \bibfield  {author} {\bibinfo {author} {\bibfnamefont {C.~T.}\ \bibnamefont
  {Bauch}},\ }\bibfield  {title} {\bibinfo {title} {{The spread of infectious
  diseases in spatially structured populations: An invasory pair
  approximation}},\ }\href {https://doi.org/10.1016/j.mbs.2005.06.005}
  {\bibfield  {journal} {\bibinfo  {journal} {Mathematical Biosciences}\
  }\textbf {\bibinfo {volume} {198}},\ \bibinfo {pages} {217} (\bibinfo {year}
  {2005})}\BibitemShut {NoStop}%
\bibitem [{\citenamefont {{De Oliveira}}\ \emph {et~al.}(2019)\citenamefont
  {{De Oliveira}}, \citenamefont {Alves},\ and\ \citenamefont
  {Ferreira}}]{DeOliveira2019}%
  \BibitemOpen
  \bibfield  {author} {\bibinfo {author} {\bibfnamefont {M.~M.}\ \bibnamefont
  {{De Oliveira}}}, \bibinfo {author} {\bibfnamefont {S.~G.}\ \bibnamefont
  {Alves}},\ and\ \bibinfo {author} {\bibfnamefont {S.~C.}\ \bibnamefont
  {Ferreira}},\ }\bibfield  {title} {\bibinfo {title} {{Dynamical correlations
  and pairwise theory for the symbiotic contact process on networks}},\ }\href
  {https://doi.org/10.1103/PhysRevE.100.052302} {\bibfield  {journal} {\bibinfo
   {journal} {Physical Review E}\ }\textbf {\bibinfo {volume} {100}},\ \bibinfo
  {pages} {52302} (\bibinfo {year} {2019})},\ \Eprint
  {https://arxiv.org/abs/1909.03981} {arXiv:1909.03981} \BibitemShut {NoStop}%
\bibitem [{\citenamefont {Mata}\ and\ \citenamefont
  {Ferreira}(2013)}]{Mata2013}%
  \BibitemOpen
  \bibfield  {author} {\bibinfo {author} {\bibfnamefont {A.~S.}\ \bibnamefont
  {Mata}}\ and\ \bibinfo {author} {\bibfnamefont {S.~C.}\ \bibnamefont
  {Ferreira}},\ }\bibfield  {title} {\bibinfo {title} {{Pair quenched
  mean-field theory for the susceptible-infected-susceptible model on complex
  networks}},\ }\bibfield  {journal} {\bibinfo  {journal} {Epl}\ }\textbf
  {\bibinfo {volume} {103}},\ \href
  {https://doi.org/10.1209/0295-5075/103/48003} {10.1209/0295-5075/103/48003}
  (\bibinfo {year} {2013}),\ \Eprint {https://arxiv.org/abs/1305.5153}
  {arXiv:1305.5153} \BibitemShut {NoStop}%
\bibitem [{\citenamefont {Lin}\ \emph {et~al.}(2020)\citenamefont {Lin},
  \citenamefont {Feng}, \citenamefont {Tang}, \citenamefont {Liu},
  \citenamefont {Xu}, \citenamefont {Hui},\ and\ \citenamefont
  {Lai}}]{Lin2020}%
  \BibitemOpen
  \bibfield  {author} {\bibinfo {author} {\bibfnamefont {Z.~H.}\ \bibnamefont
  {Lin}}, \bibinfo {author} {\bibfnamefont {M.}~\bibnamefont {Feng}}, \bibinfo
  {author} {\bibfnamefont {M.}~\bibnamefont {Tang}}, \bibinfo {author}
  {\bibfnamefont {Z.}~\bibnamefont {Liu}}, \bibinfo {author} {\bibfnamefont
  {C.}~\bibnamefont {Xu}}, \bibinfo {author} {\bibfnamefont {P.~M.}\
  \bibnamefont {Hui}},\ and\ \bibinfo {author} {\bibfnamefont {Y.~C.}\
  \bibnamefont {Lai}},\ }\bibfield  {title} {\bibinfo {title} {{Non-Markovian
  recovery makes complex networks more resilient against large-scale
  failures}},\ }\href {https://doi.org/10.1038/s41467-020-15860-2} {\bibfield
  {journal} {\bibinfo  {journal} {Nature Communications}\ }\textbf {\bibinfo
  {volume} {11}},\ \bibinfo {pages} {1} (\bibinfo {year} {2020})},\ \Eprint
  {https://arxiv.org/abs/1902.07594} {arXiv:1902.07594} \BibitemShut {NoStop}%
\bibitem [{\citenamefont {Pei}\ \emph {et~al.}(2017)\citenamefont {Pei},
  \citenamefont {Zhan},\ and\ \citenamefont {Jin}}]{Pei2017}%
  \BibitemOpen
  \bibfield  {author} {\bibinfo {author} {\bibfnamefont {X.}~\bibnamefont
  {Pei}}, \bibinfo {author} {\bibfnamefont {X.~X.}\ \bibnamefont {Zhan}},\ and\
  \bibinfo {author} {\bibfnamefont {Z.}~\bibnamefont {Jin}},\ }\bibfield
  {title} {\bibinfo {title} {{Application of pair approximation method to
  modeling and analysis of a marriage network}},\ }\href
  {https://doi.org/10.1016/j.amc.2016.09.010} {\bibfield  {journal} {\bibinfo
  {journal} {Applied Mathematics and Computation}\ }\textbf {\bibinfo {volume}
  {294}},\ \bibinfo {pages} {280} (\bibinfo {year} {2017})}\BibitemShut
  {NoStop}%
\bibitem [{\citenamefont {Ben-Avraham}\ and\ \citenamefont
  {K{\"{o}}hler}(1992)}]{Ben-Avraham1992}%
  \BibitemOpen
  \bibfield  {author} {\bibinfo {author} {\bibfnamefont {D.}~\bibnamefont
  {Ben-Avraham}}\ and\ \bibinfo {author} {\bibfnamefont {J.}~\bibnamefont
  {K{\"{o}}hler}},\ }\bibfield  {title} {\bibinfo {title} {{Mean-field
  (n,m)-cluster approximation for lattice models}},\ }\href
  {https://doi.org/10.1103/PhysRevA.45.8358} {\bibfield  {journal} {\bibinfo
  {journal} {Physical Review A}\ }\textbf {\bibinfo {volume} {45}},\ \bibinfo
  {pages} {8358} (\bibinfo {year} {1992})}\BibitemShut {NoStop}%
\bibitem [{\citenamefont {Sugden}\ and\ \citenamefont
  {Evans}(2007)}]{Sugden2007}%
  \BibitemOpen
  \bibfield  {author} {\bibinfo {author} {\bibfnamefont {K.~E.}\ \bibnamefont
  {Sugden}}\ and\ \bibinfo {author} {\bibfnamefont {M.~R.}\ \bibnamefont
  {Evans}},\ }\bibfield  {title} {\bibinfo {title} {{A dynamically extending
  exclusion process}},\ }\bibfield  {journal} {\bibinfo  {journal} {Journal of
  Statistical Mechanics: Theory and Experiment}\ }\textbf {\bibinfo {volume}
  {2007}},\ \href {https://doi.org/10.1088/1742-5468/2007/11/P11013}
  {10.1088/1742-5468/2007/11/P11013} (\bibinfo {year} {2007}),\ \Eprint
  {https://arxiv.org/abs/0707.4504} {arXiv:0707.4504} \BibitemShut {NoStop}%
\bibitem [{\citenamefont {Joo}\ and\ \citenamefont {Lebowitz}(2004)}]{Joo2004}%
  \BibitemOpen
  \bibfield  {author} {\bibinfo {author} {\bibfnamefont {J.}~\bibnamefont
  {Joo}}\ and\ \bibinfo {author} {\bibfnamefont {J.~L.}\ \bibnamefont
  {Lebowitz}},\ }\bibfield  {title} {\bibinfo {title} {{Pair approximation of
  the stochastic susceptible-infected-recovered-susceptible epidemic model on
  the hypercubic lattice}},\ }\href
  {https://doi.org/10.1103/PhysRevE.70.036114} {\bibfield  {journal} {\bibinfo
  {journal} {Physical Review E - Statistical Physics, Plasmas, Fluids, and
  Related Interdisciplinary Topics}\ }\textbf {\bibinfo {volume} {70}},\
  \bibinfo {pages} {9} (\bibinfo {year} {2004})}\BibitemShut {NoStop}%
\bibitem [{\citenamefont {Mata}\ \emph {et~al.}(2014)\citenamefont {Mata},
  \citenamefont {Ferreira},\ and\ \citenamefont {Ferreira}}]{Mata2014}%
  \BibitemOpen
  \bibfield  {author} {\bibinfo {author} {\bibfnamefont {A.~S.}\ \bibnamefont
  {Mata}}, \bibinfo {author} {\bibfnamefont {R.~S.}\ \bibnamefont {Ferreira}},\
  and\ \bibinfo {author} {\bibfnamefont {S.~C.}\ \bibnamefont {Ferreira}},\
  }\bibfield  {title} {\bibinfo {title} {{Heterogeneous pair-approximation for
  the contact process on complex networks}},\ }\bibfield  {journal} {\bibinfo
  {journal} {New Journal of Physics}\ }\textbf {\bibinfo {volume} {16}},\ \href
  {https://doi.org/10.1088/1367-2630/16/5/053006}
  {10.1088/1367-2630/16/5/053006} (\bibinfo {year} {2014})\BibitemShut
  {NoStop}%
\bibitem [{\citenamefont {Filipe}\ and\ \citenamefont
  {Maule}(2003)}]{Filipe2003}%
  \BibitemOpen
  \bibfield  {author} {\bibinfo {author} {\bibfnamefont {J.~A.}\ \bibnamefont
  {Filipe}}\ and\ \bibinfo {author} {\bibfnamefont {M.~M.}\ \bibnamefont
  {Maule}},\ }\bibfield  {title} {\bibinfo {title} {{Analytical methods for
  predicting the behaviour of population models with general spatial
  interactions}},\ }\href {https://doi.org/10.1016/S0025-5564(02)00224-9}
  {\bibfield  {journal} {\bibinfo  {journal} {Mathematical Biosciences}\
  }\textbf {\bibinfo {volume} {183}},\ \bibinfo {pages} {15} (\bibinfo {year}
  {2003})}\BibitemShut {NoStop}%
\bibitem [{\citenamefont {Seo}\ \emph {et~al.}(2016)\citenamefont {Seo},
  \citenamefont {Lee}, \citenamefont {Urban}, \citenamefont {Malik},
  \citenamefont {Kang},\ and\ \citenamefont {Ceder}}]{Seo2016}%
  \BibitemOpen
  \bibfield  {author} {\bibinfo {author} {\bibfnamefont {D.~H.}\ \bibnamefont
  {Seo}}, \bibinfo {author} {\bibfnamefont {J.}~\bibnamefont {Lee}}, \bibinfo
  {author} {\bibfnamefont {A.}~\bibnamefont {Urban}}, \bibinfo {author}
  {\bibfnamefont {R.}~\bibnamefont {Malik}}, \bibinfo {author} {\bibfnamefont
  {S.}~\bibnamefont {Kang}},\ and\ \bibinfo {author} {\bibfnamefont
  {G.}~\bibnamefont {Ceder}},\ }\bibfield  {title} {\bibinfo {title} {{The
  structural and chemical origin of the oxygen redox activity in layered and
  cation-disordered Li-excess cathode materials}},\ }\href
  {https://doi.org/10.1038/nchem.2524} {\bibfield  {journal} {\bibinfo
  {journal} {Nature Chemistry}\ }\textbf {\bibinfo {volume} {8}},\ \bibinfo
  {pages} {692} (\bibinfo {year} {2016})}\BibitemShut {NoStop}%
\bibitem [{\citenamefont {Shukla}\ \emph {et~al.}(2015)\citenamefont {Shukla},
  \citenamefont {Ramasse}, \citenamefont {Ophus}, \citenamefont {Duncan},
  \citenamefont {Hage},\ and\ \citenamefont {Chen}}]{Shukla2015}%
  \BibitemOpen
  \bibfield  {author} {\bibinfo {author} {\bibfnamefont {A.~K.}\ \bibnamefont
  {Shukla}}, \bibinfo {author} {\bibfnamefont {Q.~M.}\ \bibnamefont {Ramasse}},
  \bibinfo {author} {\bibfnamefont {C.}~\bibnamefont {Ophus}}, \bibinfo
  {author} {\bibfnamefont {H.}~\bibnamefont {Duncan}}, \bibinfo {author}
  {\bibfnamefont {F.}~\bibnamefont {Hage}},\ and\ \bibinfo {author}
  {\bibfnamefont {G.}~\bibnamefont {Chen}},\ }\bibfield  {title} {\bibinfo
  {title} {{Unravelling structural ambiguities in lithium- and manganese-rich
  transition metal oxides}},\ }\href {https://doi.org/10.1038/ncomms9711}
  {\bibfield  {journal} {\bibinfo  {journal} {Nature Communications}\ }\textbf
  {\bibinfo {volume} {6}},\ \bibinfo {pages} {1} (\bibinfo {year}
  {2015})}\BibitemShut {NoStop}%
\bibitem [{\citenamefont {Lisiecki}\ and\ \citenamefont
  {Szabelski}(2021)}]{Lisiecki2021}%
  \BibitemOpen
  \bibfield  {author} {\bibinfo {author} {\bibfnamefont {J.}~\bibnamefont
  {Lisiecki}}\ and\ \bibinfo {author} {\bibfnamefont {P.}~\bibnamefont
  {Szabelski}},\ }\bibfield  {title} {\bibinfo {title} {{Designing 2D covalent
  networks with lattice Monte Carlo simulations: precursor self-assembly}},\
  }\href {https://doi.org/10.1039/d0cp06608g} {\bibfield  {journal} {\bibinfo
  {journal} {Physical Chemistry Chemical Physics}\ }\textbf {\bibinfo {volume}
  {23}},\ \bibinfo {pages} {5780} (\bibinfo {year} {2021})}\BibitemShut
  {NoStop}%
\bibitem [{\citenamefont {Sanchez-Varretti}\ \emph {et~al.}(2021)\citenamefont
  {Sanchez-Varretti}, \citenamefont {Bulnes},\ and\ \citenamefont
  {Ramirez-Pastor}}]{Sanchez-Varretti2021}%
  \BibitemOpen
  \bibfield  {author} {\bibinfo {author} {\bibfnamefont {F.~O.}\ \bibnamefont
  {Sanchez-Varretti}}, \bibinfo {author} {\bibfnamefont {F.~M.}\ \bibnamefont
  {Bulnes}},\ and\ \bibinfo {author} {\bibfnamefont {A.~J.}\ \bibnamefont
  {Ramirez-Pastor}},\ }\bibfield  {title} {\bibinfo {title} {{Order and
  disorder in the adsorption model of repulsively interacting binary mixtures
  on triangular lattices: theory and Monte Carlo simulations}},\ }\bibfield
  {journal} {\bibinfo  {journal} {European Physical Journal E}\ }\textbf
  {\bibinfo {volume} {44}},\ \href
  {https://doi.org/10.1140/epje/s10189-021-00037-6}
  {10.1140/epje/s10189-021-00037-6} (\bibinfo {year} {2021})\BibitemShut
  {NoStop}%
\bibitem [{\citenamefont {Chen}\ \emph {et~al.}(2021)\citenamefont {Chen},
  \citenamefont {Tang}, \citenamefont {Chen}, \citenamefont {Chang},\ and\
  \citenamefont {Pao}}]{Chen2021}%
  \BibitemOpen
  \bibfield  {author} {\bibinfo {author} {\bibfnamefont {H.-A.}\ \bibnamefont
  {Chen}}, \bibinfo {author} {\bibfnamefont {P.-H.}\ \bibnamefont {Tang}},
  \bibinfo {author} {\bibfnamefont {G.-J.}\ \bibnamefont {Chen}}, \bibinfo
  {author} {\bibfnamefont {C.-C.}\ \bibnamefont {Chang}},\ and\ \bibinfo
  {author} {\bibfnamefont {C.-W.}\ \bibnamefont {Pao}},\ }\bibfield  {title}
  {\bibinfo {title} {{Microstructure Maps of Complex Perovskite Materials from
  Extensive Monte Carlo Sampling Using Machine Learning Enabled Energy
  Model}},\ }\href {https://doi.org/10.1021/acs.jpclett.1c00410} {\bibfield
  {journal} {\bibinfo  {journal} {The Journal of Physical Chemistry Letters}\
  }\textbf {\bibinfo {volume} {12}},\ \bibinfo {pages} {3591} (\bibinfo {year}
  {2021})}\BibitemShut {NoStop}%
\bibitem [{\citenamefont {Dickman}(1986)}]{Dickman1986}%
  \BibitemOpen
  \bibfield  {author} {\bibinfo {author} {\bibfnamefont {R.}~\bibnamefont
  {Dickman}},\ }\bibfield  {title} {\bibinfo {title} {{Kinetic phase
  transitions in a surface-reaction model: Mean-field theory}},\ }\bibfield
  {journal} {\bibinfo  {journal} {Physical review. A, General physics}\
  }\textbf {\bibinfo {volume} {34}},\ \href
  {https://doi.org/10.1103/physreva.34.4246} {10.1103/physreva.34.4246}
  (\bibinfo {year} {1986})\BibitemShut {NoStop}%
\bibitem [{\citenamefont {Satō}\ and\ \citenamefont {Iwasa}(2000)}]{Sato2000}%
  \BibitemOpen
  \bibfield  {author} {\bibinfo {author} {\bibfnamefont {K.}~\bibnamefont
  {Satō}}\ and\ \bibinfo {author} {\bibfnamefont {Y.}~\bibnamefont {Iwasa}},\
  }\bibfield  {title} {\bibinfo {title} {{Pair Approximations for Lattice-based
  Ecological Models}},\ }in\ \href
  {https://doi.org/10.1017/CBO9780511525537.022} {\emph {\bibinfo {booktitle}
  {The Geometry of Ecological Interactions}}}\ (\bibinfo  {publisher}
  {Cambridge University Press},\ \bibinfo {year} {2000})\ pp.\ \bibinfo {pages}
  {341--358}\BibitemShut {NoStop}%
\bibitem [{\citenamefont {van Baalen}(2000)}]{Baalen2000}%
  \BibitemOpen
  \bibfield  {author} {\bibinfo {author} {\bibfnamefont {M.}~\bibnamefont {van
  Baalen}},\ }\bibfield  {title} {\bibinfo {title} {{Pair Approximations for
  Different Spatial Geometries}},\ }in\ \href
  {https://doi.org/10.1017/CBO9780511525537.023} {\emph {\bibinfo {booktitle}
  {The Geometry of Ecological Interactions}}}\ (\bibinfo  {publisher}
  {Cambridge University Press},\ \bibinfo {year} {2000})\ pp.\ \bibinfo {pages}
  {359--387}\BibitemShut {NoStop}%
\bibitem [{\citenamefont {{Morris, Andrew}}(1997)}]{MorrisAndrew1997}%
  \BibitemOpen
  \bibfield  {author} {\bibinfo {author} {\bibfnamefont {J.}~\bibnamefont
  {{Morris, Andrew}}},\ }\emph {\bibinfo {title} {{Representing spatial
  interactions in simple ecological models}}},\ \href
  {http://webcat.warwick.ac.uk/record=b1361583{~}S15} {\bibinfo {type} {Phd}},\
  \bibinfo  {school} {University of Warwick} (\bibinfo {year}
  {1997})\BibitemShut {NoStop}%
\bibitem [{\citenamefont {Nocedal}\ and\ \citenamefont {Wright}(2002)}]{nw00}%
  \BibitemOpen
  \bibfield  {author} {\bibinfo {author} {\bibfnamefont {J.}~\bibnamefont
  {Nocedal}}\ and\ \bibinfo {author} {\bibfnamefont {S.}~\bibnamefont
  {Wright}},\ }\href@noop {} {\emph {\bibinfo {title} {Numerical
  Optimization}}}\ (\bibinfo  {publisher} {Springer},\ \bibinfo {year}
  {2002})\BibitemShut {NoStop}%
\bibitem [{\citenamefont {Tarantola}(2005)}]{t05}%
  \BibitemOpen
  \bibfield  {author} {\bibinfo {author} {\bibfnamefont {A.}~\bibnamefont
  {Tarantola}},\ }\href@noop {} {\emph {\bibinfo {title} {Inverse Problem
  Theory and Methods for Model Parameter Estimation}}}\ (\bibinfo  {publisher}
  {SIAM},\ \bibinfo {year} {2005})\BibitemShut {NoStop}%
\bibitem [{\citenamefont {Kaipio}\ and\ \citenamefont
  {Somersalo}(2005)}]{KaipioSomersalo2005}%
  \BibitemOpen
  \bibfield  {author} {\bibinfo {author} {\bibfnamefont {J.}~\bibnamefont
  {Kaipio}}\ and\ \bibinfo {author} {\bibfnamefont {E.}~\bibnamefont
  {Somersalo}},\ }\href@noop {} {\emph {\bibinfo {title} {Statistical and
  Computational Inverse Problems}}}\ (\bibinfo  {publisher} {Springer},\
  \bibinfo {year} {2005})\BibitemShut {NoStop}%
\bibitem [{\citenamefont {Smith}(2013)}]{Smith2013}%
  \BibitemOpen
  \bibfield  {author} {\bibinfo {author} {\bibfnamefont {R.}~\bibnamefont
  {Smith}},\ }\href@noop {} {\emph {\bibinfo {title} {Uncertainty
  Quantification: Theory, Implementation, and Applications}}}\ (\bibinfo
  {publisher} {SIAM},\ \bibinfo {year} {2013})\BibitemShut {NoStop}%
\bibitem [{\citenamefont {Sethurajan}\ \emph {et~al.}(2019)\citenamefont
  {Sethurajan}, \citenamefont {Krachkovskiy}, \citenamefont {Goward},\ and\
  \citenamefont {Protas}}]{Sethurajan2019}%
  \BibitemOpen
  \bibfield  {author} {\bibinfo {author} {\bibfnamefont {A.}~\bibnamefont
  {Sethurajan}}, \bibinfo {author} {\bibfnamefont {S.}~\bibnamefont
  {Krachkovskiy}}, \bibinfo {author} {\bibfnamefont {G.}~\bibnamefont
  {Goward}},\ and\ \bibinfo {author} {\bibfnamefont {B.}~\bibnamefont
  {Protas}},\ }\bibfield  {title} {\bibinfo {title} {{Bayesian uncertainty
  quantification in inverse modeling of electrochemical systems}},\ }\href
  {https://doi.org/10.1002/jcc.25759} {\bibfield  {journal} {\bibinfo
  {journal} {Journal of Computational Chemistry}\ }\textbf {\bibinfo {volume}
  {40}},\ \bibinfo {pages} {740} (\bibinfo {year} {2019})},\ \Eprint
  {https://arxiv.org/abs/1806.00036} {arXiv:1806.00036} \BibitemShut {NoStop}%
\bibitem [{\citenamefont {Zhou}\ \emph {et~al.}(2019)\citenamefont {Zhou},
  \citenamefont {Yu}, \citenamefont {Zhang},\ and\ \citenamefont
  {Li}}]{Zhou2019}%
  \BibitemOpen
  \bibfield  {author} {\bibinfo {author} {\bibfnamefont {Q.}~\bibnamefont
  {Zhou}}, \bibinfo {author} {\bibfnamefont {T.}~\bibnamefont {Yu}}, \bibinfo
  {author} {\bibfnamefont {X.}~\bibnamefont {Zhang}},\ and\ \bibinfo {author}
  {\bibfnamefont {J.}~\bibnamefont {Li}},\ }\bibfield  {title} {\bibinfo
  {title} {{Bayesian inference and uncertainty quantification for medical image
  reconstruction with poisson data}},\ }\href
  {https://doi.org/10.1137/19M1248352} {\bibfield  {journal} {\bibinfo
  {journal} {SIAM Journal on Imaging Sciences}\ }\textbf {\bibinfo {volume}
  {13}},\ \bibinfo {pages} {29} (\bibinfo {year} {2019})},\ \Eprint
  {https://arxiv.org/abs/1903.02075} {arXiv:1903.02075} \BibitemShut {NoStop}%
\bibitem [{\citenamefont {Huo}(2020)}]{Huo2020}%
  \BibitemOpen
  \bibfield  {author} {\bibinfo {author} {\bibfnamefont {S.}~\bibnamefont
  {Huo}},\ }\emph {\bibinfo {title} {{Bayesian Modeling of Complex
  High-Dimensional Data Bayesian Modeling of Complex High-Dimensional Data}}},\
  \href@noop {} {Ph.D. thesis},\ \bibinfo  {school} {Virginia Polytechnic
  Institute and State University} (\bibinfo {year} {2020})\BibitemShut
  {NoStop}%
\bibitem [{\citenamefont {Laine}(2008)}]{Laine2008}%
  \BibitemOpen
  \bibfield  {author} {\bibinfo {author} {\bibfnamefont {M.}~\bibnamefont
  {Laine}},\ }\emph {\bibinfo {title} {{Adaptive MCMC methods with applications
  in environmental and geophysical models}}},\ \href@noop {} {Ph.D. thesis},\
  \bibinfo  {school} {Lappeenranta University of technology} (\bibinfo {year}
  {2008})\BibitemShut {NoStop}%
\bibitem [{\citenamefont {Lucas}\ \emph {et~al.}(2017)\citenamefont {Lucas},
  \citenamefont {Simpson}, \citenamefont {Cameron-Smith},\ and\ \citenamefont
  {Baskett}}]{Lucas2017}%
  \BibitemOpen
  \bibfield  {author} {\bibinfo {author} {\bibfnamefont {D.~D.}\ \bibnamefont
  {Lucas}}, \bibinfo {author} {\bibfnamefont {M.}~\bibnamefont {Simpson}},
  \bibinfo {author} {\bibfnamefont {P.}~\bibnamefont {Cameron-Smith}},\ and\
  \bibinfo {author} {\bibfnamefont {R.~L.}\ \bibnamefont {Baskett}},\
  }\bibfield  {title} {\bibinfo {title} {{Bayesian inverse modeling of the
  atmospheric transport and emissions of a controlled tracer release from a
  nuclear power plant}},\ }\href {https://doi.org/10.5194/acp-17-13521-2017}
  {\bibfield  {journal} {\bibinfo  {journal} {Atmospheric Chemistry and
  Physics}\ }\textbf {\bibinfo {volume} {17}},\ \bibinfo {pages} {13521}
  (\bibinfo {year} {2017})}\BibitemShut {NoStop}%
\bibitem [{\citenamefont {Camli}\ and\ \citenamefont
  {Kalaylioglu}(2020)}]{Camli2020}%
  \BibitemOpen
  \bibfield  {author} {\bibinfo {author} {\bibfnamefont {O.}~\bibnamefont
  {Camli}}\ and\ \bibinfo {author} {\bibfnamefont {Z.}~\bibnamefont
  {Kalaylioglu}},\ }\bibfield  {title} {\bibinfo {title} {{Bayesian predictive
  model selection in circular random effects models with applications in
  ecological and environmental studies}},\ }\bibfield  {journal} {\bibinfo
  {journal} {Environmental and Ecological Statistics}\ }\href
  {https://doi.org/10.1007/s10651-020-00471-3} {10.1007/s10651-020-00471-3}
  (\bibinfo {year} {2020})\BibitemShut {NoStop}%
\bibitem [{\citenamefont {Thawornwattana}\ \emph {et~al.}(2018)\citenamefont
  {Thawornwattana}, \citenamefont {Dalquen},\ and\ \citenamefont
  {Yang}}]{Thawornwattana2018}%
  \BibitemOpen
  \bibfield  {author} {\bibinfo {author} {\bibfnamefont {Y.}~\bibnamefont
  {Thawornwattana}}, \bibinfo {author} {\bibfnamefont {D.}~\bibnamefont
  {Dalquen}},\ and\ \bibinfo {author} {\bibfnamefont {Z.}~\bibnamefont
  {Yang}},\ }\bibfield  {title} {\bibinfo {title} {{Designing simple and
  efficient Markov chain Monte Carlo proposal kernels}},\ }\href
  {https://doi.org/10.1214/17-BA1084} {\bibfield  {journal} {\bibinfo
  {journal} {Bayesian Analysis}\ }\textbf {\bibinfo {volume} {13}},\ \bibinfo
  {pages} {1033} (\bibinfo {year} {2018})}\BibitemShut {NoStop}%
\bibitem [{\citenamefont {B{\'{e}}dard}\ \emph {et~al.}(2014)\citenamefont
  {B{\'{e}}dard}, \citenamefont {Douc},\ and\ \citenamefont
  {Moulines}}]{Bedard2014}%
  \BibitemOpen
  \bibfield  {author} {\bibinfo {author} {\bibfnamefont {M.}~\bibnamefont
  {B{\'{e}}dard}}, \bibinfo {author} {\bibfnamefont {R.}~\bibnamefont {Douc}},\
  and\ \bibinfo {author} {\bibfnamefont {E.}~\bibnamefont {Moulines}},\
  }\bibfield  {title} {\bibinfo {title} {{Scaling Analysis of Delayed Rejection
  MCMC Methods}},\ }\href {https://doi.org/10.1007/s11009-013-9326-y}
  {\bibfield  {journal} {\bibinfo  {journal} {Methodology and Computing in
  Applied Probability}\ }\textbf {\bibinfo {volume} {16}},\ \bibinfo {pages}
  {811} (\bibinfo {year} {2014})}\BibitemShut {NoStop}%
\bibitem [{\citenamefont {Green}\ and\ \citenamefont {Mira}(2001)}]{Trust2016}%
  \BibitemOpen
  \bibfield  {author} {\bibinfo {author} {\bibfnamefont {P.~J.}\ \bibnamefont
  {Green}}\ and\ \bibinfo {author} {\bibfnamefont {A.}~\bibnamefont {Mira}},\
  }\bibfield  {title} {\bibinfo {title} {{Delayed rejection in reversible jump
  Metropolis--Hastings}},\ }\href@noop {} {\bibfield  {journal} {\bibinfo
  {journal} {Biometrika}\ }\textbf {\bibinfo {volume} {88}},\ \bibinfo {pages}
  {1035} (\bibinfo {year} {2001})}\BibitemShut {NoStop}%
\bibitem [{\citenamefont {Kaipo}\ and\ \citenamefont
  {Somersalo}(2006)}]{Kaipo2006}%
  \BibitemOpen
  \bibfield  {author} {\bibinfo {author} {\bibfnamefont {J.}~\bibnamefont
  {Kaipo}}\ and\ \bibinfo {author} {\bibfnamefont {E.}~\bibnamefont
  {Somersalo}},\ }\href@noop {} {\emph {\bibinfo {title} {Applied Mathematical
  Sciences}}},\ Vol.\ \bibinfo {volume} {160}\ (\bibinfo  {publisher} {Springer
  Science $\backslash${\&} Business Media},\ \bibinfo {year}
  {2006})\BibitemShut {NoStop}%
\bibitem [{\citenamefont {Robert}\ and\ \citenamefont
  {George}(2004)}]{Robert2004}%
  \BibitemOpen
  \bibfield  {author} {\bibinfo {author} {\bibfnamefont {P.~C.}\ \bibnamefont
  {Robert}}\ and\ \bibinfo {author} {\bibfnamefont {C.}~\bibnamefont
  {George}},\ }\href@noop {} {\emph {\bibinfo {title} {{Monte Carlo Statistical
  Methods}}}},\ \bibinfo {edition} {2nd}\ ed.\ (\bibinfo  {publisher}
  {Springer},\ \bibinfo {address} {New York},\ \bibinfo {year}
  {2004})\BibitemShut {NoStop}%
\end{thebibliography}

%

\end{document}